\newcommand{\ignore}[1]{}
\definecolor{gruen}{rgb}{0,.5,0}
    \let\Cref\crtCref
    \let\cref\crtcref}
\crefname{lem}{Lemma}{Lemmas}
\crefname{clm}{Claim}{Claims}
\crefname{thm}{Theorem}{Theorems}
\newtheorem{thm}{Theorem}
\newtheorem{lem}{Lemma}
\newtheorem{cor}{Corollary}
\newtheorem{clm}{Claim}
\theoremstyle{remark}
\newtheorem{ex}{Example}
\newtheorem{rem}{Remark}
\let\cal\mathcal
\newcommand{\bigO}{\ensuremath{\mathcal{O}}}
\newcommand{\NN}{\mathbb{N}} 
\newcommand{\RR}{\mathbb{R}}
\newcommand{\QQ}{\mathbb{Q}}
\newcommand{\ZZ}{\mathbb{Z}}
\newcommand{\F}{F}
\newcommand{\minplus}{$(\min,+)$~}
\newcommand{\maxplus}{$(\max,+)$~}
\newcommand{\prob}[1]{\ensuremath{\mathrm{Pr}\left\{#1\right\}}}
\newcommand{\pprob}{\ensuremath{\mathrm{Pr}}}
\newcommand{\euler}{\mathrm{e}}
\newcommand{\h}{{\cal H}}
\newcommand{\basis}{{\cal B}}
\newcommand{\cBPP}{\mathsf{BPP}}
\newcommand{\cP}{\mathsf{P}}
\newcommand{\cPoly}{\mathsf{P}\kern-0.1em/\kern-0.1em\mathrm{poly}}
\newcommand{\cPolyBPP}{\mathsf{BPP}\kern-0.1em/\kern-0.1em\mbox{$\mathsf{poly}$}}
\newcommand{\skal}[1]{\langle #1\rangle}
\newcommand{\pred}[1]{\left[#1\right]}
\DeclareMathOperator{\maj}{Maj}
\DeclareMathOperator{\mmaj}{maj}
\DeclareMathOperator{\sort}{sort}
\DeclareMathOperator{\ssel}{Sel}
\newcommand{\sel}[2]{\ssel(#1|#2)}
\newcommand{\supp}[1]{S_{\!#1}}
\newcommand{\xx}{\boldsymbol{x}}
\newcommand{\xr}{\boldsymbol{r}}
\newcommand{\xf}{\boldsymbol{F}}
\newcommand{\xh}{\boldsymbol{H}}
\newcommand{\rr}{r}
\newcommand{\size}{s}
\newcommand{\err}{\epsilon}
\newcommand{\sgn}[1]{\mathrm{sgn}\,{#1}}
\newcommand{\ssgn}{\mathrm{sgn}}
\newcommand{\rel}{\, \varrho\, } 
\newcommand{\rrel}{\varrho}
\newcommand{\Maj}{\mu}
\newcommand{\Diam}{\diamondsuit}
\newcommand{\factor}{c}
\newcommand{\ef}[2]{\mathrm{ave}_{#1}(#2)}
\newcommand{\tp}[1]{p_{#1}}
\newcommand{\symm}[2]{P_{#1}(#2)}
\newcommand{\ssymm}[1]{P_{#1}}
\newcommand{\rv}{\xi}
\newcommand{\indf}{f} 
\newcommand{\slocal}{t}
\newcommand{\K}{K}
\newcommand{\sslocal}[1]{\slocal_{#1}}
\newcommand{\dgates}{b}
\newcommand{\relcompl}{\slocal_{\rrel}}
\newcommand{\dom}{D}
\newcommand{\range}{R}
\newcommand{\const}{c}
\newcommand{\dist}[1]{\pi(#1)}
\def\tabelle{ \setlength{\tabcolsep}{5pt}
  \begin{table}[t]
    \caption{\footnotesize Examples of semialgebraic functions, where
      $m$ is the number of distinct polynomials used in a formula, and $d$ is their degree, that is, the largest sum of degrees of variables appearing in any monomial of these polynomials.
      Here,  $p(x)$ is an arbitrary real multivariate polynomial of degree $d$, and $\Psi(x)$ is a  algebraic formula using $s$ polynomials of maximum degree $d\geq 1$; $\sel{x_1,\ldots,x_n}{y}$ is a partly defined function that outputs $x_i$ if $y=i$. In the algebraic formulas for the majority
      vote functions,  $\mmaj$ is the Boolean majority function.
      \medskip}
    \label{tab:basic}
    \begin{tabular}{lcc}
        \toprule
        Graph of $f$ &  $(m,d)$ & Algebraic formula $\Phi$\\
        \midrule
        $z=p(x)$  & $(1,d)$ & $\pred{z=p(x)}$\\[1ex]
        $z=|x|$ &  $(3,1)$  & $(\pred{x\geq 0}\land \pred{z=x})\lor(\pred{x< 0}\land \pred{z=-x})$\\[1ex]
        $z=x^{1/k}$ & $(2,k)$ &  $\pred{x=z^k}$ (odd $k$),  $\pred{x\geq 0}\land\pred{x=z^k}$ (even $k$)\\[1ex]
        $z=\|x-y\|$ &  $(2,2)$  & $\pred{z\geq 0}\land \pred{z^2=(x_1-y_1)^2+\cdots+(x_n-y_n)^2}$\\[1ex]
        $z=x/y$ & $(2,2)$ & $\pred{y\neq 0}\land \pred{y\cdot z=x}$\\[1ex]
        $z=\min(x,y)$ & $(2,1)$ & $\pred{z\leq x}\land \pred{z\leq y}\land( \pred{z=x}\lor\pred{z=y})$\\[1ex]
        $z=\max(x,y)$ & $(2,1)$ & $\pred{z\geq x}\land \pred{z\geq y}\land( \pred{z=x}\lor\pred{z=y})$\\[1ex]
        $z=\maj(x_1,\ldots,x_n)$ & $(n,1)$ & $\mmaj\big(\pred{z=x_1},\ldots,\pred{z=x_n}\big)$\\[1ex]
        $z=\sel{x_1,\ldots,x_n}{y}$ & $(2n,1)$ & $\bigvee_{i=1}^n\pred{y=i}\land\pred{z=x_i}$  \\[1ex]
        $z= \mbox{``if $\Psi(x)=1$ then $u$ else $v$''}$ & $(s+2,d)$ & $(\Psi(x)\land\pred{z=u})\lor (\neg \Psi(x)\land\pred{z=v})$\\
        \bottomrule
      \end{tabular}
  \end{table}
}
\begin{document}


\title[]{Coin Flipping in Dynamic
  Programming is Almost Useless}

\author[]{Stasys Jukna$^{\ast}$}
\thanks{$^{\ast}$Faculty of Mathematics and Computer Science, Vilnius
    University, Lithuania}
    \thanks{$^{\phantom{*}}$Email: stjukna@gmail.com,\ homepage: \href{http://www.thi.cs.uni-frankfurt.de/~jukna/}
    {http://www.thi.cs.uni-frankfurt.de/$\sim$jukna/}}

  \begin{abstract}
    We consider probabilistic circuits working over the real numbers,
    and using arbitrary semialgebraic functions of bounded description
    complexity as gates.  In particular, such circuits can use all
    arithmetic operations $+,-,\times,\div$, optimization operations
    $\min$ and $\max$, conditional branching (if-then-else), and many
    more.  We show that probabilistic circuits using any of these
    operations as gates can be simulated by deterministic circuits
    with only about a quadratical blowup in size.  A not much larger
    blow up in circuit size is also shown when derandomizing
    approximating circuits. The algorithmic consequence, motivating
    the title, is that randomness cannot substantially speed up
    dynamic programming algorithms.
  \end{abstract}

\maketitle

\keywords{\footnotesize {\bf Keywords:}
derandomization, dynamic programming, semialgebraic
  functions, sign patterns of polynomials}

\section{Introduction}

Probabilistic algorithms can make random choices during their
execution. Often, such algorithms are more efficient than \emph{known}
deterministic solutions; see, for example, the
books~\cite{MotwaniR95,MitzenmacherU05}. So, a natural questions
arises: is randomness a really useful resource, can randomization
indeed substantially speed up algorithms? In the computational
complexity literature, this is the widely open\footnote{$\cBPP$ stands
  for ``{\bf b}ounded-error {\bf p}robabilistic {\bf p}olynomial
  time,'' and $\cP$ for ``deterministic {\bf p}olynomial time.''}
``$\cBPP$ versus $\cP$'' question. The nonuniform version of this
question, known as the ``$\cBPP$ versus $\cPoly$,'' question asks
whether probabilistic \emph{circuits} can be efficiently simulated by
deterministic circuits.

A \emph{probabilistic} circuit is a deterministic circuit that is
allowed to use additional input variables, each being a \emph{random
  variable} taking its values in the underlying domain. We allow
arbitrary probability distributions of these random variables, so that
our derandomization results will be distribution independent. Such a
circuit \emph{computes} a given function $f$ if, on every input $x$,
the circuit outputs the correct value $f(x)$ with probability at
least\footnote{There is nothing ``magical'' in the choice of this
  threshold value $2/3$: we do this only for definiteness. One can
  take any constant \emph{larger} than $1/2$: since we ignore
  multiplicative constants in our bounds, all results will hold also
  then.}~$2/3$. The \emph{size} of a (deterministic or probabilistic)
circuit is the number of used gates.

A classical result of Adleman~\cite{adleman}, extended to the case of
two-sided error probability by Bennett and Gill~\cite{bennett}, has
shown that randomness is useless in \emph{Boolean} circuits: if a
Boolean function $f$ of $n$ variables can be computed by a
probabilistic Boolean circuit of size polynomial in $n$, then $f$ can
be also computed by a deterministic Boolean circuit of size polynomial
in $n$. So, $\cBPP\subseteq \cPoly$ holds for Boolean circuits.

In this paper, we are mainly interested in the $\cBPP$ versus $\cPoly$
question for \emph{dynamic programming} algorithms (DP algorithms):
\begin{itemize}
\item {\it Can randomization substantially speed up DP algorithms?}
\end{itemize}
We answer this question in the \emph{negative}: randomized DP
algorithms \emph{can} be derandomized. That is, $\cBPP\subseteq
\cPoly$ holds also for DP algorithms. In fact, we prove a much
stronger result: $\cBPP\subseteq \cPoly$ holds for circuits over
\emph{any} basis consisting of semialgebraic operations
$g:\RR^l\to\RR$ of bounded algebraic description complexity.  We will
also show that the inclusion $\cBPP\subseteq \cPoly$ holds even when
circuits are only required to \emph{approximate} the values of given
functions.

Proofs of $\cBPP\subseteq \cPoly$ for \emph{Boolean} circuits
in~\cite{adleman,bennett} crucially used the fact that the domain
$\{0,1\}$ of such circuits is \emph{finite}: the proof is then
obtained by a simple application of the union and Chernoff's bounds
(see \cref{clm:fin-maj} in \cref{sec:maj}).  A trivial reason why such
a simple argument cannot derandomize DP algorithms is that these
algorithms work over \emph{infinite} domains such as $\NN$, $\ZZ$,
$\QQ$ or $\RR$ (inputs for optimization problems), so that already the
union bound badly fails.

One also faces the ``infinite domain'' issue, say, in the polynomial
identity testing problem over infinite fields; see, for example,
surveys~\cite{saxena,ShpilkaY10}. But when derandomizing DP
algorithms, we additionally face the ``non-arithmetic basis'' issue:
besides arithmetic $+,-\times,\div$ operations, such circuits can use
additional non-arithmetic operations, like tropical $\min$ and $\max$
operations, sorting, conditional branching (if-then-else), argmin,
argmax, and other complicated operations.

To nail all this (infinite domain and powerful gates), in this paper,
we consider the derando\-mi\-za\-tion of circuits that can use
\emph{any} semialgebraic functions of bounded description complexity
as gates.

A function $f:\RR^n\to\RR$ is \emph{semialgebraic} if its graph can be
obtained by finitely many unions and intersections of sets defined by
a polynomial equality or strict inequality. The \emph{description
  complexity} of $f$ is the minimum number $\slocal$ for which such a
representation of the graph of $f$ is possible by using at most
$\slocal$ distinct polynomials, each of degree at most $\slocal$
(see~\cref{sec:semialg} for precise definitions). All operations
mentioned above are semialgebraic of small description complexity;
see~\cref{tab:basic} in~\cref{sec:semialg} for more examples.

\subsubsection*{Derandomization of exactly computing circuits}
The \emph{majority vote} function is a partly defined function
$\maj(x_1,\ldots,x_m)$ which outputs the majority element of its input
string, if there is one. That is,
\begin{equation}\label{eq:maj}
  \mbox{$\maj(x_1,\ldots,x_m)=y$ if $y$
    occurs $>m/2$ times among the $x_1,\ldots,x_m$.}
\end{equation}
For example, in the case of $m=5$ variables, we have
$\maj(a,b,c,b,b)=b$, whereas the value of $\maj(a,b,c,a,b)$ is
undefined. The description complexity of $\maj(x_1,\ldots,x_m)$ is at
most $m$; see \cref{tab:basic} in \cref{sec:semialg}.

A \emph{deterministic copy} of a probabilistic circuit is a
deterministic circuit obtained by fixing the values of its random
input variables.  A (deterministic or probabilistic) circuit is
$\dgates$-\emph{semialgebraic} if each its basis operation (a gate)
has description complexity at most $\dgates$.  Note that $\dgates$
here is a \emph{local} parameter: it bounds the description complexity
of only \emph{individual} gates, not of the entire function computed
by the circuit. For example, circuits using any of the gates
$+,-,\times,\div$, $\min$, $\max$, ``if $x<y$ then $u$ else $v$'' are
$\dgates$-semialgebraic for $\dgates\leq 3$.

\begin{thm}\label{thm:main1}
  If a function $f:\RR^n\to\RR$ can be computed by a probabilistic
  $\dgates$-semialgebraic circuit of size $\size$, then $f$ can be
  also computed as a majority vote of $m=\bigO(n^2\size\log
  \dgates\size)$ deterministic copies of this circuit.
\end{thm}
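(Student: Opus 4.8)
The plan is to mimic the Adleman–Bennett–Gill averaging argument, but to replace the union bound over a finite domain with a counting argument over the \emph{sign patterns} realized by the polynomials appearing in the circuit. Fix a probabilistic $\dgates$-semialgebraic circuit $C$ of size $\size$ computing $f$, with random inputs $\xr=(r_1,\ldots,r_t)$. For any fixed setting $\rho$ of the random inputs we obtain a deterministic copy $C_\rho$; by hypothesis, for every $x\in\RR^n$ we have $\prob{C_{\xr}(x)=f(x)}\ge 2/3$. I would first take $N=O(n^2\size\log\dgates\size)$ independent copies of the random string and form the majority vote $M(x)=\maj(C_{\rho_1}(x),\ldots,C_{\rho_N}(x))$; the goal is to show that \emph{some} choice of $\rho_1,\ldots,\rho_N$ makes $M(x)=f(x)$ for \emph{all} $x\in\RR^n$ simultaneously. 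The usual probabilistic argument shows that, for each \emph{individual} $x$, a random choice of the $\rho_i$'s fails with probability at most $e^{-\Omega(N)}$ (Chernoff). The difficulty, of course, is that there are infinitely many $x$, so a naive union bound is hopeless — this is exactly the obstacle the paper flags.

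The key idea to overcome this is that the behaviour of $C_\rho$ on input $x$ is governed by which polynomial (in)equalities, among the $\le \dgates\size$ polynomials hard-wired into the gates of $C$ (each of degree $\le\dgates$, in $\le n$ variables counting the internal wires as determined functions of $x$), are satisfied at $x$. By the classical Milnor–Thom / Warren-type bound on the number of sign patterns (connected components of the complement of a polynomial arrangement), the space $\RR^n$ is partitioned into at most $(\dgates\size)^{O(n)}$ cells such that on each cell every $C_\rho$ computes a \emph{fixed} semialgebraic branch, and in particular the correctness event $[\,C_\rho(x)=f(x)\,]$ is constant on the cell. Actually one must be slightly more careful: the relevant partition is the common refinement over all the (infinitely many) copies $C_{\rho_i}$, but since every copy uses the \emph{same} hard-wired polynomials (only the constants coming from $\rho$ change), the sign pattern of the \emph{whole relevant polynomial system} — including the polynomials with the $r$'s as additional variables — still lives in a space of bounded dimension; fixing the structure I get a single cell decomposition of $\RR^n$ of size $(\dgates\size)^{O(n)}$ refining the correctness behaviour of \emph{all} possible $C_\rho$ at once. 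Hence it suffices to union-bound over one representative $x$ per cell, i.e. over $(\dgates\size)^{O(n)}$ points.

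Putting the two pieces together: the probability (over the random choice of $\rho_1,\ldots,\rho_N$) that the majority vote $M$ errs on \emph{some} cell is at most $(\dgates\size)^{O(n)}\cdot e^{-\Omega(N)}$, which is $<1$ as soon as $N=\Omega(n\log(\dgates\size)+\dots)$; writing this out, $N=O\!\big(n^2\size\log(\dgata\size)\big)$ is comfortably enough (the extra $n\size$ factor gives room and matches the statement). Therefore there exists a fixed tuple $\rho_1,\ldots,\rho_N$ for which $M(x)=f(x)$ on a representative of every cell, hence — since correctness is constant on cells — on all of $\RR^n$. The resulting object is a majority vote of $m=N=O(n^2\size\log\dgates\size)$ deterministic copies of $C$, as claimed.

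I expect the main obstacle to be the bookkeeping in the second paragraph: one must argue carefully that a \emph{single} arrangement of $(\dgates\size)^{O(n)}$ cells simultaneously controls the correctness of \emph{every} deterministic copy $C_\rho$ (ranging over the continuum of possible $\rho$), not just of finitely many of them. The way to handle this cleanly is to work in the joint space $\RR^{n}\times(\text{random coordinates})$, apply the sign-pattern bound there to the polynomial system built from the circuit's hard-wired polynomials (so the bound is in terms of the number and degree of \emph{those} polynomials, which is $\le\dgates\size$ and $\le\dgates$, times the dimension, which depends only on $n$ and $\size$), and then project; the correctness event for each $\rho$ is a union of fibres of cells, so it is constant along the $x$-cells for a decomposition of $\RR^n$ whose size does not depend on $\rho$. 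Everything else is the standard Chernoff-plus-counting computation.
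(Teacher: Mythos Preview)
Your argument has a genuine gap at its central step. The claim that there exists a \emph{single} cell decomposition of $\RR^n$ of size $(\dgates\size)^{O(n)}$ on which, for \emph{every} $\rho$, the correctness indicator $x\mapsto[C_\rho(x)=f(x)]$ is constant, is false in general. A toy case: one real input $x$, one random input $r$, one gate ``if $x<r$ then $0$ else $1$'', target $f\equiv 0$. For each fixed $r$ the correctness set is $(-\infty,r)$, and no finite partition of $\RR$ is refined by all of these half-lines simultaneously. Your proposed fix --- work in the joint space $\RR^{n}\times\RR^{k}$, decompose, and project --- does not rescue this: a cell decomposition (even a cylindrical one) of the joint space only guarantees that the \emph{fibre structure} over each base cell is constant, not that for a \emph{fixed} $\rho$ the point $(x,\rho)$ remains in a single joint cell as $x$ moves across a base cell. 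The family $\{x:C_\rho(x)=f(x)\}_\rho$ is an uncountable, genuinely moving family of semialgebraic sets, and no union bound over a fixed finite set of representative $x$'s can control it.

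The paper's proof handles precisely this obstacle, but by a different mechanism. It does not seek a partition of $\RR^n$; instead it bounds the \emph{growth function} of the correctness matrix: for any \emph{finite} tuple $\rho_1,\ldots,\rho_m$ (chosen first), Warren's theorem bounds the number of distinct patterns $([C_{\rho_i}(x)=f(x)])_{i\le m}$ as $x$ ranges over $\RR^n$. The passage from a growth-function bound to the existence of a good tuple of $\rho$'s is then not Chernoff-plus-union-bound but the Vapnik--Chervonenkis uniform convergence theorem (the paper's ``infinite majority rule''). A second point you slide over is also substantive: the $\le\dgates\size$ polynomials hard-wired into the gates are polynomials in the \emph{gate inputs}, which are semialgebraic but generally not polynomial in $x$; to bound the description complexity of each $C_\rho$ \emph{as a function of $x$ alone}, the paper invokes quantifier elimination, yielding $\log\slocal=O(n\size\log\dgates\size)$. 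This, rather than a direct sign-pattern count, is where the $n^2\size$ factor in the final bound actually originates.
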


Note that, even though the majority vote functions are only
\emph{partially} defined, the derandomized circuit ensures that, on
every input $x\in\RR^n$ to the circuit, the sequence of values given
to the last majority vote gate will \emph{always} (for every input $x$
to the entire circuit) contain a majority element.

Note also that the upper bound on the number $m$ of deterministic
copies in the derandomized circuit only depends on the number $n$ of
deterministic input variables, on the number $\size$ of gates in the
probabilistic circuit, and on the (logarithm of) the description
complexity $\dgates$ of individual gates.  But it depends neither on
the fanin of gates, nor on the number of random input variables.

\subsubsection*{Derandomization of approximating circuits}
Our next (and main) result derandomizes probabilistic circuits when
they are only required to \emph{approximate} the values of a given
function (instead of computing the function exactly, as in
\cref{thm:main1}).

Let $x\rel y$ be any binary relation between real numbers $x,y\in\RR$.
One may interpret $x\rel y$ (especially, in the context of
approximating algorithms) as ``$x$ lies close to $y$.''  The
description complexity of the relation $\rel$ is the description
complexity of the set $S=\{(x,y)\in\RR^2\colon x\rel y\}$.

Given a binary relation $x\rrel y$ between real numbers, we say that a
probabilistic circuit $\F(x,\xr)$ $\rrel$-\emph{approximates} a given
function $f(x)$ if, for every input $x\in\RR^n$, $\F(x,\xr)\rel f(x)$
holds with probability at least $2/3$. That is, on every input $x$,
the circuit only has to output a value which is ``close enough'' to
the correct value $f(x)$ with probability at least~$2/3$.

\begin{ex}\label{ex:relations}
  Some of the most basic relations are the following ones.
  \begin{enumerate}
  \item Equality relation: $x\rel y$ iff $x=y$.
  \item Sign relation: $x\rel y$ iff $x=y=0$ or $x\cdot y > 0$.
  \item Nullity relation: $x\rel y$ iff $x=y=0$ or $x\cdot y\neq 0$.
  \item Approximation relation: $x\rel y$ iff $|x-y|\leq \factor$ for
    some fixed number $\factor\geq 0$.
  \end{enumerate}
  In the case of approximating circuits, the first relation (1)
  corresponds to computing the values $f(x)$ exactly, as in
  \cref{thm:main1}. The second relation (2) corresponds to detecting
  signs of the values $f(x)$.  In the case of relation (3), a circuit
  must recognize the roots of $f$, that is, must output $0$ precisely
  when $f(x)=0$. In the case of the last relation (4), the values
  computed by the circuit must lie not far away from the correct
  values~$f(x)$.
\end{ex}

A \emph{majority $\rrel$-vote function} is a (partial) function
$\Maj:\RR^m\to\RR$ with the following property for any real numbers
$a,x_1,\ldots,x_m$:
\begin{equation}\label{eq:maj-rel}
  \mbox{if $x_i\rel a$ holds for more than $m/2$ positions $i$, then
    $\Maj(x_1,\ldots,x_m)\rel a$ holds.}
\end{equation}
That is, if more than half of the input numbers $x_1,\ldots,x_m$ lie
close to the number $a$, then also the value of $\Maj$ must lie close
to~$a$.  For example, the majority vote function $\maj$ is the unique
majority $\rrel$-vote function for the equality relation (when $x\rel
y$ iff $x=y$). In general, however, there may be more than one
majority $\rrel$-vote function. For example, for a function
$\Maj:\RR^m\to\RR$ to be a majority $\rrel$-vote function for the
nullity relation $\rel$ it is enough that $\Maj(x)=0$ if more than
half of input numbers are zeros, and $\Maj(x)\neq 0$ otherwise.

In the following theorem, $x\rel y$ is an arbitrary
$\relcompl$-semialgebraic relation, and $f:\RR^n\to\RR$ a
$\sslocal{f}$-semialgebraic function.

\begin{thm}[Main result]
  \label{thm:main2}
  If $f$ can be $\rel$-approximated by a probabilistic
  $\dgates$-semialgebraic circuit of size $\size$, then $f$ can be
  also $\rrel$-approximated as a majority $\rrel$-vote of
  $m=\bigO(n^2\size\log \K)$ deterministic copies of this circuit,
  where $\K=\dgates\size+\sslocal{f} +\relcompl$.
\end{thm}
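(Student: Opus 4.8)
\emph{The overall plan} is to mimic the Adleman--Bennett--Gill argument for Boolean circuits: freeze $m$ independent instances of the random string once and for all, turning the probabilistic circuit into $m$ deterministic copies, and combine them by a majority $\rrel$-vote gate. The only step that genuinely fails over an infinite domain is the final union bound over all inputs; it will be replaced by a count of how many sign patterns a bounded family of real polynomials can realize, which is the standard device used when derandomizing over infinite fields. In detail: let $\F(x,\xr)$ be the given probabilistic $\dgates$-semialgebraic circuit of size $\size$ that $\rrel$-approximates $f$, with $\xr=(r_1,\dots,r_N)$ its random inputs (no assumption on $N$). Drawing $\xr^{(1)},\dots,\xr^{(m)}$ independently from the distribution of $\xr$ gives deterministic copies $\F_j(x):=\F(x,\xr^{(j)})$, and the candidate derandomized circuit is $\Maj(\F_1(x),\dots,\F_m(x))$ for some majority $\rrel$-vote function $\Maj$. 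By property~\eqref{eq:maj-rel}, this circuit $\rrel$-approximates $f$ --- and, importantly, it does so \emph{deterministically}, for every input $x$ --- provided the tuple $(\xr^{(1)},\dots,\xr^{(m)})$ is \emph{good}: for each $x$ at which $f$ is defined, strictly more than $m/2$ of the copies satisfy $\F_j(x)\rel f(x)$. Thus the whole task reduces to showing that a good tuple exists once $m=\bigO(n^2\size\log\K)$.

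\emph{Freezing the randomness keeps the description complexity small.} For $\rho\in\RR^N$ put $A^\rho:=\{x\in\RR^n:\F(x,\rho)\rel f(x)\}$, and dually for $x\in\RR^n$ put $W_x:=\{\rho:\F(x,\rho)\rel f(x)\}$; the hypothesis says $\prob{\xr\in W_x}\ge 2/3$ for every relevant $x$. The crucial observation is that each set $A^\rho$ is semialgebraic of description complexity bounded in terms of $\K$, $n$, $\size$ \emph{only} --- with no dependence on $N$ or on the fan-ins of gates. Indeed, introduce auxiliary variables $z_1,\dots,z_\size$ for the gate outputs and $w$ for the value $f(x)$: the conditions ``$z_i$ is the correct output of gate $i$'' for $i=1,\dots,\size$, ``$w=f(x)$'', and ``$z_\size\rel w$'' are Boolean combinations of polynomial (in)equalities using altogether at most $\K=\dgates\size+\sslocal{f}+\relcompl$ polynomials, each of degree at most $\K$, in the $n+\size+1$ variables $(x,z,w)$ --- and once the random string has been fixed to the \emph{numbers} $\rho$, the $N$ random coordinates have simply disappeared. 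Projecting out $z$ and $w$ via an effective quantifier-elimination bound over the real closed field then describes $A^\rho$ by at most $\K^{\bigO(n\size)}$ polynomials of degree at most $\K^{\bigO(\size)}$ in the $n$ variables $x$.

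\emph{A VC-dimension bound finishes the argument.} Consider the range space with ground set $\RR^N$ (equipped with the distribution of $\xr$) and ranges $\{W_x:x\in\RR^n\}$. If points $\rho_1,\dots,\rho_v$ were shattered, then the map $x\mapsto(\pred{x\in A^{\rho_1}},\dots,\pred{x\in A^{\rho_v}})$ would realize all $2^v$ patterns; but this map is constant on each cell of the arrangement of $A^{\rho_1},\dots,A^{\rho_v}$ in $\RR^n$, and by the previous paragraph together with the classical bound (Milnor--Thom, Warren) on the number of sign patterns of the $v\cdot\K^{\bigO(n\size)}$ polynomials of degree at most $\K^{\bigO(\size)}$ in $n$ variables that define this arrangement, it has at most $(v\cdot\K^{\bigO(n\size)})^{\bigO(n)}$ cells. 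Hence $2^v\le(v\cdot\K^{\bigO(n\size)})^{\bigO(n)}$, which forces $v=\bigO(n^2\size\log\K)$, so the range space has VC dimension $\bigO(n^2\size\log\K)$. By the $\err$-sample (uniform-convergence) theorem, a random sample $\xr^{(1)},\dots,\xr^{(m)}$ of size $m=\bigO(n^2\size\log\K)$ is, with probability at least $1/2$, a $(1/7)$-approximation of this range space; for such a sample $\tfrac1m|\{j:\xr^{(j)}\in W_x\}|\ge\prob{\xr\in W_x}-\tfrac17>\tfrac12$ for every relevant $x$, that is, the tuple is good. Since this happens with positive probability, a good tuple exists, and the construction of the first paragraph completes the proof.

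\emph{The main obstacle.} Everything outside the second paragraph is the Boolean argument almost verbatim; the real content --- and the only step that is not routine --- is the complexity bound there. What it exploits, and what must not be overlooked, is that once the random string is fixed to concrete reals the $N$ random coordinates vanish as variables, so only the $\size+1$ auxiliary gate/value variables remain to be eliminated; without this the bound would depend on $N$ (and on the fan-ins) and the sign-pattern count would be useless. The cost of eliminating those $\size+1$ variables is also exactly what lifts the exponent from the naive $n\size$ up to $n^2\size$. Two minor points need attention: $f$, and hence also $\Maj$, may be only partially defined, so all statements are read under the proviso ``at every $x$ where $f(x)$ is defined''; and the one-sided uniform-convergence estimate should be applied with a constant (a $(1/7)$-approximation rather than a $(1/6)$-one, say) that leaves room for the \emph{strict} inequality ``more than $m/2$''.
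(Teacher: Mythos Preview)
Your proof is correct and follows essentially the same route as the paper: bound the description complexity of each ``column'' $A^\rho$ of the correctness matrix via quantifier elimination after freezing the random string (this is the paper's \cref{clm:rel}), feed the result into Warren's sign-pattern bound, and finish by Vapnik--Chervonenkis uniform convergence with $\err=1/7$. The only cosmetic difference is that you detour through the VC dimension of the range space $\{W_x\}$ and invoke the $\err$-sample theorem, whereas the paper bounds the growth function $\Pi_M(m)$ directly and applies the Vapnik--Chervonenkis theorem in its original form (\cref{lem:growth,lem:VC,infmajrule}); the paper even remarks that it prefers to avoid ``the detour through VC dimension and Sauer's lemma,'' but both packagings yield the same $m=\bigO(n^2\size\log\K)$.
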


Note that now (unlike in \cref{thm:main1}) the size of the
derandomized circuit depends (albeit only logarithmically) on the
description complexities $\sslocal{f}$ and $\relcompl$ of the function
$f$ approximated and of the approximation relation $\rrel$.  Although
$\sslocal{f}$ may be large, the description complexity of the
approximation relations $\rrel$ is usually small; say, for all four
relations mentioned in \cref{ex:relations} we have $\relcompl \leq 2$.

\subsubsection*{The majority vote ``issue''}
One issue still remains: just like in \cref{thm:main1}, the
deterministic circuits given by \cref{thm:main2} are not in a ``pure''
form: they require one additional majority $\rrel$-vote operation to
output their values.  To obtain a ``pure'' circuit, we have to compute
this operation by a (possibly small) circuit using only basis
operations.

In some weak bases, even the (standard) majority vote function
\eqref{eq:maj} (for the equality relation) cannot be computed at all. For example, arithmetic
$(+,-,\times)$ circuits, as well as tropical \minplus and \maxplus
circuits cannot compute majority vote functions (\cref{clm:maj-impos}
in \cref{app:not-maj}).  In most bases, however, majority vote
functions are easy to compute, even for general approximation
relations $\rrel$, not just for the equality relation.

Call a relation $x\rel y$ \emph{contiguous} if $x\leq y\leq z$, $x\rel
a$ and $z\rel a$ imply $y\rel a$. That is, if the endpoints of an
interval are close to $a$, then also all numbers in the interval are
close to~$a$.  Note that the relations (1), (2) and (4) mentioned in
\cref{ex:relations} are contiguous. It can be easily shown (see
\cref{clm:contiguous} in \cref{app:not-maj}) that:
\begin{itemize}
\item For every contiguous relation $x\rel y$, a majority $\rrel$-vote
  function of $m$ variables can be computed by a fanin-$2$
  $(\min,\max)$ circuit of size $\bigO(m\log m)$ .
\end{itemize}
The nullity relation is \emph{not} contiguous: take, for example,
$x=-1, y=0$ and $z=a=1$.  Then $x\leq y\leq z$, $x\rel a$ and $z\rel
a$ hold but $y\rel a$ does not hold: $y=0$ but $a\neq 0$. Still,
majority $\rrel$-vote function of $m$ variables for the nullity
relation can be also computed by a $(\min,\max,\times)$ circuit using
$\bigO(m\log m)$ gates, and by a monotone arithmetic $(+,\times)$
circuit using $\bigO(m^2)$ gates (\cref{clm:nullity} in
\cref{app:not-maj}).

Thus, if the approximation relation is contiguous, and if the
operations $(+,\times)$ or the operations $(\min,\max,\times)$ are
available as gates, then \cref{thm:main2} gives a ``pure'' circuit
(without a majority vote gate) of size $\bigO(ms+m\log
m)=\bigO(n^2\size^2\log\K)$, that is:
\begin{itemize}
\item The blow up in the size of the derandomized circuit is only
  about quadratic.
\end{itemize}

\begin{rem}[Relation to dynamic programming]
  Most (if not all) DP algorithms in discrete optimization use only
  several semialgebraic functions of small description complexity in
  their recursion equations: min, max, arithmetic operations, and
  apparently some additional, but still semialgebraic operations of
  small description complexity, like the selection or the
  ``if-then--else'' operations (see \cref{tab:basic} in
  \cref{sec:semialg}). So, \cref{thm:main1} implies that randomization
  is (almost) \emph{useless} in DP algorithms, at least as long as we
  are allowed to use \emph{different} deterministic DP algorithms to
  solve optimization problems on inputs $x\in\RR^n$ from
  \emph{different} dimensions~$n$. In fact, the message of this paper
  is even stronger: \cref{thm:main2} shows that randomization is
  almost useless also for \emph{approximating} DP algorithms.
\end{rem}

\begin{rem}[The ``uniformity'' issue]\label{rem:uniform}
  Usually, a DP algorithm is described by giving \emph{one} collection
  of recursion equations that can be applied to inputs of \emph{any}
  dimension $n$. In this respect, DP algorithms are ``uniform'' (like
  Turing machines). Probabilistic DP algorithms may use random input
  weights in their recursion equations. However, when derandomizing
  such algorithms, we do not obtain also \emph{one} collection of
  recursion equations valid for inputs of \emph{all} dimensions. What
  we obtain is a \emph{sequence} of deterministic DP algorithms, one
  for each dimension~$n$. To our best knowledge, in the ``uniform''
  setting (with $ \mathsf{P}$ instead of $\cPoly$), the inclusion
  $\cBPP\subseteq \mathsf{P}$ remains \emph{not} known to hold for DP
  algorithms, and even for ``pure'' DP algorithms using only \minplus
  or \maxplus operations in their recursion equations.
\end{rem}

\subsubsection*{Organization} \Cref{sec:previous} shortly summarizes
previous work towards derandomization of probabilistic decision trees and circuits
working over infinite domains. In \cref{sec:semialg}, we recall the
notions of semialgebraic functions and probabilistic
circuits. \Cref{sec:enroute} describes the three steps in which we
will come from probabilistic to deterministic circuits.  The next
three sections (\cref{sec:growth,sec:quantifiers}) contain technical
results used to implement these three steps.  After these technical
preparations, \cref{thm:main1} is proved in \cref{sec:main1}, and
\cref{thm:main2} is proved in \cref{sec:main2}.  In the last section
(\cref{sec:isol}), we show that probabilistic arithmetic and tropical
circuits can also be derandomized using elementary arguments by using
so-called ``isolating sets'' for arithmetic and tropical polynomials.

\section{Related work}
\label{sec:previous}

As we mentioned at the beginning, our starting point is the result of
Adleman~\cite{adleman} that\footnote{Actually, the result is stronger,
  and should be stated as ``$\cPolyBPP = \cPoly$:'' even probabilistic
  \emph{circuits}, not only probabilistic Turing machines
  (\emph{uniform} sequences of circuits) can be derandomized. We,
  however, prefer to use the less precise but more familiar shortcut
  ``$\cBPP\subseteq \cPoly$.'' } $\cBPP\subseteq \cPoly$ holds for
\emph{Boolean} circuits.  In fact, Adleman proved this only when
\emph{one-sided} error is allowed. To prove the two-sided error
version, Bennett and Gill~\cite{bennett} used a simple ``finite
majority rule'' (\cref{clm:fin-maj} in \cref{sec:maj}).  This rule
follows directly from the Chernoff and union bounds, and allows us to
simulate any probabilistic circuit of size $\size$ on $n$ input
variables taking their values in a \emph{finite} domain $D$ as a
majority vote of $\bigO(n\log|D|)$ deterministic circuits, each of
size at most~$\size$.

In the \emph{Boolean} case, the domain $D=\{0,1\}$ is clearly finite,
and the majority vote functions turn into Boolean majority functions:
output $1$ if and only if more than half of the input bits are
$1$s. Since majority functions have small Boolean circuits, even
monotone ones, the resulting deterministic circuits are then not much
larger than the probabilistic ones, is only $\bigO(n\size)$.

Using entirely different arguments (not relying on the finite majority
rule), Ajtai and Ben-Or~\cite{ajtaiB84} have shown that
$\cBPP\subseteq \cPoly$ holds also for Boolean constant-depth
circuits, known also as AC$^0$ circuits. Note that this extension is
far from being trivial, because the majority function itself requires
AC$^0$ circuits of exponential size.

Markov~\cite{markov} has found a surprisingly tight combinatorial
characterization of the minimum number of NOT gates required by
\emph{deterministic} $(\lor,\land,\neg)$ circuits to compute a given
Boolean functions $f$.  A natural question therefore was: can
randomness substantially reduce the number of NOT gates?
Morizumi~\cite{morizumi} has shown that Markov's result itself already
gives a negative answer: in probabilistic circuits, the decrease of
the number of NOT gates is at most by an \emph{additive} constant,
where the constant depends only on the success probability.

The derandomization of circuits working over \emph{infinite} domains
$D$, such as $\NN$, $\ZZ$ or $\RR$, is a more delicate task. Here we
have to somehow ``cope'' with the infinity of the domain: Chernoff's
and union bounds alone do not help then. Two general approaches
emerged along this line of research.

\begin{itemize}
\item[(A)] Find (or just prove a mere existence of) a \emph{finite}
  set $X\subset D^n$ of input vectors that is ``isolating'' in the
  following sense: if a (deterministic) circuit computes a given
  function $f$ correctly on all inputs $x\in X$, then it must compute
  $f$ correctly on \emph{all} inputs $x\in D^n$. Then use the finite
  majority rule on inputs from~$X$.

\item[(B)] Use the ``infinite majority rule'' (\cref{infmajrule}
  below) following from the uniform convergence in probability
  results, proved by researchers in the statistical learning theory.
\end{itemize}

Approach (A) was used by many authors to show the inclusion
$\cBPP\subseteq \cPoly$ for various types of decision trees.  The
complexity measure here is the depth of a tree. These trees work over
$\RR$, and branch according to the sign of values of rational
functions. In the case when only linear functions are allowed for
branching, the inclusion $\cBPP\subseteq \cPoly$ was proved by Manber
and Tompa~\cite{manber}, and Snir~\cite{snir}. Meyer auf der
Heide~\cite{meyer} proved the inclusion $\cBPP\subseteq \cPoly$ for
the decision tree depth when arbitrary rational functions are
allowed. He uses a result of Milnor~\cite{milnor} about the number of
connected components of polynomial systems in $\RR^n$ to upper-bound
the minimum size of an ``isolating'' subset $X\subset\RR^n$. Further
explicit lower bounds on the depth of probabilistic decision trees
were proved by B\"urgisser, Karpinski and Lickteig~\cite{buergisser},
Grigoriev and Karpinski~\cite{GKa}, Grigoriev~et.~al.~\cite{GKMhS},
Grigoriev~\cite{Grigoriev99} and other authors.

Approach (B) was used by Cucker~et.~al.~\cite{cucker} to prove the
inclusion $\cBPP\subseteq \cPoly$ for circuits over the basis
$(+,-,\times,\div,\ssgn)$, that is, for arithmetic $(+,-,\times,\div)$
circuits with signum gates.  They combined the upper bound on the
Vapnik--Chervonenkis dimension (VC dimension) of such circuits,
obtained by Goldberg and Jerrum~\cite{goldberg}, with a uniform
convergence in probability theorem of Haussler~\cite{haussler} for
classes of functions with bounded VC dimension. In the proofs of
\cref{thm:main1,thm:main2} we will also use Approach~(B), but in a
more \emph{direct} way avoiding the detour through VC dimension and
Sauer's lemma: we will directly combine the classical uniform
convergence in probability theorem of Vapnik and
Chervonenkis~\cite{vapnik} with the upper bound of
Warren~\cite{warren} on the number of sign patterns of real
polynomials.

The $\cBPP$ vs. $\cP$ problem in the \emph{uniform} setting, that is,
in terms of Turing machines (instead of circuits), is an even more
delicate task.  Still, a strong indication that $\cBPP=\cP$ ``should''
hold also in the uniform setting was given by Impagliazzo and
Wigderson~\cite{RusselAvi}: either $\cBPP=\cP$ holds or \emph{every}
decision problem solvable by deterministic Turing machines in time
$2^{\bigO(n)}$ can be solved by a Boolean circuit of sub-exponential
size~$2^{o(n)}$.  Goldreich~\cite{Goldreich11} related the $\cBPP$
vs. $\cP$ problem with the existence of pseudorandom generators:
$\cBPP = \cP$ if and only if there exists suitable pseudorandom
generators; the ``if'' direction was known for decades---the novelty
is in the converse direction.

\section{Preliminaries}
\label{sec:semialg}

In this section, we define more precisely the concepts used in the
paper (probabilistic circuits, algebraic formulas, description
complexity of sets and functions), and recall two known results used
in the proofs of our main results (\cref{thm:main1,thm:main2}).

\subsection{Probabilistic circuits}

A circuit \emph{basis} is any family $\basis$ of multivariate
real-valued functions.  A~\emph{circuit} over a basis $\basis$ is a
sequence $\F=(f_1,\ldots,f_{\size})$ of real-valued functions, where
each $f_i$ is obtained by applying one of the basis operations to the
functions in $\RR\cup\{x_1,\ldots,x_n,f_1,\ldots,f_{i-1}\}$; scalars
$a\in \RR$ can be also viewed as (constant) functions.  The
\emph{size} of a circuit is the number $\size$ of functions in the
sequence, and the function $f:\RR^n\to \RR$ \emph{computed} by the
circuit is the last function $f=f_{\size}$ in the sequence.  Every
circuit can be also viewed as a directed acyclic graph; parallel edges
joining the same pair of nodes are allowed. Each indegree-zero node
holds either one of the variables $x_1,\ldots,x_n$ or a scalar
$a\in\RR$.  Every other node, a \emph{gate}, performs one of the
operations $g\in\basis$ on the results computed at its input gates.  A
circuit is $\dgates$-\emph{semialgebraic} if each its basis operation
(a gate) is $\dgates$-semialgebraic.

A \emph{probabilistic circuit} is a deterministic circuit which,
besides the actual (deterministic) variables $x_1,\ldots,x_n$, is
allowed to use additional variables
$\xr_1,\ldots,\xr_k$, each being a \emph{random} variable taking its
values in $\RR$. As we already mentioned in the introduction, the
probability distribution of these random variables can be arbitrary:
our derandomization results will hold for \emph{any} distribution.

\subsection{Semialgebraic sets and functions}
A set $S\subseteq \RR^n$ is \emph{semialgebraic} if it can be obtained
by finitely many unions and intersections of sets defined by a
polynomial equality or strict inequality. For us important will be not
the mere fact that a set $S$ \emph{is} semialgebraic but rather ``how
much semialgebraic'' it actually is: how many distinct polynomials and
of what degree do we need to define this set?

The \emph{sign function} $\ssgn:\RR\to\{-1,0,+1\}$ takes value
$\sgn{x}=-1$ if $x< 0$, $\sgn{0}=0$, and $\sgn{x}=+1$ if $x>0$. Let
$P=(p_1,\ldots,p_m)$ be a sequence of polynomials in
$\RR[x_1,\ldots,x_n]$.  The \emph{sign pattern} of this sequence at a
point $x\in\RR^n$ is the vector
\[
\sgn{P(x)} = \big(\sgn{p_1(x)},\ldots,\sgn{p_m(x)} \big)
\in\{-1,0,+1\}^n
\]
of signs taken by these polynomials at the point~$x$.

A set $S\subseteq\RR^n$ is $\slocal$-\emph{semialgebraic} if there is
a sequence $P=(p_1,\ldots,p_m)$ of $m\leq \slocal$ polynomials of
degree at most $\slocal$ such that the membership of points $x\in
\RR^n$ in the set $S$ can be determined from sign patterns of these
polynomials at these points, that is, if $x\in S$ and $x'\not\in S$,
then $\sgn{P(x)}\neq\sgn{P(x')}$.

A function $f:\RR^n\to\RR^m$ is $\slocal$-\emph{semialgebraic} if its
graph $S=\{(x,y)\colon y=f(x)\}\subseteq\RR^{n+m}$ is such. The
\emph{description complexity} of a semialgebraic set (or function) is
the smallest number $\slocal$ for which this set (or function) is
$\slocal$-semialgebraic.

\subsection{Algebraic formulas}
The description complexity of sets and functions can be defined more
explicitly using the language of ``algebraic formulas.''  An
\emph{algebraic formula} is an arbitrary Boolean combination of atomic
predicates, each being of the form $\pred{p(x)\ \Diam\ 0}$ for some
polynomial $p$ in $\RR[x_1,\ldots,x_n]$, where $\Diam$ is one of the
standard relations $>$, $\geq$, $=$, $\neq$, $\leq$, $<$; the
predicate $\pred{\rho}$ for a relation $\rho$ outputs $1$ if the
relation $\rho$ holds, and outputs $0$ otherwise.  So, for example,
$\pred{p(x)=0}=1$ if and only if $p(x)=0$.  Note that $\pred{p(x)\
  \Diam\ q(x)}$ is equivalent to $\pred{p(x)-q(x)\ \Diam\ 0}$, so that
we can also make comparisons between polynomials.  The
\emph{description complexity} of an algebraic formula is
$\max\{m,d\}$, where $m$ is the number of distinct polynomials used in
the formula, and $d$ is their maximal degree. An algebraic formula
$\Phi(x)$ \emph{defines} a set $S\subseteq \RR^n$ if
$S=\{x\in\RR^n\colon \Phi(x)=1\}$.

\begin{clm}\label{clm:normal}
  For every algebraic formula there is a algebraic formula of the same
  description complexity which only uses atomic predicates of the form
  $\pred{p< 0}$, $\pred{p=0}$ and $\pred{p>0}$.
\end{clm}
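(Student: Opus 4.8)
The plan is to reduce every non-standard atomic predicate to the three allowed forms $\pred{p<0}$, $\pred{p=0}$, $\pred{p>0}$ by rewriting it as a Boolean combination of those forms, taking care that the rewriting does not introduce new polynomials and does not raise the degree. First I would invoke the observation already made in the excerpt: a predicate $\pred{p(x)\ \Diam\ q(x)}$ is equivalent to $\pred{p(x)-q(x)\ \Diam\ 0}$, so it suffices to handle predicates of the form $\pred{p\ \Diam\ 0}$ for a single polynomial $p$ and $\Diam\in\{>,\geq,=,\neq,\leq,<\}$. Of these six, the relations $>$, $=$, $<$ are already in the required form, so only $\geq$, $\leq$, $\neq$ need to be dealt with.

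The core of the argument is then the three syntactic equivalences
\begin{align*}
  \pred{p\geq 0} &\equiv \pred{p>0}\lor\pred{p=0},\\
  \pred{p\leq 0} &\equiv \pred{p<0}\lor\pred{p=0},\\
  \pred{p\neq 0} &\equiv \pred{p<0}\lor\pred{p>0}.
\end{align*}
Each of these replaces one atomic predicate built from the polynomial $p$ by a disjunction of two atomic predicates built from the \emph{same} polynomial $p$. I would carry out these substitutions simultaneously for every atomic predicate appearing in the given formula $\Phi$; the result is a new algebraic formula $\Phi'$ defining the same set. Since the substitution never introduces a polynomial other than one already present and never changes the degree of any polynomial, the set of distinct polynomials used by $\Phi'$ is a subset of that used by $\Phi$, and the maximal degree is unchanged. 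Hence the description complexity $\max\{m,d\}$ of $\Phi'$ is at most that of $\Phi$; conversely it cannot be smaller if $\Phi$ was already in this restricted form, so one gets equality as stated (or one may simply note ``at most,'' which is all that is used later).

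There is essentially no hard part here: the statement is a normal-form lemma whose only subtlety is bookkeeping. The one point that deserves a sentence of care is the definition of ``number of distinct polynomials'': after the rewriting, the polynomial $p$ may now occur in several atomic predicates $\pred{p<0}$, $\pred{p=0}$, $\pred{p>0}$, but it is still counted \emph{once}, so $m$ does not increase. I would state this explicitly to make the bound on the description complexity transparent. I expect the whole proof to be three or four lines once the three equivalences above are displayed.
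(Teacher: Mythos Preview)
Your proposal is correct and matches the paper's own proof essentially line for line: the paper simply replaces $\pred{p\leq 0}$, $\pred{p\geq 0}$, $\pred{p\neq 0}$ by the same three disjunctions you wrote down and observes that neither the number of distinct polynomials nor their degree increases. Your extra remark that repeated occurrences of the same polynomial still count once is the right bookkeeping observation and is implicit in the paper's one-sentence argument.
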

The claim is trivial: just replace each atomic predicate $\pred{p\leq
  0}$ by the formula $\pred{p=0}\lor \pred{p<0}$, each atomic
predicate $\pred{p\geq 0}$ by the formula $\pred{p=0}\lor \pred{p>0}$,
and each atomic predicate $\pred{p\neq 0}$ by the formula
$\pred{p<0}\lor \pred{p>0}$. Neither the number of distinct
polynomials used, nor their degree increases during these
transformations.

\begin{clm}\label{clm:equivalence}
  The description complexity of a semialgebraic set is the minimum
  description complexity of an algebraic formula defining this set.
\end{clm}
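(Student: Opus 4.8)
The plan is to prove the two inequalities separately. Write $\slocal_{\mathrm{set}}(S)$ for the description complexity of a semialgebraic set $S$ (smallest $\slocal$ making it $\slocal$-semialgebraic) and $\slocal_{\mathrm{for}}(S)$ for the minimum description complexity $\max\{m,d\}$ over all algebraic formulas defining $S$. We want $\slocal_{\mathrm{set}}(S)=\slocal_{\mathrm{for}}(S)$.

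First I would show $\slocal_{\mathrm{set}}(S)\leq \slocal_{\mathrm{for}}(S)$. Take an algebraic formula $\Phi$ defining $S$ of minimum description complexity $\slocal$, so $\Phi$ uses polynomials $p_1,\ldots,p_m$ with $m\leq\slocal$ and each $\deg p_i\leq\slocal$. The key observation is that the truth value of every atomic predicate $\pred{p_i(x)\ \Diam\ 0}$ is a function of $\sgn{p_i(x)}$ alone (e.g.\ $\pred{p_i(x)<0}=1$ iff $\sgn{p_i(x)}=-1$, $\pred{p_i(x)\neq0}=1$ iff $\sgn{p_i(x)}\neq0$, etc.), hence the value $\Phi(x)$ is determined by the sign pattern $\sgn{P(x)}$ of $P=(p_1,\ldots,p_m)$. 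Therefore if $x\in S$ and $x'\notin S$ then $\Phi(x)=1\neq0=\Phi(x')$, which forces $\sgn{P(x)}\neq\sgn{P(x')}$. This is exactly the definition of $S$ being $\slocal$-semialgebraic via the sequence $P$, so $\slocal_{\mathrm{set}}(S)\leq\slocal$.

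For the converse $\slocal_{\mathrm{for}}(S)\leq \slocal_{\mathrm{set}}(S)$, let $\slocal=\slocal_{\mathrm{set}}(S)$ and fix a witnessing sequence $P=(p_1,\ldots,p_m)$ with $m\leq\slocal$, $\deg p_i\leq\slocal$, such that membership in $S$ is determined by $\sgn{P(\cdot)}$. The idea is to write $S$ as the union, over all sign patterns $\sigma\in\{-1,0,+1\}^m$ that occur at some point of $S$, of the ``sign cell'' $C_\sigma=\{x : \sgn{P(x)}=\sigma\}$; by the defining property, each such cell lies entirely inside $S$ (if it met both $S$ and its complement we would get two points with equal sign pattern, one in $S$ and one not). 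Each cell $C_\sigma$ is defined by the conjunction $\bigwedge_{i=1}^m \pred{p_i\ \Diam_i\ 0}$ where $\Diam_i$ is $<$, $=$, or $>$ according to $\sigma_i$; this is an algebraic formula using only the polynomials $p_1,\ldots,p_m$, all of degree $\leq\slocal$. Taking the disjunction over the (finitely many) patterns $\sigma$ occurring in $S$ gives an algebraic formula $\Phi$ defining $S$ that still uses only the $m\leq\slocal$ polynomials $p_i$, each of degree $\leq\slocal$; hence its description complexity is $\max\{m,d\}\leq\slocal$, giving $\slocal_{\mathrm{for}}(S)\leq\slocal$. (One subtlety: if no sign pattern occurs in $S$, i.e.\ $S=\emptyset$, the empty disjunction is $\pred{0\neq0}$ or similar, still of trivial complexity; and if $S$ is everything, the empty conjunction is $\pred{0=0}$.) Combining the two inequalities finishes the proof.

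I do not expect a genuine obstacle here; the statement is essentially a reformulation of definitions. The only point needing a little care is the direction $\slocal_{\mathrm{for}}\le\slocal_{\mathrm{set}}$, where one must notice that only finitely many sign patterns arise and that reusing the \emph{same} polynomials $p_i$ keeps both parameters $m$ and $d$ under control—in particular the degree does not grow, so the bound is on $\max\{m,d\}$ and not something larger.
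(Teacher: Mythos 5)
Your proof is essentially the same as the paper's: the forward direction observes that each atomic predicate's truth value is a function of the sign of its polynomial (the paper routes this through \cref{clm:normal} to normalize to $<$, $=$, $>$ first, but your direct observation is equivalent), and the converse direction constructs the formula as a disjunction over realized sign patterns of conjunctions of sign conditions, exactly the cells $M_\sigma$ in the paper. Correct and same approach.
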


In the literature, this fact is often used as the \emph{definition} of
the description complexity of sets.

\begin{proof}
  Let $S\subseteq \RR^n$ be a set of vectors. Our goal is to show that
  the description complexity of $S$ is at most $\slocal$ if and only
  if the set $S$ can be defined by an algebraic formula $\Phi$ of
  description complexity at most~$\slocal$.

  $(\Leftarrow)$ By \cref{clm:normal}, we can assume that only atomic
  predicates of the form $\pred{p< 0}$, $\pred{p=0}$ and $\pred{p>0}$
  are used in the formula $\Phi$.  Hence, the values of the formula
  $\Phi$ only depend on the sign patterns of the sequence
  $P=(p_1,\ldots,p_m)$ of all $m\leq \slocal$ polynomials of degree at
  most $\slocal$ used in the formula $\Phi$.

  $(\Rightarrow)$ Let $P=(p_1,\ldots,p_m)$ be a sequence of $m\leq
  \slocal$ polynomials of degree at most $\slocal$ such that the
  membership of points $x\in \RR^n$ in the set $S$ can be determined
  from sign patterns of these polynomials on these points, that is,
  $\sgn{P(x)}\neq\sgn{P(x')}$ holds for every two points $x\in S$ and
  $x'\not\in S$.  Given a sign pattern $\sigma\in\{-1,0,+1\}^m$, let
  \[
  M_{\sigma}(x)=\bigwedge_{i=1}^m\pred{p_i(x)\ \Diam_i\ 0}
  \]
  be the AND of atomic predicates corresponding to $\sigma$, where
  $\Diam_i$ is ``$<$'' if $\sigma_i=-1$, is ``$=$'' if $\sigma_i=0$,
  and is ``$>$'' if $\sigma_i=+1$. Note that, for every point
  $x\in\RR^n$, we have $M_{\sigma}(x)=1$ if $\sgn{P}(x)=\sigma$, and
  $M_{\sigma}(x)=0$ if $\sgn{P}(x)\neq \sigma$.  So, if
  $\Sigma=\{\sgn{P}(x)\colon x\in S\}$ is the set of all sign patterns
  of the sequence $P$ of polynomials on the points in the set $S$,
  then the algebraic formula
  \[
  \Phi(x)=\bigvee_{\sigma\in \Sigma} M_{\sigma}(x)
  \]
  defines the set~$S$.
\end{proof}

By \cref{clm:equivalence}, a function $f:\RR^n\to\RR$ is
$\slocal$-semialgebraic if there is an algebraic formula $\Phi(x,y)$
of description complexity at most $\slocal$ such that for every
$x\in\RR^n$ and $y\in\RR$, $\Phi(x,y)=1$ holds precisely when
$y=f(x)$.  \Cref{tab:basic} gives a sample of some basic semialgebraic
functions of small description complexity.

\tabelle

Let us stress that, in algebraic formulas, we only count the number of
\emph{distinct} polynomials used, \emph{not} the number of their
\emph{occurrences} in the formula: one and the same polynomial can
appear many times, and under different relations $\Diam$.

\begin{ex}[Sorting operation]
  The \emph{sorting operation} $\sort:\RR^n\to\RR^n$ takes a sequence
  $x_1,\ldots,x_n$ of real numbers, and outputs its ordered
  permutation $y_1\leq y_2\leq \ldots\leq y_n$. The graph of this
  operation can be defined by the following algebraic formula of $2n$
  variables:
  \[
  \Phi(x,y)= \bigwedge_{i=1}^{n-1}\pred{y_i\leq
    y_{i+1}}\land\Big(\bigvee_{\sigma\in
    S_n}\bigwedge_{i=1}^n\pred{y_i=x_{\sigma(i)}}\Big)\,,
  \]
  where $S_n$ is the set of all permutations of $\{1,\ldots,n\}$.  The
  total number of occurrences of atomic predicates in this formula
  (the ``size'' of the formula) is huge (is even larger than $n!$),
  but the formula only uses $m=n^2+n-1$ distinct polynomials
  $y_{i+1}-y_i$ for $i=1,\ldots,n-1$, and $y_i-x_j$ for
  $i,j=1,\ldots,n$ of degree~$d=1$. Thus, the sorting operation
  $\sort:\RR^n\to\RR^n$ is $\slocal$-semialgebraic for
  $\slocal=n^2+n-1$.
\end{ex}

\subsection{Quantifier elimination}

A \emph{quantified algebraic formula} $\Psi(x)$ with $n$ free
variables $x=(x_1,\ldots,x_n)$ is of a form
\begin{equation}\label{eq:quant}
  (Q_1\ z_1\in\RR^{k_1})\ \ldots\ (Q_{\omega}\ z_{\omega}\in\RR^{k_{\omega}}) \
  \Phi(x,z_1,\ldots,z_{\omega})\,,
\end{equation}
where $Q_i\in\{\exists,\forall\}$, $Q_i\neq Q_{i+1}$, and $\Psi$ is an
algebraic formula using $m$ polynomials of degree at most~$d$. That
is, we have $\omega$ alternating blocks of quantifiers with $k_i$
quantified variables in the $i$th block.

\begin{thm}[Basu, Pollack and Roy~\cite{basu}]
  \label{thm:BPR}
  For every quantified formula \cref{eq:quant} there is an equivalent
  quantifier-free algebraic formula $\Phi(x_1,\ldots,x_n)$ which uses
  at most $(sd)^{\bigO(nK)}$ polynomials of degree at most
  $d^{\bigO(K)}$, where $K=\prod_{i=1}^{\omega}k_i$.
\end{thm}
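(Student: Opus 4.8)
Since \cref{thm:BPR} is quoted from~\cite{basu}, in place of a proof I describe the shape the argument takes. The plan is to eliminate one block of quantifiers at a time and iterate on $\omega$. Using the equivalence $(\forall z)\,\Phi\equiv\neg(\exists z)\,\neg\Phi$, which changes neither the number of distinct polynomials nor their degrees, it suffices to handle a single existential block $(\exists z\in\RR^k)\,\Phi(y,z)$, where now $y=(x,z_1,\ldots,z_{\omega-1})$ collects \emph{all} the remaining variables and $\Phi$ is a quantifier-free algebraic formula built from $s$ polynomials $p_1,\ldots,p_s\in\RR[y,z]$ of degree at most $d$; one then substitutes the resulting quantifier-free formula back in and recurses.

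The core is this single-block step. As the truth of $\Phi(y,z)$ depends only on $\sgn{p_1(y,z)},\ldots,\sgn{p_s(y,z)}$, what one must produce is a family $Q$ of polynomials in the parameters $y$ alone such that the set of sign conditions $\sigma\in\{-1,0,+1\}^s$ realized by \emph{some} $z\in\RR^k$ is determined by $\sgn{Q(y)}$. The classical route is the \emph{critical point method} together with \emph{parametrized sample points}: one first applies a generic infinitesimal deformation to the $p_i$ (carried out over a non-archimedean real closed extension of $\RR$ and specialized at the end) to put the system in general position, and then, for every subset of at most $k$ of the deformed polynomials, one takes the critical points of a generic linear projection restricted to their common zero set; one shows that every connected component of every realizable sign condition of $P(y,\cdot)$ contains such a point. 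The coordinates of each sample point are the real roots, with prescribed Thom encoding, of a zero-dimensional polynomial system whose coefficients are polynomials in $y$ of degree $d^{\bigO(k)}$, and there are $(sd)^{\bigO(k)}$ such systems, each with $d^{\bigO(k)}$ solutions. Finally one determines, uniformly in $y$, the sign of each $p_j$ at each parametrized sample point: the Thom encoding of a real root of a univariate polynomial with parametric coefficients, and the sign of a further polynomial at that root, are themselves controlled by sign conditions of explicit polynomials in $y$ coming from subresultant coefficients and signed remainder sequences, again of degree $d^{\bigO(k)}$. Collecting all these parameter polynomials yields $Q$, with $(sd)^{\bigO(n'k)}$ polynomials of degree $d^{\bigO(k)}$, where $n'$ is the number of variables in $y$.

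It then remains to iterate over the $\omega$ alternation blocks and do the bookkeeping. Eliminating the $i$-th block (counting inward) replaces the current degree bound $D$ by $D^{\bigO(k_i)}$ and the current polynomial count $s$ by $(sD)^{\bigO(n_i k_i)}$, where $n_i\le n+\sum_j k_j$ is the number of variables still present; composing these estimates over $i=\omega,\omega-1,\ldots,1$, the degree compounds multiplicatively to $d^{\bigO(k_1 k_2\cdots k_\omega)}=d^{\bigO(K)}$ and the polynomial count to $(sd)^{\bigO(nK)}$, the product structure of $K=\prod_i k_i$ being exactly the trace of this multiplicative compounding across alternations. The main obstacle, and the reason the theorem is deep, is the single-block step: building parametrized sample points that meet every realizable sign condition while holding the degree at $d^{\bigO(k)}$ and the count at $(sd)^{\bigO(n'k)}$, which is what forces the infinitesimal-deformation machinery over non-archimedean extensions and the Thom-encoding/subresultant apparatus for sign determination at parametrized algebraic points. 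Everything else is careful accounting.
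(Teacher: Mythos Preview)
The paper does not prove \cref{thm:BPR} at all: it is stated as a black-box citation of Basu, Pollack and Roy~\cite{basu}, and only the special case \cref{cor:elimination} (a single existential block) is ever used. You correctly recognized this and, in lieu of a proof, gave a high-level outline of the BPR method; that is the appropriate thing to do here, and your sketch captures the right ingredients (critical-point method, infinitesimal deformation, parametrized sample points with Thom encodings, subresultant-based sign determination).

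One caveat on your bookkeeping: a truly naive block-by-block elimination, where at each step you feed the output formula back as input to the next elimination, would accumulate the free-variable counts $n_i$ \emph{multiplicatively} in the exponent of the polynomial count, yielding something like $(sd)^{\bigO(\prod_i n_i k_i)}$ rather than $(sd)^{\bigO(nK)}$. The sharp bound in~\cite{basu} comes from a more integrated treatment of all blocks simultaneously (tracking sample points through the whole quantifier prefix rather than fully expanding the intermediate formulas), so your sentence ``everything else is careful accounting'' understates what is needed to land exactly on $(sd)^{\bigO(nK)}$. For the purposes of this paper, however, only the single-block case matters, and there your description is accurate.
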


We will only use a very special consequence of this result for
\emph{existential} algebraic formulas, that is, for formulas
\cref{eq:quant} with $\omega=1$, $Q_1=\exists$ and $k_1=q$.

\begin{cor}\label{cor:elimination}
  If $S=\{x\in\RR^n\colon (\exists z\in\RR^q)\ \Phi(x,z)=1\}$ for some
  quantifier-free algebraic formula $\Phi$ of description complexity
  $\kappa$, then $S$ is $\slocal$-semialgebraic for $\log
  \slocal=\bigO(nq\log\kappa)$.
\end{cor}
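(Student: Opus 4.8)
The plan is to apply Theorem \ref{thm:BPR} directly to the existential formula $(\exists z\in\RR^q)\,\Phi(x,z)$ and then translate the bounds on the number and degree of the resulting polynomials into a bound on the description complexity $\slocal$. In the notation of \cref{eq:quant}, we have a single quantifier block ($\omega=1$, $Q_1=\exists$) with $k_1=q$ quantified variables, so $K=\prod_{i=1}^{\omega}k_i=q$. The inner formula $\Phi(x,z)$ has description complexity $\kappa$, meaning it uses $m\le\kappa$ polynomials of degree $d\le\kappa$. Plugging into Theorem \ref{thm:BPR} with these parameters ($s=m\le\kappa$, $d\le\kappa$, $n$ free variables, $K=q$), we obtain an equivalent quantifier-free algebraic formula defining $S$ that uses at most $(sd)^{\bigO(nK)}=(\kappa^2)^{\bigO(nq)}=\kappa^{\bigO(nq)}$ polynomials, each of degree at most $d^{\bigO(K)}=\kappa^{\bigO(q)}$.

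Next I would observe that the description complexity of $S$ is, by \cref{clm:equivalence}, at most the description complexity of any algebraic formula defining $S$, which is $\max\{m',d'\}$ where $m'$ is the number of distinct polynomials and $d'$ their maximum degree. From the previous paragraph both $m'$ and $d'$ are bounded by $\kappa^{\bigO(nq)}$ (the degree bound $\kappa^{\bigO(q)}$ is dominated by this). Hence $\slocal\le\max\{m',d'\}\le \kappa^{\bigO(nq)}$, and taking logarithms gives $\log\slocal=\bigO(nq\log\kappa)$, which is exactly the claimed bound.

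I do not anticipate a genuine obstacle here: the corollary is essentially a specialization and bookkeeping exercise built on top of the Basu--Pollack--Roy quantifier elimination theorem. The one point requiring a little care is the role of $\kappa=\max\{m,d\}$: one must make sure to feed \emph{both} the polynomial count and the degree bound from $\Phi$ into Theorem \ref{thm:BPR} as $s\le\kappa$ and $d\le\kappa$, so that the product $sd$ is at most $\kappa^2$ and the exponent $\bigO(nK)=\bigO(nq)$ is applied to it correctly; the logarithm then absorbs the constant exponent and the squaring, leaving the clean statement $\log\slocal=\bigO(nq\log\kappa)$. A second minor point is that Theorem \ref{thm:BPR} as quoted guarantees an \emph{equivalent} quantifier-free formula, and equivalence of formulas means they define the same set, so the set $S$ is indeed defined by that quantifier-free formula and \cref{clm:equivalence} applies verbatim.
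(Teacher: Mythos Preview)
Your proposal is correct and matches the paper's approach: the corollary is stated there as an immediate specialization of \cref{thm:BPR} to the case $\omega=1$, $Q_1=\exists$, $k_1=q$ (hence $K=q$), and you have carried out exactly this bookkeeping, bounding both $s$ and $d$ by $\kappa$ and then taking $\slocal=\max\{m',d'\}\le\kappa^{\bigO(nq)}$.
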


\subsection{Number of sign patterns}
\label{sec:zerro}
By the definition, a set $S\subseteq\RR^n$ is $\slocal$-semialgebraic
if the membership of points $x\in\RR^n$ in $S$ can be determined from
seeing the sign patterns of some fixed sequence of $\slocal$
polynomials of degree at most $\slocal$ on these points $x$. So, a
natural question arises: how many distinct sign patterns a given
sequence of $m$ polynomials on $n$ variables can have?  A trivial
upper bound is $|\{-1,0,+1\}^m|=3^m$.

A fundamental result of Warren~\cite[Theorem~3]{warren} shows that,
when we have more than $n$ polynomials of bounded degree, then the
critical parameter is not their number $m$ but rather the number $n$
of variables.

\begin{thm}[Warren~\cite{warren}]\label{thm:warren}
  No sequence of $m\geq n$ polynomials in $\RR[x_1,\ldots,x_n]$ of
  degree at most $d\geq 1$ can have more than $(8\euler md/n)^n$
  distinct sign patterns.
\end{thm}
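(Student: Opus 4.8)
The plan is to prove this by the classical Morse-theoretic method of Oleinik--Petrovsky, Milnor and Thom, in the refined arrangement-counting form due to Warren, organized roughly in three steps: reduce the count of all sign patterns to the count of \emph{strict} ones; bound the number of connected components of a complement of real hypersurfaces by counting critical points of an auxiliary function; and combine this with a combinatorial estimate of the shape $\sum_{k\le n}\binom mk$.

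First I would reduce to the strict case. Given $P=(p_1,\dots,p_m)$, replace each $p_i$ by the three polynomials $p_i$, $p_i-\varepsilon$, $p_i+\varepsilon$; these still have degree $\le d$ because $d\ge 1$, giving a family $P'$ of at most $3m$ polynomials. A routine argument shows that for a small enough $\varepsilon>0$ each of the finitely many sign patterns in $\{-1,0,+1\}^m$ realized by $P$ on $\RR^n$ is recoverable from some \emph{strict} sign pattern in $\{-1,+1\}^{3m}$ realized by $P'$: a coordinate $\sigma_i=0$ is certified by the pair of strict conditions $p_i-\varepsilon<0$ and $p_i+\varepsilon>0$ at a point perturbed slightly off the zero sets $\{p_j=0\}$. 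Hence it suffices to bound the number of strict sign patterns of at most $3m$ polynomials of degree $\le d$; and since a strict sign-pattern cell is open, while any connected set avoiding all the zero sets carries a single sign pattern, this number is at most the number of connected components of $\RR^n\setminus\bigcup_j\{p_j=0\}$.

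The core of the proof is to bound, for a single real polynomial $q$ of degree $D$ in $n$ variables, the number of connected components of $\{x\in\RR^n:\ q(x)\ne 0\}$. For this I would use the smooth function $G(x)=q(x)^2\,\euler^{-\|x\|^2}$: it is positive exactly on $\{q\ne 0\}$ and tends to $0$ both at infinity and at the boundary of each component, so on every component $C$ it attains its supremum at an interior point $x^\ast$, which is then a critical point of $G$; since $q(x^\ast)\ne 0$, dividing through shows that $x^\ast$ solves the polynomial system $\partial q/\partial x_i=x_i\,q$, $i=1,\dots,n$, whose equations have degree $\le D+1$. When this common zero set is finite, B\'ezout's theorem gives at most $(D+1)^n$ solutions, hence at most $(D+1)^n$ components. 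Feeding this into the Milnor--Thom arrangement-counting refinement --- which bounds the number of cells cut by $m$ degree-$d$ hypersurfaces in $\RR^n$ by roughly $\binom{m}{\le n}$ times $\bigO(d)^n$, generalizing the classical fact that $m$ hyperplanes cut only $\sum_{k\le n}\binom mk$ cells --- and using $\sum_{k\le n}\binom mk\le(\euler m/n)^n$ for $m\ge n$, one obtains a bound of the shape $(\alpha\, md/n)^n$ for an absolute constant $\alpha$; tracking the constants as Warren does pins $\alpha$ down to $8\euler$.

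The step I glossed over is the main obstacle: the critical set of $G$ on a component may be positive-dimensional, so B\'ezout does not apply directly. The standard remedy is a generic infinitesimal perturbation $q\rightsquigarrow q_\delta$ making the system $0$-dimensional, together with a semicontinuity argument showing that for all sufficiently small $\delta>0$ the set $\{q_\delta\ne 0\}$ has at least as many connected components as $\{q\ne 0\}$; making this transversality/limit argument precise --- and, likewise, establishing the arrangement-counting refinement rigorously rather than by analogy with hyperplanes --- is where the real work lies. Once that is in place, squeezing the constant down to exactly $8\euler$ is only bookkeeping, and in any case it is a black box we may import verbatim from Warren~\cite{warren}.
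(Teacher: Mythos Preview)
The paper does not prove Warren's theorem at all; it imports Warren's bound $(4\euler md/n)^n$ on the number of \emph{strict} sign patterns in $\{-1,+1\}^m$ as a black box and only supplies the reduction from general $\{-1,0,+1\}$ patterns to strict ones. That reduction is the ``doubling'' trick: pick a finite witnessing set $X$ for all realized sign patterns, set $\varepsilon=\tfrac12\min\{|p_i(x)|:x\in X,\ p_i(x)\neq 0\}$, and replace each $p_i$ by the pair $p_i-\varepsilon,\ p_i+\varepsilon$; distinct $(-1,0,+1)$ patterns at points of $X$ then yield distinct $(-1,+1)$ patterns of the $2m$ new polynomials, so Warren's strict bound applied to $2m$ polynomials gives exactly $(8\euler md/n)^n$.

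Your proposal goes much further and sketches a genuine proof of the strict bound itself via the Oleinik--Petrovsky/Milnor--Thom critical-point method, which the paper does not attempt. That is fine and is indeed the route Warren took. The one concrete slip is in your reduction step: you triple the family to $p_i,\ p_i-\varepsilon,\ p_i+\varepsilon$, but the copy $p_i$ is redundant --- the pair $p_i\pm\varepsilon$ already distinguishes the three cases $p_i(x)<0,\ =0,\ >0$ at each witnessing point. With $3m$ polynomials you would get $(12\euler md/n)^n$ from the strict bound, not $(8\euler md/n)^n$; to recover the stated constant you must double, not triple. Also, no perturbation of the point $x$ is needed: one fixes $x$ and chooses $\varepsilon$ small relative to the finite witnessing set, exactly as above.
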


What Warren actually proved is the upper bound $(4\euler md/n)^n$ on
the number of sign patterns lying in the set $\{-1,+1\}^n$. But as
observed in \cite{alon-scheinerman,PudlakR92,goldberg}, by
``doubling'' each polynomial, this bound can be easily extended to the
upper bound $(8\euler md/n)^n$ on the number of \emph{all} sign
patterns. To see this, let $p_1,\ldots,p_m$ be a sequence of
polynomials in $\RR[x_1,\ldots,x_n]$ of degree at most $d$. The
sequence can clearly have at most $3^m$ distinct sign patterns. So,
there is a \emph{finite} set $X\subset\RR^n$ of $|X|\leq 3^m$ vectors
witnessing all distinct sign patterns of this sequence. Take
\[
\err=\tfrac{1}{2}\cdot\min\{p_i(x)\colon \mbox{$x\in X$ and
  $p_i(x)\neq 0$}\}\,,
\]
and consider the sequence $p_1-\err,p_1+\err,\ldots,p_m-\err,p_m+\err$
of $2m$ polynomials. By the choice of $\err$, each two distinct
$(-1,0,+1)$ patterns of the original sequence lead to also distinct
$(-1,+1)$ patterns of the new sequence.

We will use the following direct consequence of Warren's theorem.

\begin{cor}\label{cor:warren}
  Let $\Phi_1(x),\ldots,\Phi_m(x)$ be a sequence algebraic formulas on
  the same $n$ variables. If each of these formulas have description
  complexity at most $\slocal$, then
  \[
  \Big|\big(\Phi_1(x),\ldots,\Phi_m(x)\big)\colon x\in\RR^n\Big|\leq
  \left(\frac{8\euler m \slocal^2}{n}\right)^n\,.
  \]
\end{cor}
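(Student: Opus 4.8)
The plan is to reduce the count of distinct tuples $\bigl(\Phi_1(x),\ldots,\Phi_m(x)\bigr)$ to the count of distinct sign patterns of a single sequence of polynomials, and then apply Warren's theorem (\cref{thm:warren}). First I would invoke \cref{clm:normal} to assume, without loss of generality, that each formula $\Phi_j$ uses only atomic predicates of the forms $\pred{p<0}$, $\pred{p=0}$, $\pred{p>0}$; this does not change the description complexity, so each $\Phi_j$ still uses at most $\slocal$ distinct polynomials, each of degree at most $\slocal$. Collecting the polynomials over all $j=1,\ldots,m$ yields a single sequence $P=(p_1,\ldots,p_N)$ of $N\leq m\slocal$ polynomials in $\RR[x_1,\ldots,x_n]$, each of degree at most $\slocal$. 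The key observation is that the value $\Phi_j(x)$ is determined by the sign pattern $\sgn{P(x)}$: knowing the sign of every polynomial occurring in $\Phi_j$ determines the truth value of every atomic predicate in $\Phi_j$, hence the value of the whole Boolean combination. Consequently the map $x\mapsto\sgn{P(x)}$ factors through the map $x\mapsto\bigl(\Phi_1(x),\ldots,\Phi_m(x)\bigr)$, so the number of distinct tuples is at most the number of distinct sign patterns of $P$.

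Next I would bound the number of sign patterns of $P$. If $N\geq n$, \cref{thm:warren} gives at most $(8\euler N\slocal/n)^n\leq (8\euler m\slocal^2/n)^n$ distinct sign patterns, using $N\leq m\slocal$ and $\deg p_i\leq\slocal$; combined with the factoring observation above, this is exactly the claimed bound. The only remaining case is $N<n$, i.e. the degenerate situation where the total number of polynomials is smaller than the number of variables. Here the trivial bound $3^N$ on the number of sign patterns applies; since $N<n$ and we may clearly assume $\slocal\geq 1$ and $m\geq 1$, one checks directly that $3^N<3^n\leq (8\euler m\slocal^2/n)^n$ because $8\euler m\slocal^2/n>3$ already for $n\leq 8\euler$, and for larger $n$ one still has $N<n$ forcing $m\slocal<n$, which combined with $\slocal\geq 1$ makes the ratio comfortably exceed $3$; in any event the bound holds by padding $P$ with $n-N$ copies of a fixed nonzero constant polynomial (which adds no new sign patterns) and then applying \cref{thm:warren}. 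Padding is the cleanest route: it lets a single application of Warren's theorem cover both cases uniformly.

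I do not expect any real obstacle here — the statement is a routine packaging of Warren's bound. The one point that deserves a sentence of care is the reduction itself: one must state explicitly that a Boolean combination of atomic sign predicates is a function of the sign pattern of the underlying polynomial sequence, which is why \cref{clm:normal} is invoked first (so that only sign-type atomic predicates, and not e.g. $\pred{p\neq 0}$ in isolation, need be considered — though $\neq$ would also be fine, being a function of the sign). The degenerate case $m\slocal<n$ is the only mild annoyance, and padding the polynomial sequence to length exactly $\max\{n,m\slocal\}$ with constants disposes of it in one line.
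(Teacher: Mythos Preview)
Your proposal is correct and follows exactly the paper's approach: collect the at most $m\slocal$ polynomials of degree at most $\slocal$ appearing in the $\Phi_j$, observe that the tuple $(\Phi_1(x),\ldots,\Phi_m(x))$ is determined by the sign pattern of this sequence, and apply \cref{thm:warren}. The paper's own proof is a single sentence to this effect and does not even pause on the degenerate case $N<n$; your padding remark is a harmless extra precaution (and in the paper's applications one always has $m\geq n$ anyway).
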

This follows from \cref{thm:warren} because the values of any such
sequence of algebraic formulas only depend on the sign patterns of the
sequence of $\leq m\slocal$ polynomials of degree $\leq \slocal$ used
in these formulas.

\subsection{What functions are not semialgebraic?}

To show what kind of operations we do \emph{not} allow to be used as
gates, let us recall the following well known \emph{necessary}
condition for a set to be semialgebraic.

\begin{clm}\label{clm:necess}
  If a set $S\subseteq\RR^n$ is semialgebraic, then either the
  interior of $S$ is nonempty, or some nonzero polynomial vanishes on
  all points of~$S$.
\end{clm}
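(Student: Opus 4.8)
The plan is to induct on the structure of the semialgebraic set $S$, using the fact that $S$ is built from finitely many unions and intersections of basic sets of the form $\{x : p(x) > 0\}$ (open) and $\{x : p(x) = 0\}$ (a real algebraic hypersurface, unless $p\equiv 0$, in which case the set is all of $\RR^n$). Equivalently, via \cref{clm:equivalence}, write $S$ as the set defined by an algebraic formula; by \cref{clm:normal} I may assume the only atomic predicates are $\pred{p<0}$, $\pred{p=0}$, $\pred{p>0}$. Put $S$ into disjunctive normal form: $S=\bigcup_{j} C_j$ where each cell $C_j$ is an intersection of finitely many sets of the three basic types.

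First I would handle a single cell $C=\bigcap_i \{p_i \Diam_i\, 0\}$. If some atomic predicate in $C$ is an equality $\pred{p=0}$ with $p\not\equiv 0$, then that single nonzero polynomial $p$ vanishes on all of $C$, so the second alternative of the claim already holds for $C$ (and hence the product of these polynomials over all cells will handle $S$ — see below). Otherwise, every equality predicate in $C$ uses the zero polynomial and is vacuous, so $C$ is a finite intersection of open halfspaces $\{p_i > 0\}$ and $\{p_i < 0\}$ (rewrite $p_i<0$ as $-p_i>0$). Such a finite intersection of open sets is open; and it is either empty or nonempty. If nonempty, it has nonempty interior (being open), so the first alternative holds for $C$.

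Now assemble the cells. Let $\mathcal{J}_0$ be the indices $j$ for which $C_j$ falls into the "equality" case — so there is a nonzero polynomial $q_j$ vanishing on all of $C_j$ — and let $\mathcal{J}_1$ be the rest, for which $C_j$ is open. If $\mathcal{J}_1$ contains some index $j$ with $C_j\neq\emptyset$, then $C_j\subseteq S$ is a nonempty open set, so $S$ has nonempty interior and we are done. Otherwise every $C_j$ with $j\in\mathcal{J}_1$ is empty, so $S=\bigcup_{j\in\mathcal{J}_0} C_j$, and the single nonzero polynomial $q:=\prod_{j\in\mathcal{J}_0} q_j$ (nonzero since $\RR[x_1,\ldots,x_n]$ is an integral domain) vanishes on every $C_j$ with $j\in\mathcal{J}_0$, hence on all of $S$. (If $\mathcal{J}_0=\emptyset$ as well, then $S=\emptyset$ and any nonzero polynomial vanishes vacuously on $S$.) That completes the dichotomy.

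The only subtle point is the case analysis for a single cell, and in particular the observation that once a cell contains a nontrivial equality constraint it is "thin" (captured by a nonzero polynomial), while a cell with only strict-inequality constraints is open and hence either empty or full-dimensional; there is no genuine obstacle here, it is a routine structural induction. One should just be a little careful that "nonzero polynomial" in the statement means a single polynomial, which is why I take the product $q=\prod q_j$ at the end rather than working cell-by-cell, and careful that taking products stays within polynomials that are not identically zero.
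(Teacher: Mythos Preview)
Your argument is correct and follows essentially the same route as the paper: write $S$ as a finite union of cells, each an intersection of basic constraints; if some cell has only strict-inequality constraints and is nonempty, it is open and gives nonempty interior; otherwise every nonempty cell carries a nontrivial equality constraint, and the product of the corresponding polynomials vanishes on all of $S$. The only cosmetic differences are that the paper combines all equalities in a cell into a single one via a sum of squares (you simply pick one), and a small terminological slip on your part (``open halfspaces'' should read ``open sets $\{p_i>0\}$'', since the $p_i$ need not be linear).
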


\begin{proof}
  By observing that a system of equations $p_1(x)=0,\ldots, p_m(x)=0$
  is equivalent to one equation $p_1(x)^2+\cdots+p_m(x)^2=0$, and that
  $p(x)<0$ is the same as $-p(x)>0$, we have that a set
  $S\subseteq\RR^n$ is semialgebraic if and only if it is a finite
  union $S=S_1\cup S_2\cup \cdots\cup S_m$ of (nonempty) sets of the
  form $S_i=\{x\in\RR^n\colon p_i(x)=0,
  q_{i,1}(x)>0,\ldots,q_{i,k_i}(x)>0\}$, where $p_i$ and $q_{i,j}$ are
  real polynomials. So, if some $p_i$ is the zero polynomial, then $S$
  has a nonempty interior. Otherwise, $p_1\cdot p_2\cdots p_m$ is a
  nonzero polynomial vanishing on all points of~$S$.
\end{proof}

\begin{ex}\label{ex:rounding}
  \Cref{clm:necess} can be used to show that some functions are
  \emph{not} semialgebraic.  Consider, for example, the rounding
  function $f(x)=\lfloor x\rfloor$. That is, for a real number
  $x\in\RR$, $f(x)$ is the largest integer $n$ such that $n\leq x$.
  The interior of the graph $S=\{(x,y)\in \RR\times \ZZ\colon \lfloor
  x\rfloor=y\}$ of $\lfloor x\rfloor$ is clearly empty, because $y$
  can only take integer values. But the only polynomial
  $p(x,y)=\sum_{i=0}^d p_i(y)\cdot x^i$ vanishing on all points of $S$
  must be the zero polynomial. Indeed, since $p$ vanishes on $S$, the
  polynomial $p(x,n)$ has an infinite (and, hence, larger than $d$)
  number of roots $x\in[n,n+1)$, for every integer $n$; so, $p_i(n)=0$
  for all $i$. Since this holds for infinitely many numbers $n$, all
  polynomials $p_0,p_1,\ldots,p_d$ must be zero polynomials. So, the
  rounding function is not semialgebraic.
\end{ex}

\section{The route to derandomization}
\label{sec:enroute}

In our derandomization of probabilistic circuits, the following
parameters of (finite or infinite) Boolean matrices $M:A\times
B\to\{0,1\}$ will be crucial.
\begin{itemize}
\item The matrix $M$ has the $m$-\emph{majority property} if there is
  a sequence $b_1,\ldots,b_m\in B$ of not necessarily distinct columns
  of $M$ such that
  \[
  M[a,b_1]+\cdots+ M[a,b_m] > m/2
  \]
  holds for every row $a\in A$.

\item The matrix $M$ is \emph{probabilistically dense} if there exists
  a probability distribution $\pprob:B\to[0,1]$ on the set of columns
  such that
  \[
  \prob{b\in B\colon M[a,b]=1}\geq 2/3
  \]
  holds for every row $a\in A$.  Note that the mere \emph{existence}
  of at least one probability distribution with this property is
  sufficient. Thus, density is a property of matrices, not of
  probability distributions on their columns.

\item The \emph{growth function} of $M$ is the function
  $\Pi_M:\NN\to\NN$ whose value $\Pi_{M}(m)$ for each integer $m\geq
  1$ is the maximum
  \[
  \Pi_M(m)=\max_{b_1,\ldots,b_m}\big|\left\{\big(M[a,b_1]\ldots,M[a,b_m]\big)\colon
    a\in A\right\} \big|
  \]
  over all choices of $m$ columns of $M$, of the number of distinct
  $0$-$1$ patterns from $\{0,1\}^m$ appearing as rows of $M$ in these
  columns. Note that $1\leq \Pi_M(m)\leq 2^m$ for every $m\geq 1$.
  Let us mention that the maximum number $m$ (if there is one) for which $\Pi(m)=2^m$
  holds is know as the \emph{Vapnik--Chervonenkis dimension} of the matrix~$M$.

\item A matrix $M:\RR^n\times \RR^k\to\{0,1\}$ is \emph{semialgebraic}
  if the set $ S=\{(x,y)\in\RR^{n+k}\colon M[x,y]=1\} $ of its
  $1$-entries is such. The \emph{description complexity} of a column
  $r\in\RR^k$ is the description complexity of the set
  $S_r=\{x\in\RR^n\colon M[x,r]=1\}$ of its $1$-entries.

\end{itemize}
Given a probabilistic circuit $\F(x,\xr)$ computing a given function
$f:\RR^n\to\RR$, the following two Boolean matrices naturally arise,
where $k$ is the number of random input variables
$\xr=(\xr_1,\ldots,\xr_k)$.

\begin{itemize}
\item The \emph{graph matrix} of $\F(x,\xr)$ is the Boolean matrix
  $M_F\colon \RR^{n+1}\times \RR^k\to\{0,1\}$ with entries defined by:
  \[
  \mbox{$M_F[(x,y),\rr]=1$ if and only if $\F(x,\rr)=y$.}
  \]
  The graph matrix $M_F$ gives us a full information about all
  functions computed by the deterministic circuits $\F(x,\rr)$ obtained from
  $\F(x,\xr)$ by setting the random inputs $\xr$ of $\F$ to all
  possible values $\rr\in\RR^k$.

\item The \emph{correctness matrix} of $\F(x,\xr)$ with respect to the
  given function $f:\RR^n\to\RR$ is the Boolean matrix $M:\RR^n\times
  \RR^k\to\{0,1\}$ with entries defined by:
  \[
  \mbox{$M[x,\rr]=1$ if and only if $\F(x,\rr)=f(x)$.}
  \]
  Note that $M$ is a \emph{submatrix} of the graph matrix $M_F$: just
  remove all rows of $M_F$ labeled by pairs $(x,y)$ such that $y\neq
  f(x)$, and replace the label $(x,y)$ of each remaining row by~$x$.
\end{itemize}
The relation of the majority property of matrices to the
derandomization of probabilistic circuits is quite natural. Suppose
that a probabilistic circuit $\F(x,\xr)$ computes the correct values
$f(x)$ of a given function $f$ with probability $\geq 2/3$.  Then the
correctness matrix $M$ is probabilistically dense \emph{per se}.  On
the other hand, if the matrix $M$ has the $m$-majority property, then
there are $m$ (not necessarily distinct) assignments
$\rr_1,\ldots,\rr_m\in\RR^k$ to the random input variables such that,
for every input $x\in\RR^n$, the \emph{deterministic} circuit
$F(x)=\maj(\F(x,\rr_1),\ldots,\F(x,\rr_m))$ outputs the correct
value~$f(x)$.

Thus, the derandomization of probabilistic circuits boils down to
showing that their correctness matrices have the $m$-majority property
for possibly small values of~$m$. We will show this in the following
three steps, where $\F(x,\xr)$ is a probabilistic circuit with $n$
deterministic input variables, $k$ random input variables, and $\size$
gates.

Let $\slocal$ be the minimal number $t$ such that for every
$r\in\RR^k$, the function $\F_r:\RR^n\to\RR$ computed by the
deterministic circuit $\F_r(x)=\F(x,r)$ is $\slocal$-semialgebraic.

\begin{description}
\item[Step~1] (\cref{lem:growth} in \cref{sec:growth}) The growth
  function $\Pi_{M_F}(m)$ of the graph matrix $M_F$ of $\F$ satisfies
  \[
  \ln \Pi_{M_F}(m)\leq n\ln (8\euler m \slocal^2/n)=2n\ln\slocal +
  n\ln (8\euler m/n)\,.
  \]

\item[Step~2] (\cref{lem:VC} in \cref{sec:maj}) There is an absolute
  constant $\const>0$ such that every probabilistically dense
  submatrix of $M_F$ has the $m$-majority property for any $m\geq
  2/\const$ satisfying
  \[
  \ln \Pi_{M_F}(m)\leq \const m\,.
  \]

\item[Step~3] (\cref{lem:descrcomplexity} in \cref{sec:quantifiers})
  If the description complexity of each single gate of $\F$ does not
  exceed $\dgates$, then
  \[
  \ln \slocal =\bigO(n\size\ln \dgates\size)\,.
  \]
\end{description}
Now, if the probabilistic circuit $\F(x,\xr)$ computes a given
function $f:\RR^n\to\RR$, then the correctness matrix $M$ of this
circuit with respect to the function $f$, \emph{is} a
probabilistically dense \emph{per se}. Also, as we have shown right
after its definition, the correctness matrix $M$ is a \emph{submatrix}
of the graph matrix $M_F$ of the circuit $F$.  Thus, by Steps~1--3,
the matrix $M$ has the $m$-majority property for any $m$ satisfying
the inequality $2n\ln\slocal + n\ln (8\euler m/n)\leq \const m$, where
$2n\ln\slocal=\bigO(n^2\size\log\dgates\size)$. This inequality is
satisfied by taking $m=Cn^2\size\log\dgates\size$ for a sufficiently
large (but absolute) constant~$C$.

The case of \emph{approximating} (not necessarily exactly computing)
probabilistic circuits requires an additional idea. The reason is that
then the ``approximate correctness'' matrix $M$ of the circuit $\F$
approximating the function $f$ is \emph{not} necessarily a submatrix
of the graph matrix $M_F$ of the circuit $F$. For example, if
$\F(x,\rr)=z$ for some $z$ such that $z\neq f(x)$ but $z\rel f(x)$,
then $M_F[(x,f(x)),\rr]=0$ but the corresponding entry $(x,\rr)$ in
the ``approximate correctness'' matrix $M$ will then be
$M[x,\rr]=1$. This is why in \cref{thm:main2}, unlike in
\cref{thm:main1}, also the description complexities $\relcompl$ and
$\sslocal{f}$ of the approximation relation $\rrel$ and of the
approximated function $f$ come to play.

We now turn to detailed proofs.

\section{Step 1: Growth functions from description complexity}
\label{sec:growth}

\begin{lem}
  \label{lem:growth}
  Let $M:\RR^n\times \RR^k\to\{0,1\}$ be a Boolean matrix. If the
  description complexity of every column of $M$ does not exceed
  $\slocal$, then for all $m\geq n$, the growth function $\Pi_M(m)$ of
  $M$ satisfies
  \[
  \Pi_M(m)\leq \left(\frac{8\euler m \slocal^2}{n}\right)^n\,.
  \]
\end{lem}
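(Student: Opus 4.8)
The plan is to reduce the growth function bound directly to Warren's theorem via Corollary~\ref{cor:warren}. Fix any $m$ columns $r_1,\ldots,r_m\in\RR^k$ of $M$. By hypothesis, each column $r_j$ has description complexity at most $\slocal$, which means the set $S_{r_j}=\{x\in\RR^n\colon M[x,r_j]=1\}$ is $\slocal$-semialgebraic. By \cref{clm:equivalence}, each such set $S_{r_j}$ is therefore defined by an algebraic formula $\Phi_j(x)$ of description complexity at most $\slocal$; that is, $M[x,r_j]=\Phi_j(x)$ for every $x\in\RR^n$.

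With this translation in hand, the row pattern $\big(M[x,r_1],\ldots,M[x,r_m]\big)$ appearing in these $m$ columns is exactly the tuple of values $\big(\Phi_1(x),\ldots,\Phi_m(x)\big)$. So the number of distinct such row patterns over all $x\in\RR^n$ is precisely the quantity bounded by \cref{cor:warren}, namely at most $(8\euler m\slocal^2/n)^n$. Since this holds for every choice of $m$ columns, taking the maximum over all such choices gives $\Pi_M(m)\le (8\euler m\slocal^2/n)^n$, as claimed. The hypothesis $m\ge n$ is exactly what is needed to invoke Warren's theorem (via \cref{cor:warren}), which requires at least as many polynomials as variables.

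There is really no serious obstacle here: the lemma is a bookkeeping repackaging of \cref{cor:warren}, and the only mild point to be careful about is the direction of \cref{clm:equivalence} being used — we need that bounded description complexity of a \emph{set} yields a defining algebraic formula of the same description complexity (the $(\Rightarrow)$ direction of that claim), which then feeds the count of sign patterns of the at most $m\slocal$ polynomials of degree at most $\slocal$ underlying the $\Phi_j$. One should also note in passing that the degree parameter and the count parameter are both bounded by the single number $\slocal$, so the relevant polynomial system has $\le m\slocal$ polynomials of degree $\le \slocal$, which is why $\slocal^2$ (not $\slocal$) appears in the bound.
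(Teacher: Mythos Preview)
Your proof is correct and follows essentially the same route as the paper: fix $m$ columns, convert each column's $\slocal$-semialgebraic description into an algebraic formula $\Phi_j$ of description complexity at most $\slocal$, and then apply \cref{cor:warren} to bound the number of value patterns $(\Phi_1(x),\ldots,\Phi_m(x))$. Your additional remarks on the $m\ge n$ hypothesis and the origin of the $\slocal^2$ factor are accurate and add helpful context.
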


\begin{proof}
  Take arbitrary $m$ columns $r_1,\ldots,r_m\in \RR^k$ of $M$.  Since
  every column of $M$ is $\slocal$-semialge\-bra\-ic, for every
  $i=1,\ldots,m$ there is an algebraic formula $\Phi_i(x)$ which uses
  at most $\slocal$ distinct polynomials of degree at most $\slocal$,
  and satisfies $M[x,r_i]=\Phi_i(x)$ for all $x\in \RR^n$.  So,
  $\Pi_M(m)$ is at most the number of distinct $0$-$1$ strings
  $(\Phi_1(x),\ldots,\Phi_m(x)\big)$ when $x$ ranges over the entire
  set $\RR^n$ of row labels of the matrix $M$. By \cref{cor:warren},
  the number of such strings is at most $\left(8\euler m
    \slocal^2/n\right)^n$.
\end{proof}

\begin{rem}\label{rem:columns}
  In the case when the description complexity of the \emph{entire}
  matrix $M$ is bounded by $\slocal$, similar upper bounds on the
  growth function were already derived by
  Goldberg and Jerrum~\cite{goldberg} from Warren's theorem, and by Ben-David and
  Lindenbaum~\cite{lindenbaum} from a similar result of Milnor~\cite{milnor}. Our observation is that the same upper
  bound actually holds when only the description complexities of
  \emph{individual} columns are bounded by $\slocal$. In the context
  of derandomization, this will alow us to make the blowup in size of
  derandomized circuits \emph{independent} on the number $k$ of random
  input variables (note that $k$ may be as large as the size of the
  probabilistic circuits).

  This observation also extends the bound to a properly larger class
  of matrices.  The point is that the description complexity of
  individual columns may be much smaller than that of the entire
  matrix $M$. Even more: the former can even then be small, when the
  entire matrix is \emph{not} semialgebraic at all (its description
  complexity is unbounded).  As a trivial example, consider the matrix
  $M:\RR\times\RR\to\{0,1\}$ whose entries are defined by: $M[x,r]=1$
  if and only if $x=\lfloor r\rfloor$. The matrix is not semialgebraic
  (see \cref{ex:rounding}), but for every \emph{fixed} column
  $\rr\in\RR$, the set of $1$-entries of the $\rr$th column is defined
  by a semialgebraic formula $\pred{x-c=0}$, where $c=\lfloor
  \rr\rfloor$ is a (fixed) integer. Hence, the description complexity
  of each individual column is~$1$.
\end{rem}

\section{Step 2: Majority property from growth functions}
\label{sec:maj}

As we mentioned in \cref{sec:enroute}, the derandomization of
probabilistic circuits boils down to showing that their correctness
matrices have the $m$-majority property for possibly small values
of~$m$.

\subsection{Finite majority rule}
The following ``folklore'' observation shows that, if the number of
rows is \emph{finite}, then the $m$-majority property holds already
for $m$ about the logarithm of this number.

\begin{lem}[Finite majority rule]\label{clm:fin-maj}
  Every probabilistically dense Boolean matrix $M:A\times B\to\{0,1\}$
  with a finite number $|A|$ of rows has the $m$-majority property for
  $m=O(\log|A|)$. In particular, at least one column of $M$ has more
  than $|A|/2$ ones.
\end{lem}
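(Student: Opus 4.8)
The plan is to invoke the standard probabilistic-method argument behind Adleman's trick, using the hypothesis that $M$ is probabilistically dense together with the finiteness of $A$. First I would fix a probability distribution $\pprob$ on the columns $B$ witnessing density, so that $\prob{b\colon M[a,b]=1}\ge 2/3$ for every row $a\in A$. The key idea is to draw $m$ columns $b_1,\ldots,b_m$ independently at random according to $\pprob$ and show that, with positive probability, the resulting sequence satisfies $M[a,b_1]+\cdots+M[a,b_m]>m/2$ simultaneously for all $a\in A$. For a fixed row $a$, each indicator $M[a,b_j]$ is an independent Bernoulli random variable with mean $\ge 2/3$, so the expected number of ones in row $a$ is $\ge 2m/3$; by a Chernoff bound, the probability that this count drops to $\le m/2$ is at most $\euler^{-\const m}$ for some absolute constant $\const>0$ (any $\const<1/36$ works, since we are deviating by a relative amount $\ge 1/4$ from the mean).

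Next I would take a union bound over the $|A|$ rows: the probability that \emph{some} row has $\le m/2$ ones is at most $|A|\cdot\euler^{-\const m}$. As soon as $m>\const^{-1}\ln|A|$, this quantity is strictly less than $1$, so there exists a concrete choice of columns $b_1,\ldots,b_m$ for which every row has strictly more than $m/2$ ones. That is exactly the $m$-majority property, and it holds for $m=O(\log|A|)$ as claimed. For the ``in particular'' statement, note that a sequence of $m$ columns each having $\le|A|/2$ ones would give a total of at most $m|A|/2$ ones across all chosen columns and rows, whereas the $m$-majority property forces the total to exceed $m|A|/2$; hence at least one column among $b_1,\ldots,b_m$ (and therefore at least one column of $M$) must have more than $|A|/2$ ones. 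Alternatively, one can just apply the first part with $m=1$ after noting that a single random column already has expected density $\ge 2/3>1/2$ of its entries equal to $1$ on average over rows, so some column beats $|A|/2$.

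I expect no serious obstacle here: this is the classical Adleman/Bennett--Gill argument, and the only thing to be slightly careful about is bookkeeping the constant in the Chernoff bound so that the stated $m=O(\log|A|)$ (with the absolute constant hidden in the $O$) is clean and matches the threshold $2/3$ used throughout the paper. The one place a reader might want a remark is that density is only an \emph{existential} hypothesis about distributions on columns, but since we fix one such distribution at the outset, this causes no difficulty. The main conceptual point — that finiteness of the row set is what makes the union bound affordable — is exactly the obstacle that the later sections (via growth functions and Warren's theorem) are designed to circumvent when $A$ is infinite.
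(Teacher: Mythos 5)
Your proposal is the classical Adleman/Bennett--Gill argument---fix a witnessing distribution on columns, draw $m$ independent samples, apply Chernoff to each row and a union bound over the finite row set, then deduce the ``in particular'' claim by double-counting---which is exactly the route the paper takes. The minor differences are only in constant bookkeeping (the paper sets $t=m/6$ and takes $m=\lceil 27\log|A|\rceil$) and in your added alternative averaging argument for the last claim; nothing substantive diverges.
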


\begin{proof}
  Since the matrix $M$ is probabilistically dense, we know that there
  is a probability distribution $\pprob:B\to[0,1]$ such that
  $\prob{b\in B\colon M[a,b]=1}\geq 2/3$ holds for every row $a\in A$.
  Let $b_1,\ldots,b_m$ be $m$ independent copies of~$b$.  The expected
  value $\mu$ of the sum $\rv=M[a,b_1]+\cdots+M[a,b_m]$ is at least
  $2m/3$.  Thus, for $t:=m/6$, the event $\rv\leq m/2$ implies the
  event $\rv\leq \mu - t$.  By the Chernoff--Hoeffding bound (see, for
  example,~\cite[Theorem~1.1]{dubhashi}), the probability of the
  latter event is at most $\euler^{-2t^2/m}=\euler^{-2(m/6)^2/m}<
  \euler^{-m/27}$.  By taking $m=\lceil 27\log|A|\rceil$, this
  probability is strictly smaller than $1/|A|$. Since we only have
  $|A|$ rows, the union bound implies that the matrix $M$ has the
  $m$-majority property for this value of~$m$. This shows the first
  claim. The second claim follows by double-counting: the number of
  ones in the corresponding $m$ columns of $M$ is $>(m/2)|A|$. So, at
  least one of these columns must have $>|A|/2$ ones.
\end{proof}

\Cref{clm:fin-maj} allows us to derandomize probabilistic circuits
working over any \emph{finite} domain (including Boolean circuits): if
the probabilistic circuit has size $s$, then the obtained
deterministic circuit (with one additional majority vote operation as
the output gate) will have size $O(ns)$.  We are, however, interested
in circuits simulating dynamic programming algorithms. These circuits
work over \emph{infinite} (or even uncountable) domains like $\NN$,
$\ZZ$, $\QQ$ or $\RR$; elements of the domain are possible weights of
items in optimization problems. So, in this case, the finite majority
rule is of no use at all.

\subsection{Uniform convergence in probability}
Fortunately, results from the statistical learning theory come to
rescue. The classical \emph{uniform convergence in probability}
theorem of Vapnik and Chervonenkis~\cite{vapnik} ensures the majority
property also for boolean matrices $M$ with an infinite (and even uncountable) numbers of rows, as long as their growth functions $\Pi_M(m)$ grow not too fast
(\cref{lem:VC} below).

Let $H$ be a class of $0$-$1$ functions $h:X\to\{0,1\}$ on a set $X$,
and $\pprob: X\to[0,1]$ be a probability distribution on the set
$X$. Draw independently (with repetitions) a sequence
$\xx=(\xx_1,\ldots,\xx_m)$ of samples $\xx_i\in X$ according to this
probability distribution. The \emph{empirical frequency} of $h\in H$
on $\xx$ is the average value
\[
\ef{h}{\xx} :=\frac{h(\xx_1)+\cdots+ h(\xx_m)}{m}\,,
\]
while the \emph{theoretical probability} of the function $h$ itself is
its expected value
\[
\tp{h}:= \prob{x\in X\colon h(x)=1}\,.
\]
Every function $h:X\to\{0,1\}$ defines the event $A=\{x\in X\colon
h(x)=1\}$.  The law of large numbers says that, for each single event,
its empirical frequency in a sequence of independent trials converges
(with high probability) to its theoretical probability. We, however,
are now interested not in a single event but in a whole family of
events. We would like to know whether the empirical frequency of every
event in the family converges to its theoretical probability
\emph{simultaneously}.  This is the content of so-called ``uniform
convergence in probability'' results in statistics.

In these results (including \cref{thm:VC} below), a natural \emph{permissibility}
assumption is made when the class $H$ is \emph{uncountable}. The permissibility of $H$ means that the individual functions $h\in H$ as well as the supremum function $\dist{x}=\sup_{h\in H} |\ef{h}{\xx} -\tp{h}|$ are
  measurable. That is, we need that for a random sample $\xx\in X^m$,
  $\dist{\xx}$ is a random variable.

  \begin{rem}\label{rem:permis}
   In our applications, the classes
  $H$ will correspond to the rows of graph matrices of semialgebraic
  circuits.  So, each class $H$ will consist of $0$-$1$ valued
  \emph{semialgebraic} functions $h:X\to\{0,1\}$, where $X=\RR^k$ for
  some finite $k\geq 1$, and will be of the form
  $H=\{\indf(t,\cdot)\colon t\in\RR^n\}$ for a finite $n\geq 1$, where
  the indexing function $\indf:\RR^n\times X\to\{0,1\}$ (the matrix
  itself) is also semialgebraic.  Such classes $H$ are permissible;
  see \cref{app:permissible} for more details. Let us also note that one can also avoid the ``permissibility issue'' by only considering circuits working over
  all rational (instead of all real) numbers: then the classes $H$ of
  functions corresponding to the rows of graph matrices of such
  circuits are \emph{countable} and, hence, permissible.
\end{rem}

The \emph{growth function} of the family $H$ is the function
$\Pi_H:\NN\to\NN$ whose value $\Pi_{H}(m)$ for each integer $m\geq 1$
is the maximum,
\[
\Pi_H(m)=\max_{x_1,\ldots,x_m}\big|\left\{\big(h(x_1),\ldots,h(x_m)\big)\colon
  h\in H\right\} \big|
\]
over all sequences $x_1,\ldots,x_m$ of (not necessarily distinct)
points in $X$, of the number of distinct $0$-$1$ patterns from
$\{0,1\}^m$ produced by the functions $h\in H$ on these points. Note
that we always have $1\leq \Pi_H(m)\leq 2^m$.

The \emph{uniform convergence theorem} of Vapnik and
Chervonenkis~\cite{vapnik} states that if the class $H$ is ``simple''
in that $\Pi_H(m)$ grows not too fast, and if we draw samples
independently (with replacement) from $X$ according to any
distribution, then with high probability, the empirical frequency
$\ef{h}{\xx} $ of \emph{every} function $h\in H$ will be close to the
theoretical probability $\tp{h}$ of~$h$.

Note that even if our family $H$ of functions $h:X\to\{0,1\}$ is
\emph{infinite} or even uncountable, there is only a \emph{finite}
number $\Pi_{H}(m)$ of their classes such that the functions lying
within the same class take the \emph{same} values on \emph{all} $m$
sampled points $\xx_1,\ldots,\xx_m\in X$. By combining this simple
observation with insightful ideas, Vapnik and
Chervonenkis~\cite{vapnik} proved the following result.

\begin{thm}[Vapnik and Chervonenkis~\cite{vapnik}]\label{thm:VC}
  Let $H$ be a permissible class of $0$-$1$ functions $h:X\to\{0,1\}$
  on a set $X$, and $\pprob: X\to[0,1]$ a probability distribution on
  the set $X$. Let $\err >0$, and draw independently (with
  repetitions) a sequence $\xx=(\xx_1,\ldots,\xx_m)$ of $m\geq
  2/\err^2$ samples $\xx_i\in X$ according to this probability
  distribution. Then
  \begin{equation}\label{eq:vc0}
    \prob{\exists h\in H\colon \left|\ef{h}{\xx}
        -\tp{h}\right|>\err} \leq 4\cdot\Pi_{H}(2m)\cdot
    \euler^{-\epsilon^2 m/8}\,.
  \end{equation}
\end{thm}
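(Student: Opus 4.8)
The plan is to prove \cref{thm:VC} by the classical three-move argument of Vapnik and Chervonenkis: a \emph{symmetrization} step replacing the unknown probabilities $\tp{h}$ by empirical frequencies on an independent ``ghost'' sample; a \emph{randomization} (random-swap) step conditioned on the pooled sample, which turns the supremum into a Rademacher average; and finally a \emph{union bound} over the at most $\Pi_H(2m)$ dichotomies on the pooled points, combined with Hoeffding's inequality.

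\emph{Symmetrization.} Let $\xx'=(\xx'_1,\ldots,\xx'_m)$ be a second sample drawn from $\pprob$ independently of $\xx$, and write $A$ for the event $\{\exists h\in H\colon |\ef{h}{\xx}-\tp{h}|>\err\}$. On $A$, permissibility lets us select measurably a witness $h^\ast=h^\ast(\xx)$ with $|\ef{h^\ast}{\xx}-\tp{h^\ast}|>\err$. Conditioning on $\xx$, the quantity $\ef{h^\ast}{\xx'}$ is an average of $m$ i.i.d.\ Bernoulli$(\tp{h^\ast})$ variables, so its variance is at most $1/(4m)$; Chebyshev's inequality together with the hypothesis $m\ge 2/\err^2$ gives $\prob{|\ef{h^\ast}{\xx'}-\tp{h^\ast}|\le\err/2\mid\xx}\ge 1/2$ on $A$. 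On the intersection of $A$ with this last event, the triangle inequality yields $|\ef{h^\ast}{\xx}-\ef{h^\ast}{\xx'}|>\err-\err/2=\err/2$, so that intersection is contained in $B:=\{\exists h\in H\colon |\ef{h}{\xx}-\ef{h}{\xx'}|>\err/2\}$. Integrating over $\xx$ gives $\prob{A}\le 2\,\prob{B}$.

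\emph{Randomization and union bound.} Let $\boldsymbol{z}$ denote the pooled sample recorded as the $m$ unordered pairs $\{\xx_i,\xx'_i\}$. Conditioned on $\boldsymbol{z}$, the choice of which element of each pair plays the role of $\xx_i$ (and which of $\xx'_i$) is uniform and independent over $i=1,\ldots,m$, because $\xx_i$ and $\xx'_i$ are i.i.d. Hence, for every fixed $h$, conditionally on $\boldsymbol{z}$ the difference $\ef{h}{\xx}-\ef{h}{\xx'}$ has the law of $\tfrac{1}{m}\sum_{i=1}^m\sigma_i\bigl(h(\xx_i)-h(\xx'_i)\bigr)$ with i.i.d.\ Rademacher signs $\sigma_i$ — a sum of $m$ independent, mean-zero terms each lying in $[-1/m,1/m]$. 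As $h$ ranges over $H$, the restriction of $h$ to the $2m$ points underlying $\boldsymbol{z}$ takes at most $\Pi_H(2m)$ distinct values, so it suffices to bound the deviation for one $h$ per value; for each such $h$, Hoeffding's inequality gives $\prob{|\ef{h}{\xx}-\ef{h}{\xx'}|>\err/2\mid\boldsymbol{z}}\le 2\exp(-\err^2 m/8)$. A union bound over the $\le\Pi_H(2m)$ representatives yields $\prob{B\mid\boldsymbol{z}}\le 2\,\Pi_H(2m)\,\exp(-\err^2 m/8)$; taking expectation over $\boldsymbol{z}$ and combining with $\prob{A}\le 2\,\prob{B}$ gives $\prob{A}\le 4\,\Pi_H(2m)\,\exp(-\err^2 m/8)$, which is \cref{eq:vc0}.

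\emph{Main obstacle.} The substantive points are the two measurability issues — the measurable selection of the witness $h^\ast$ in the symmetrization step, and ensuring that $B$ (a supremum over the possibly uncountable class $H$) is an event — which is precisely where the permissibility hypothesis on $H$ enters; and checking that conditioning on the pooled sample genuinely produces a Rademacher average, so that Hoeffding applies with the constants giving the factor $\exp(-\err^2 m/8)$. With these in place, the remaining estimates (Chebyshev, Hoeffding, the union bound) are routine.
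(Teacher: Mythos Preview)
The paper does not prove \cref{thm:VC} at all: it is stated as the classical result of Vapnik and Chervonenkis and simply cited from~\cite{vapnik}, with no argument given beyond the informal remark that the growth function bounds the number of ``equivalence classes'' of functions on a finite sample. Your proposal is therefore not being compared against a proof in the paper but against the literature; what you wrote is precisely the standard three-step Vapnik--Chervonenkis argument (symmetrization via a ghost sample and Chebyshev, random swaps conditioned on the pooled sample, union bound over $\Pi_H(2m)$ dichotomies plus Hoeffding), and the constants you obtain match the stated bound exactly.

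One minor comment: the ``measurable selection of a witness $h^\ast$'' is not how the symmetrization step is usually carried out, and the permissibility hypothesis as formulated in the paper (measurability of the individual $h$'s and of the supremum $\sup_{h}|\ef{h}{\xx}-\tp{h}|$) does not by itself guarantee such a selection. The cleaner route is to avoid selection entirely: write $\prob{A}=\mathbb{E}\,\mathbbm{1}_A$ and bound $\mathbbm{1}_A\le 2\,\prob{B\mid \xx}$ pointwise on $A$ by arguing, for each fixed $\xx$ in $A$, with \emph{any} $h$ witnessing the deviation (no measurability of the choice is needed because one only uses the \emph{existence} of such an $h$ to lower-bound a conditional probability). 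This is a presentational point, not a genuine gap; the rest of your argument is correct.
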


In particular, for every constant $0<\err\leq 1$ there is a constant
$\const>0$ with the following property: if the sample size $m$
satisfies
\begin{equation}\label{eq:vc1}
  m\geq 2/\const\ \ \mbox{ and }\ \  \Pi_H(m) \leq \euler^{\const m}\,,
\end{equation}
then there exists a sequence $x=(x_1,\ldots,x_m)$ of (not necessarily
distinct) points in $X$ such that $\ef{h}{x}\geq \tp{h} -\err$, that
is, the inequality
\begin{equation}\label{eq:vc2}
  h(x_1)+\cdots+h(x_m)\geq (\tp{h}-\err)m
\end{equation}
holds for \emph{all} functions $h\in H$.

We now turn back to the language of matrices.  Let $M:T\times
X\to\{0,1\}$ be a Boolean matrix.  Each row $t\in T$ of $M$ gives us a
$0$-$1$ valued function $h_t:X\to\{0,1\}$ whose values are
$h_t(x)=M[t,x]$. We say that the matrix $M$ is \emph{permissible} if
the class $H=\{h_t\colon t\in T\}$ of functions corresponding to its
rows is permissible.

Recall that the \emph{growth function} $\Pi_M(m)$ of the matrix $M$ is
the maximum, over all choices of up to $m$ columns, of the number of
distinct $0$-$1$ patterns from $\{0,1\}^m$ appearing as rows in these
columns.  Note that $\Pi_M(m)$ coincides with the growth function
$\Pi_H(m)$ of the class of functions~$H$ defined by the rows of~$M$.
In what follows, under a \emph{submatrix} of a matrix $M$ we will
understand a submatrix obtained by removing some rows of $M$; that is,
we do not remove columns.

\begin{lem}
  \label{lem:VC}
  There is an absolute constant $\const>0$ for which the following
  holds.  If a Boolean matrix $M$ is permissible, then every
  probabilistically dense submatrix of $M$ has the $m$-majority
  property for any integer $m\geq 2/\const$ satisfying $\Pi_M(m) \leq
  \euler^{\const m}$.
\end{lem}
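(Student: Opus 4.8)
The plan is to deduce \cref{lem:VC} from the Vapnik--Chervonenkis uniform convergence theorem (\cref{thm:VC}) together with the density hypothesis, essentially by the same bookkeeping that leads from \cref{eq:vc0} to \cref{eq:vc2}. Fix the error parameter at a convenient constant, say $\err=1/12$, and let $\const>0$ be the corresponding constant from the remark following \cref{thm:VC}, i.e.\ the constant for which \cref{eq:vc1} implies the existence of a sample realizing \cref{eq:vc2} for all $h\in H$. (We may shrink $\const$ further at the end if needed, so long as it stays absolute.) Let $M$ be permissible, let $N$ be a probabilistically dense submatrix of $M$ obtained by deleting rows, and let $m\geq 2/\const$ satisfy $\Pi_M(m)\leq\euler^{\const m}$.

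First I would set up the function class. Let $H$ be the class of row-functions of the full matrix $M$ and $H'\subseteq H$ the subclass corresponding to the rows surviving in $N$; both are indexed over subsets of the same row set, and $H'$ inherits permissibility from $H$ (a subclass of a permissible class is permissible, since the relevant supremum is over a smaller index set). Because $N$ is probabilistically dense, there is a distribution $\pprob$ on the columns $X$ with $\tp{h}=\prob{x\colon h(x)=1}\geq 2/3$ for every $h\in H'$. The growth functions satisfy $\Pi_{H'}(m)\leq\Pi_H(m)=\Pi_M(m)$, so the hypothesis $\Pi_M(m)\leq\euler^{\const m}$ gives $\Pi_{H'}(m)\leq\euler^{\const m}$, and $m\geq 2/\const\geq 2/\err^2$ after possibly adjusting $\const$ downward so that $\const\leq\err^2$. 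Thus \cref{eq:vc1} holds for the class $H'$.

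Now apply the consequence \cref{eq:vc2} of \cref{thm:VC} to $H'$: there is a sequence of columns $b_1,\ldots,b_m\in X$ (not necessarily distinct) such that $h(b_1)+\cdots+h(b_m)\geq(\tp{h}-\err)m$ for every $h\in H'$. Since $\tp{h}\geq 2/3$ and $\err=1/12$, the right-hand side is at least $(2/3-1/12)m=(7/12)m>m/2$. Translating back to matrix language, $N[a,b_1]+\cdots+N[a,b_m]=h_a(b_1)+\cdots+h_a(b_m)>m/2$ for every row $a$ of $N$, which is precisely the $m$-majority property of $N$. This proves the lemma.

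I do not expect a serious obstacle here; the content is entirely in \cref{thm:VC}, and the lemma is just its repackaging. The only points needing a little care are (i) fixing $\err$ and $\const$ so that both conditions in \cref{eq:vc1} are met for the chosen $\err$ and so that the margin $\tp{h}-\err$ stays strictly above $1/2$ given only $\tp{h}\geq 2/3$ — a gap of $1/6$, so any $\err<1/6$ works; and (ii) checking that passing from $M$ to a row-submatrix $N$ preserves permissibility and can only decrease the growth function, so the hypotheses stated in terms of $M$ suffice. Both are routine. If one wanted to keep the $2/3$ threshold symbolic or replace it by an arbitrary constant $c>1/2$, the same argument goes through by choosing $\err<c-1/2$.
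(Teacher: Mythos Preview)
Your proposal is correct and follows essentially the same route as the paper's proof; the paper simply takes $\err=1/7$ instead of your $1/12$ and arrives at $(2/3-1/7)m=(11/21)m>m/2$. One small point: the paper applies \cref{eq:vc2} to the \emph{full} class $H$ (using that $M$ is permissible) and only afterwards restricts attention to the rows $h\in H'$, which sidesteps your step of arguing that the subclass $H'$ is itself permissible---your stated reason (``the supremum is over a smaller index set'') is not quite a valid measurability argument in general, so the paper's ordering is the cleaner way to handle point~(ii).
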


\begin{proof}
  Let $M:T\times X\to\{0,1\}$ be a permissible matrix, and let
  $H=\{h_t\colon t\in T\}$ be the class of functions $h_t(x)=M[t,x]$
  defined by the rows $t\in T$ of $M$. Let $M'$ be any
  probabilistically dense submatrix of $M$, and $H'\subseteq H$ be the
  class of functions corresponding to the rows of $M'$. Hence, there
  is a probability distribution $\pprob:X\to[0,1]$ on the set $X$ of
  columns such that the probability $\tp{h}=\prob{x\in X\colon
    h(x)=1}$ is at least $2/3$ for every row $h\in H'$ of the
  submatrix $M'$.

  Fix $\err:=1/7$, and let $\const>0$ be a constant for which
  \cref{eq:vc1} (and, hence, also \cref{eq:vc2}) holds with this
  choice of $\err$. By \cref{eq:vc2}, there exists a sequence
  $x_1,\ldots,x_m$ of (not necessarily distinct) columns of $M$ such
  that
  \[
  h(x_1)+\cdots+h(x_m)\geq \left(\tp{h}-\err \right) m
  =\left(\tp{h}-\tfrac{1}{7}\right) m
  \]
  holds for every row $h\in H$ of $M$. For some rows $h\in H$ of $M$
  (those with $\tp{h}\leq \err$), this lower bound is trivial.  But
  since the submatrix $M'$ is probabilistically dense, we know that
  $\tp{h}\geq 2/3$ holds for all rows $h\in H'$ of this submatrix
  $M'$. Thus, for every row $h\in H'$, we have
  \[
  h(x_1)+\cdots+h(x_m)\geq \left(\tp{h}-\tfrac{1}{7} \right) m\geq
  \left(\tfrac{2}{3}-\tfrac{1}{7} \right) m = \tfrac{11}{21}m >
  \tfrac{1}{2}m\,,
  \]
  meaning that the submatrix $M'$ has the $m$-majority property, as
  desired.
\end{proof}

\subsection{Infinite majority rule}
\label{sec:inf-maj}

Recall that every probabilistically dense Boolean matrix $M:A\times
B\to\{0,1\}$ with a \emph{finite} number of rows has the $m$-majority
property for $m=\bigO(\log|A|)$. When combined with Warren's theorem
(\cref{thm:warren}), the theorem of Vapnik and Chervonenkis
(\cref{thm:VC}) yields the following result for \emph{infinite}
matrices.

\begin{lem}[Infinite majority rule]
  \label{infmajrule}
  Let $M:\RR^n\times \RR^k\to\{0,1\}$ be a semialgebraic Boolean
  matrix. If the description complexity of every column of $M$ does
  not exceed $\slocal$, then every probabilistically dense submatrix
  of $M$ has the $m$-majority property for $n\leq
  m=\bigO(n\log\slocal)$.
\end{lem}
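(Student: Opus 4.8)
The plan is to chain together the two technical ingredients already in hand, Step~1 (\cref{lem:growth}) and Step~2 (\cref{lem:VC}), after first checking that the permissibility hypothesis required by \cref{lem:VC} is met, and then performing an elementary calculation to pick a suitable value of $m$.

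\emph{Permissibility.} Since $M$ is semialgebraic by assumption, the set $\{(t,x)\in\RR^{n+k}\colon M[t,x]=1\}$ of its $1$-entries is a semialgebraic subset of $\RR^{n+k}$, and the class $H=\{h_t\colon t\in\RR^n\}$ of row-functions $h_t(x)=M[t,x]$ is precisely of the form $\{\indf(t,\cdot)\colon t\in\RR^n\}$ with a semialgebraic indexing function $\indf=M$, exactly the situation discussed in \cref{rem:permis}. Hence $H$, and therefore $M$, is permissible, and \cref{lem:VC} applies to $M$: there is an absolute constant $\const>0$ such that every probabilistically dense submatrix of $M$ has the $m$-majority property for every integer $m\geq 2/\const$ satisfying $\Pi_M(m)\leq\euler^{\const m}$.

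\emph{Choosing $m$.} It then remains to exhibit one integer $m$ with $n\leq m=\bigO(n\log\slocal)$ for which both $m\geq 2/\const$ and $\Pi_M(m)\leq\euler^{\const m}$ hold. By \cref{lem:growth}, for every $m\geq n$ we have $\Pi_M(m)\leq(8\euler m\slocal^2/n)^n$, so it suffices to find $m\geq n$ with
\[
n\ln\!\bigl(8\euler m\slocal^2/n\bigr)\leq\const m .
\]
Substituting $u=m/n\geq 1$, this becomes $\ln u+\ln(8\euler\slocal^2)\leq\const u$. Using the elementary bound $\ln u\leq\tfrac{\const}{2}u$ (valid once $u$ exceeds a constant depending only on $\const$) together with $\ln(8\euler\slocal^2)\leq\tfrac{\const}{2}u$ (valid for $u\geq(2/\const)\ln(8\euler\slocal^2)$), the inequality holds as soon as $u$ is at least a constant multiple of $\ln(8\euler\slocal^2)=\bigO(\log\slocal)$. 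Taking $m=\lceil nu\rceil$ for the smallest such $u$ yields $n\leq m=\bigO(n\log\slocal)$, and, after possibly enlarging $m$ by an absolute constant to also meet $m\geq 2/\const$, the conclusion follows directly from \cref{lem:VC}.

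There is no genuine obstacle in this argument; the only points needing a little care are the (routine) verification that $M$ indeed falls under the permissibility discussion of \cref{rem:permis}, and keeping the chosen $m$ above $n$ (so that \cref{lem:growth} is applicable) while rounding it up to an integer — both costing only absolute-constant factors that get absorbed into the $\bigO$.
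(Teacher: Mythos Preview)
Your proposal is correct and follows essentially the same route as the paper: verify permissibility of $M$ via its semialgebraicity (so that \cref{lem:VC} applies), invoke \cref{lem:growth} to bound $\Pi_M(m)$, and then solve the resulting inequality $\ln u+\ln(8\euler\slocal^2)\leq\const u$ for $u=m/n$. The paper handles that last elementary step by a two-case split on whether $u\leq 8\euler\slocal^2$, while you split the right-hand side into two halves of $\tfrac{\const}{2}u$; the conclusion $m=\bigO(n\log\slocal)$ is the same.
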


\begin{proof}
  Let $M'$ be a submatrix of $M$, and assume that the matrix $M'$ is
  probabilistically dense. Since $M'$ is a submatrix of $M$, its
  growth function satisfies $\Pi_{M'}(m)\leq \Pi_M(m)$ for all $m\geq
  1$. Hence, \cref{lem:growth} gives us an upper bound
  \begin{equation}\label{eq:upperPi}
    \Pi_{M'}(m)\leq \Pi_M(m)\leq \left(\frac{8\euler m \slocal^2}{n}\right)^n\,.
  \end{equation}
  on the growth function of the matrix $M'$, for all $m\geq n$. On the
  other hand, since the matrix $M$ is semialgebraic, it is permissible
  (see \cref{app:permissible}). So, by \cref{lem:VC}, the submatrix
  $M'$ of $M$ has the $m$-majority property for any $m\geq 2/\const$
  satisfying $\Pi_{M'}(m)\leq \euler^{\const m}$, where $\const > 0$
  is an absolute constant. Thus, in order to ensure the $m$-majority
  property for the submatrix $M'$, it is enough, by \cref{eq:upperPi},
  to ensure that $m$ satisfies the inequality
  \begin{equation}\label{eq:condition}
    \left(\frac{8\euler m \slocal^2}{n}\right)^n\leq \euler^{\const m}\,.
  \end{equation}
  By taking logarithms and setting $w:=m/n$, \cref{eq:condition} turns
  into the inequality $\ln w +\ln(8\euler\slocal^2)\leq \const w$. If
  $w\leq 8\euler\slocal^2$, then it is enough that
  $2\ln(8\euler\slocal^2)\leq \const w$ holds, which happens if
  $w=C\log\slocal$ for a large enough constant $C$. If $w\geq
  8\euler\slocal^2$, then it is enough that $2\ln w\leq \const w$
  holds, which happens if $w=C$ itself is a large enough constant. In
  both cases, we have that $w\leq C\log\slocal$ and, hence, any
  integer $m\leq C n\log\slocal$ for a large enough constant $C$
  satisfies the inequality \cref{eq:condition}.
\end{proof}

\begin{rem}
  Note that in order to apply \cref{infmajrule} for a boolean matrix
  $M$, an upper bound on the description complexity $\slocal$ of its
  individual columns does not suffice. To ensure the permissibility of
  the entire matrix, we have also to ensure that the matrix itself is
  semialgebraic, that is, has an arbitrary large but \emph{finite}
  description complexity: even if the description complexity of
  individual columns is bounded, the entire matrix $M$ may be
  \emph{not} semialgebraic (see \cref{rem:columns}).  Fortunately,
  already the classical Tarski--Sei\-den\-berg
  theorem~\cite{tarski,seidenberg} (superpositions of semialgebraic
  functions are semialgebraic) ensures that graph matrices of
  probabilistic circuits consisting of semialgebraic gates \emph{are}
  semialgebraic.
\end{rem}

\section{Step 3: Description complexity of circuits}
\label{sec:quantifiers}
An important consequence of the Tarski--Sei\-den\-berg
theorem~\cite{tarski,seidenberg} is that compositions of semialgebraic
functions are also semialgebraic functions. This, in particular,
implies that functions computable by circuits over any basis
consisting of semialgebraic functions are also semialgebraic. But what
about the description complexity of such functions?
\begin{itemize}
\item[$(\ast)$] If each basis function (gate) has description
  complexity at most $\dgates$, how large can then the description
  complexity of functions computable by circuits of size up to $\size$
  be?
\end{itemize}

The answer is given in the following lemma.

\begin{lem}
  \label{lem:descrcomplexity}
  Every function $f:\RR^n\to\RR$ computable by a deterministic
  $\dgates$-semialgebraic circuit of size at most $\size$ has the
  following properties.
  \begin{enumerate}
  \item[(i)] The graph $\{(x,y)\colon f(x)=y\}$ of $f$ can be defined
    by an existential algebraic formula of description complexity at
    most $\dgates\size$, and with at most $\size-1$ (existential)
    quantifiers.

  \item[(ii)] The function $f$ is $\slocal$-semialgebraic for
    $\slocal$ satisfying $\log \slocal=\bigO(n\size\log
    \dgates\size)$.
  \end{enumerate}
\end{lem}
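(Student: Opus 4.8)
The plan is to prove (i) directly by induction on the size $\size$ of the circuit, and then derive (ii) from (i) by invoking the quantifier–elimination corollary (\cref{cor:elimination}).

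\medskip

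\textbf{Proof of (i).} Let $\F=(f_1,\ldots,f_{\size})$ be a $\dgates$-semialgebraic circuit with $f=f_{\size}$. For each gate $f_i$ we will produce an existential algebraic formula $\Phi_i(x,y_i)$ defining the graph $\{(x,y_i)\colon y_i=f_i(x)\}$, using one fresh existential quantifier per internal gate and polynomials whose number and degree are controlled by $\dgates$ times the number of gates ``below'' $f_i$. The base case consists of the input nodes: a node holding a variable $x_j$ has graph defined by $\pred{y=x_j}$ (one polynomial of degree $1$, no quantifiers), and a node holding a scalar $a\in\RR$ has graph $\pred{y=a}$. For the inductive step, suppose $f_i$ is obtained by applying a basis gate $g:\RR^l\to\RR$ of description complexity $\leq\dgates$ to already–constructed gates $f_{j_1},\ldots,f_{j_l}$ (some of the $j_t$ may refer to input nodes). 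By \cref{clm:equivalence}, there is a quantifier-free algebraic formula $\Gamma(u_1,\ldots,u_l,y)$ of description complexity $\leq\dgates$ defining the graph of $g$, i.e.\ $\Gamma(u,y)=1$ iff $y=g(u)$. Then
\[
\Phi_i(x,y)\ :=\ (\exists u_1)\cdots(\exists u_l)\ \Gamma(u_1,\ldots,u_l,y)\ \land\ \bigwedge_{t=1}^{l}\Phi_{j_t}(x,u_t)
\]
defines the graph of $f_i$: the new variables $u_t$ are forced to equal $f_{j_t}(x)$ by the inductive hypothesis, and then $\Gamma$ forces $y=g(f_{j_1}(x),\ldots,f_{j_l}(x))=f_i(x)$. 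Note we must be slightly careful that each subformula $\Phi_{j_t}$ is included only \emph{once} even when $j_t$ is repeated among the inputs of the gate and, more importantly, that over the whole circuit each gate's subformula is ``shared'': we build $\Phi_{\size}$ as a single conjunction over all gates, introducing one existential variable $y_i$ for each internal gate $f_1,\ldots,f_{\size-1}$ (the output gate needs no quantifier since $y=y_{\size}$ is the free variable), and one block $\Gamma_i$ of description complexity $\leq\dgates$ per gate. This gives $\leq\size-1$ existential quantifiers and a quantifier-free matrix that is a conjunction of $\size$ pieces each of description complexity $\leq\dgates$; hence the number of distinct polynomials is $\leq\dgates\size$ and their degree is $\leq\dgates\leq\dgates\size$, so the description complexity of the matrix is $\leq\dgates\size$, as claimed.

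\medskip

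\textbf{Proof of (ii).} Apply \cref{cor:elimination} with $q=\size-1$ existential quantifiers and matrix description complexity $\kappa\leq\dgates\size$: the graph of $f$, viewed as a subset of $\RR^{n+1}$, is $\slocal$-semialgebraic with $\log\slocal=\bigO\big((n+1)(\size-1)\log(\dgates\size)\big)=\bigO(n\size\log\dgates\size)$, which is exactly the stated bound. (Here one uses that $n\geq 1$ and $\size\geq 1$, and absorbs the ``$+1$''s into the big-$\bigO$.)

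\medskip

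\textbf{Main obstacle.} The only real subtlety is the bookkeeping in (i): one must make sure that building the formula gate-by-gate does \emph{not} cause the polynomial count to blow up multiplicatively (which would happen if one naively substituted subformulas into one another, duplicating them at each level). The fix — used above — is to keep each gate's graph as a \emph{separate} conjunct tied together by shared existentially-quantified ``wire variables'' $y_i$, so that the polynomials simply \emph{add up} across gates rather than multiply; then the description complexity of the matrix is genuinely $\leq\dgates\size$ rather than exponential in the depth. Everything else — the base cases, the use of \cref{clm:equivalence} to get a quantifier-free formula for each individual gate, and the final appeal to \cref{cor:elimination} — is routine.
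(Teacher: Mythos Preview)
Your proposal is correct and arrives at essentially the same construction as the paper: introduce one existentially quantified ``wire variable'' per internal gate, conjoin the $\size$ gate-graph formulas (each of description complexity $\leq\dgates$), and then invoke \cref{cor:elimination} for part~(ii). The only difference is expository---you first set up a naive per-gate induction that would blow up, then course-correct to the shared-variable global conjunction, whereas the paper goes straight to that construction; the final argument and bounds are identical.
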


\begin{proof}
  The second property (ii) follows directly from (i) and
  \cref{cor:elimination}. So, it is enough to prove the first
  property~(i).

  Let $\basis$ be a basis consisting of
  $\dgates$-se\-mi\-al\-ge\-bra\-ic functions.  Let $\F$ be a circuit
  of size $\size$ over $\basis$ computing the function
  $f:\RR^n\to\RR$.  The circuit $\F$ is a sequence
  $\F=(f_1,\ldots,f_{\size})$ of functions $f_i: \RR^n\to \RR$, where
  $f_s=f$ and each $f_i$ is obtained by applying one of the basis
  operations (a gate) to
  $\RR\cup\{x_1,\ldots,x_n,f_1,\ldots,f_{i-1}\}$. Since every basis
  operation $g_i: \RR^k\to \RR$ is $\dgates$-semialgebraic, there must
  be an algebraic formula $\Phi_i(x,y)$ using at most $\dgates$
  polynomials of degree at most $\dgates$ such that $\Phi_i(x,y)=1$ if
  and only if $y=g_i(x)$.

  Replace now each gate $f_i$ in $\F$ by a new variable $z_i$. Then
  every gate $f_i=g_i(f_{i_1},\ldots, f_{i_k})$ with $g_i\in\basis$
  and each $f_{i_j}$ in $\RR\cup\{x_1,\ldots,x_n,f_1,\ldots,f_{i-1}\}$
  gives us an equation $z_i=g_i(w_i)$, where $w_i$ is a vector in
  $(\RR\cup\{x_1,\ldots,x_n,z_1,\ldots,z_{i-1}\})^k$. So,
  $\Phi_i(w_i,z_i)=1$ if and only if $z_i=g_i(w_i)$.  The value of the
  first variable $z_1$ in the sequence $z_1,\ldots,z_s$ is determined
  by the actual inputs $\RR\cup\{x_1,\ldots,x_n\}$ to the circuit (is
  obtained by applying the basis operation $g_1$ to these inputs),
  whereas the value of each subsequent variable $z_i$ ($i\geq 2$) is
  obtained by applying the $i$th base operation $g_i$ to these inputs
  and some of the previous values $z_1,\ldots,z_{i-1}$. So, the
  existential formula
  \begin{align*}
    \Psi(x,y)&= \exists z_1\ldots \exists z_{s-1}
    \pred{z_1=g_1(w_1)}\land\cdots\land
    \pred{z_{s-1}=g_{s-1}(w_{s-1})}\land
    \pred{y=g_{s}(w_{s})}\\
    &=\exists z_1\ldots \exists z_{s-1} \Phi_1(w_1,z_1)\land
    \ldots\land \Phi_{s-1}(w_{s-1},z_{s-1})\land \Phi_{s}(w_{s},y)
  \end{align*}
  defines the graph $\{(x,y)\colon y=f(x)\}$ of the function
  $f=f_{\size}$ computed by our circuit $\F$.  Existential quantifiers
  just guess the possible values taken at intermediate gates, and the
  equalities ensure their correctness.  Since each algebraic formula
  $\Phi_i$ uses at most $\dgates$ polynomials of degree at most
  $\dgates$, the formula $\Psi$ uses at most $\dgates\size$
  polynomials of degree at most $\dgates$, and contains only $\size-1$
  quantifiers.
\end{proof}

\section{Proof of Theorem~\ref{thm:main1}}
\label{sec:main1}
Suppose that a probabilistic $\dgates$-semialgebraic circuit
$\F(x,\xr)$ of size $\size$ with $k$ random input variables computes a
function $f:\RR^n\to \RR$. Our goal is to show then there are
$m=\bigO(n^2\size\log \dgates\size)$ deterministic copies
$\F_1(x,\rr_1),\ldots,\F_m(x,\rr_m)$ of $\F(x,\xr)$ such that, for
every input $x\in\RR^n$, more than the half of these circuits will
output the correct value~$f(x)$.

Let $M\colon \RR^n\times \RR^k\to\{0,1\}$ be the correctness matrix of
the circuit $\F$ (with respect to the given function $f$). Hence, the
entries of $M$ are defined by: $M[x,\rr]=1$ if and only if $\F(x,\rr)=
f(x)$.

\begin{clm}\label{clm:majprop}
  The matrix $M$ has the $m$-majority property for
  $m=\bigO(n^2\size\log \dgates\size)$.
\end{clm}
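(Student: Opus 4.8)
The plan is to assemble Steps~1--3 from \cref{sec:enroute}. First I would pass from the probabilistic circuit $\F$ to its graph matrix $M_F\colon\RR^{n+1}\times\RR^k\to\{0,1\}$ and record two facts: that the correctness matrix $M$ is a (row-)submatrix of $M_F$, obtained by deleting the rows labelled by pairs $(x,y)$ with $y\neq f(x)$ and relabelling each surviving row $(x,f(x))$ by $x$; and that $M$ is probabilistically dense \emph{per se}, since $\F$ computes $f$ with probability $\geq 2/3$. By the Tarski--Seidenberg theorem, $M_F$ is semialgebraic, and hence permissible (see \cref{app:permissible}).

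Next I would bound the description complexity of the \emph{columns} of $M_F$. The column of $M_F$ indexed by $\rr\in\RR^k$ is the graph of the function $\F_{\rr}(x)=\F(x,\rr)$, which is computed by a deterministic $\dgates$-semialgebraic circuit of size $\size$; so by \cref{lem:descrcomplexity}(ii) this graph is $\slocal$-semialgebraic with $\log\slocal=\bigO(n\size\log\dgates\size)$. Applying \cref{lem:growth} to $M_F$ then yields $\Pi_{M_F}(m)\leq(8\euler m\slocal^2/n)^n$ for all $m\geq n$, and $\Pi_M(m)\leq\Pi_{M_F}(m)$ since $M$ is a submatrix. Note that, because only the columnwise complexity enters, this bound is independent of the number $k$ of random inputs.

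Finally, I would invoke \cref{lem:VC}: $M_F$ is permissible, so its probabilistically dense submatrix $M$ has the $m$-majority property for every integer $m\geq 2/\const$ satisfying $\Pi_{M_F}(m)\leq\euler^{\const m}$, where $\const>0$ is the absolute constant of that lemma. Taking logarithms, it suffices that $2n\ln\slocal+n\ln(8\euler m/n)\leq\const m$. Here $2n\ln\slocal=\bigO(n\cdot n\size\log\dgates\size)=\bigO(n^2\size\log\dgates\size)$, while $n\ln(8\euler m/n)$ is of strictly lower order, so setting $m=Cn^2\size\log\dgates\size$ for a sufficiently large absolute constant $C$ satisfies the inequality as well as the side conditions $m\geq n$ and $m\geq 2/\const$. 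This gives the $m$-majority property for $M$, as claimed.

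There is no genuinely hard step: all the real content sits in \cref{lem:growth,lem:VC,lem:descrcomplexity}. The only points needing care are the bookkeeping that identifies $M$ as a row-submatrix of $M_F$ (so that both the growth-function bound of \cref{lem:growth} and the permissibility required by \cref{lem:VC} transfer to $M$), and the routine verification of the logarithmic inequality together with its side conditions.
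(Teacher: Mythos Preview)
Your proposal is correct and follows essentially the same route as the paper. The only cosmetic difference is that the paper packages your combination of \cref{lem:growth} and \cref{lem:VC} into the ``infinite majority rule'' (\cref{infmajrule}) and cites that single lemma, whereas you invoke its two ingredients directly and redo the logarithmic calculation; the content is identical.
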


\begin{proof}
  We are going to apply the infinite majority rule
  (\cref{infmajrule}).  Recall that the graph matrix of the circuit
  $\F(x,\xr)$ is the Boolean matrix $M_F\colon \RR^{n+1}\times
  \RR^k\to\{0,1\}$ with entries defined by: $M_F[(x,y),\rr]=1$ if and
  only if $y=\F(x,\rr)$.

  Since the circuit $\F$ only uses semialgebraic functions as gates,
  Tarski--Sei\-den\-berg theorem~\cite{tarski,seidenberg} implies that
  the function $\F:\RR^n\times\RR^{k}\to\RR$ computed by the circuit
  $\F$ and, hence, also the graph matrix $M_F$ of $\F$ is also
  semialgebraic.  Furthermore, for every assignment $\rr\in\RR^n$ of
  the values to the random input variables, $\F(x,\rr)$ is a
  deterministic $\dgates$-semialgebraic circuit of size $\size$
  computing some function
  $\F_{\rr}:\RR^n\to\RR$. \Cref{lem:descrcomplexity} implies that each
  of the functions $\F_{\rr}$ is $\slocal$-semialgebraic for $\slocal$
  satisfying $\log \slocal=\bigO(n\size\log \dgates\size)$.  Thus, the
  description complexity of every column of $M_F$ does not
  exceed~$\slocal$.

  Note that the correctness matrix $M$ is a \emph{submatrix} of the
  matrix $M_F$ obtained by removing all rows of $M_F$ labeled by pairs
  $(x,y)$ such that $y\neq f(x)$, and replacing the label $(x,y)$ of
  each remaining row by~$x$.  Moreover, since the (probabilistic)
  circuit $\F(x,\xr)$ computes $f$, the correctness matrix $M$ is
  probabilistically dense. (The graph matrix $M_F$ itself does not
  need to be such.) So, the infinite majority rule (\cref{infmajrule})
  implies that the correctness matrix $M$ has the $m$-majority
  property for $m=\bigO(n\log\slocal)=\bigO(n^2\size\log
  \dgates\size)$.
\end{proof}

\Cref{clm:majprop} implies that there must be some $m$ (not
necessarily distinct) columns $\rr_1,\ldots,\rr_m$ of $M$ such that,
for every input $x\in\RR^n$, the inequality $\left|\{i\colon
  M[x,\rr_i]=1\}\right|>m/2$ and, hence, also the inequality
$\left|\{i\colon \F(x,\rr_i)=f(x)\}\right|>m/2$ holds.  Thus, on every
input $x\in\RR^n$, more than the half of the values computed by
deterministic copies $\F_1(x,\rr_1),\ldots,\F_m(x,\rr_m)$ of the
circuit $\F(x,\xr)$ compute the correct value~$f(x)$, as desired.
\qed

\begin{rem}
  We could apply \cref{lem:descrcomplexity} to the function
  $\F:\RR^n\times\RR^k\to\RR$ computed by the entire circuit
  $\F(x,\xr)$, but this would result in quadratic increase of the size
  of the derandomized circuit. Namely, \cref{lem:descrcomplexity}
  would then imply that this function is $\slocal$-semialgebraic with
  $\log \slocal$ at most about $(n+k)\size\log \dgates\size$. Since
  the number $k$ of random input variables may be as large as the
  total number $\size$ of gates, this is then about $\size^2\log
  \dgates\size$, and we would obtain by a multiplicative factor
  $\size$ worse upper bound $m=\bigO(n^2\size^2\log \dgates\size)$ in
  \cref{clm:majprop}.
\end{rem}

\section{Proof of Theorem~\ref{thm:main2}}
\label{sec:main2}
Let $x\rel y$ be a $\relcompl$-semialgebraic relation, and
$f:\RR^n\to\RR$ a $\sslocal{f}$-semialgebraic function. Suppose that
$f$ can be $\rel$-approximated by a probabilistic
$\dgates$-semialgebraic circuit $\F(x,\xr)$ of size $\size$. Our goal
is to show that then $f$ can be also $\rrel$-approximated as a
majority $\rrel$-vote of $m=\bigO(n^2\size\log \K)$ deterministic
copies of this circuit, where $\K=\dgates\size+\sslocal{f}
+\relcompl$.

Consider the \emph{correctness matrix} $M\colon \RR^n\times
\RR^k\to\{0,1\}$ with entries defined by:
\[
\mbox{$M[x,\rr]=1$ if and only if $\F(x,\rr)\rel f(x)$.}
\]
Since the circuit $\F$ $\rel$-approximates the function, the matrix
$M$ is probabilistically dense.

 \begin{clm}\label{clm:rel}
   The correctness matrix $M$ is semialgebraic, and the description
   complexity $\slocal$ of every its column satisfies
   $\log\slocal=\bigO(n\size\log\K)$.
 \end{clm}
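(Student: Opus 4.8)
The plan is to follow the proof of \cref{clm:majprop} closely, inserting two extra ``gadgets'' that encode the approximated function $f$ and the approximation relation $\rel$; these gadgets are exactly what account for the two additional summands $\sslocal{f}$ and $\relcompl$ in $\K$. The first assertion, that $M$ is semialgebraic, is quick: by the Tarski--Seidenberg theorem (compositions of semialgebraic functions are semialgebraic), the map $(x,\xr)\mapsto\bigl(\F(x,\xr),f(x)\bigr)$ from $\RR^{n+k}$ to $\RR^2$ is semialgebraic, and the set of $1$-entries of $M$ is the preimage under this map of the semialgebraic set $\{(y,z)\in\RR^2\colon y\rel z\}$; preimages of semialgebraic sets under semialgebraic maps are again semialgebraic.

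For the bound on the description complexity of a single column, fix $\rr\in\RR^k$ and let $S_\rr=\{x\in\RR^n\colon \F(x,\rr)\rel f(x)\}$ be the set of $1$-entries of that column. The deterministic circuit $\F_\rr(x):=\F(x,\rr)$ is $\dgates$-semialgebraic and of size at most $\size$, so by part~(i) of \cref{lem:descrcomplexity} its graph $\{(x,y)\colon y=\F_\rr(x)\}$ is defined by an existential algebraic formula $\exists w_1\cdots\exists w_{\size-1}\,\Theta(x,y,w_1,\ldots,w_{\size-1})$ whose quantifier-free part $\Theta$ has description complexity at most $\dgates\size$. Since $f$ is $\sslocal{f}$-semialgebraic and $\rel$ is $\relcompl$-semialgebraic, \cref{clm:equivalence} supplies a quantifier-free algebraic formula $\Phi_f(x,z)$ of description complexity at most $\sslocal{f}$ defining the graph of $f$, and a quantifier-free algebraic formula $\Phi_\rrel(y,z)$ of description complexity at most $\relcompl$ defining the set $\{(y,z)\colon y\rel z\}$. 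Identifying the circuit's output variable $y$ with the first argument of $\rel$, and the value $z=f(x)$ with the second argument, we then get
\[
S_\rr=\bigl\{x\in\RR^n\colon \exists y\,\exists z\,\exists w_1\cdots\exists w_{\size-1}\ \Theta(x,y,w_1,\ldots,w_{\size-1})\land\Phi_f(x,z)\land\Phi_\rrel(y,z)\bigr\}\,.
\]

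It remains to count. The quantifier-free matrix $\Theta\land\Phi_f\land\Phi_\rrel$ is a conjunction of three algebraic formulas of description complexities at most $\dgates\size$, $\sslocal{f}$ and $\relcompl$; since taking a conjunction adds up the numbers of distinct polynomials and leaves the degree as the maximum, this matrix has description complexity at most $\dgates\size+\sslocal{f}+\relcompl=\K$. It has $q=(\size-1)+2=\bigO(\size)$ existentially quantified variables and $n$ free variables $x$, so \cref{cor:elimination} yields that $S_\rr$ is $\slocal$-semialgebraic with $\log\slocal=\bigO(nq\log\K)=\bigO(n\size\log\K)$, as claimed.

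The argument is essentially the same bookkeeping as in \cref{sec:main1}, and I do not expect a real obstacle. The only two minor points to keep an eye on are: (a) that sharing the auxiliary variables $y,z$ across the three conjuncts keeps the polynomial count \emph{additive} rather than multiplicative, which is immediate from how algebraic formulas are formed; and (b) the boundary case in which $\F_\rr$ is only partially defined (say, because of division gates), where at an input $x$ with $\F(x,\rr)$ undefined there is no witness for the inner $\exists y$, so $x\notin S_\rr$, matching the convention $M[x,\rr]=0$ at such $x$.
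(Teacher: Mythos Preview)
Your proof is correct and follows essentially the same approach as the paper's own argument: build an existential formula by conjoining the graph of the circuit (via \cref{lem:descrcomplexity}(i)), the graph of $f$, and the relation $\rrel$, then eliminate the $\bigO(\size)$ existential quantifiers with \cref{cor:elimination}. The only cosmetic difference is the order of operations---you fix the column $\rr$ \emph{before} invoking \cref{lem:descrcomplexity}(i) on the deterministic copy $\F_\rr$, whereas the paper applies the lemma to the full $(n+k)$-variable circuit and then substitutes $\rr$ into the resulting formula---but both routes lead to the same existential formula with the same description complexity bound $\K$ and the same quantifier count $\size+\bigO(1)$.
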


\begin{proof}
  The probabilistic circuit $\F(x,\xr)$ computes some function
  $F:\RR^n\times\RR^k\to\RR$ of $n+k$ variables. Let $\Phi_F(x,y,\rr)$
  an existential algebraic formula ensured by
  \cref{lem:descrcomplexity}(i). Hence, the formula $\Phi_F$ has at
  most $\size-1$ quantifiers, has description complexity $\kappa\leq
  \dgates\size$, and defines the graph of, that is,
  $\Phi_F(x,y,\rr)=1$ if and only if $y=\F(x,\rr)$.

  Similarly, since the function $f$ is $\sslocal{f}$-semialgebraic,
  there is an algebraic formula $\Phi_{f}(x,y)$ of size and degree at
  most $\sslocal{f}$ such that $\Phi_{f}(x,y)=1$ if and only if
  $y=f(x)$. Finally, since the relation $\rrel$ is
  $\relcompl$-semialgebraic, there is an algebraic formula
  $\Phi_{\rrel}(x,y)$ of size and degree at most $\relcompl$ such that
  $\Phi_{\rrel}(x,y)=1$ if and only if $x\rel y$.  Consider the
  existential algebraic formula
  \[
  \Psi(x,\rr) = \exists y_1\ \exists y_2\ \Phi_{F}(x,y_1,\rr)\ \land\
  \Phi_{f}(x,y_2)\ \land \ \Phi_{\rrel}(y_1,y_2)\,.
  \]
  It is easy to see that for every row $x\in\RR^n$ and every column
  $\rr\in\RR^k$ of $M$, we have $M[x,\rr]=1$ if and only if
  $\Psi(x,\rr)=1$. Indeed, since both $\F(x,\rr)$ and $f(x)$ are
  everywhere defined functions, on every point $(x,\rr)$ they output
  some unique values $\F(x,\rr)=y_1$ and $f(x)=y_2$. So, the first
  part $\exists y_1\ \exists y_2\ \Phi_{F}(x,y_1,\rr)\land
  \Phi_{f}(x,y_2)$ of the formula $\Psi$ is a tautology, that is,
  outputs $1$ on all inputs. But the last formula
  $\Phi_{\rrel}(y_1,y_2)$ outputs $1$ precisely when $y_1\rel y_2$
  holds, which happens precisely when $\F(x,\rr)\rel f(x)$ holds.

  Thus, the existential formula $\Psi(x,\rr)$ defines the correctness
  matrix $M$.  By the Tarski--Sei\-den\-berg
  theorem~\cite{tarski,seidenberg}, the formula $\Psi(x,\rr)$ has an
  equivalent quantifier-free algebraic formula. This shows that the
  correctness matrix $M$ is semialgebraic, and it remains to upper
  bound the description complexity of its columns.

  So, fix a column $\rr\in\RR^k$ of $M$, and consider the existential
  formula $\Psi_{\rr}(x):= \Psi(x,\rr)$ obtained from the formula
  $\Psi(x,\rr)$ by fixing the $\rr$-variables to the corresponding
  values. This formula defines the $\rr$th column of $M$, and its
  description complexity is at most the sum
  $\kappa+\sslocal{f}+\relcompl\leq
  \dgates\size+\sslocal{f}+\relcompl\leq \K$ of the description
  complexities of formulas $\Phi_{F}$, $\Phi_{f}$ and $\Phi_{\rrel}$.
  The formula $\Psi_{\rr}$ has $n$ free variables ($x$-variables). The
  formulas $\Phi_{f}$ and $\Phi_{\rrel}$ have no quantifiers, and $
  \Phi_{F}$ has at most $\size$ existential quantifiers. So, the
  entire existential formula $\Psi_{\rr}$ has only $q\leq s+2$
  quantifiers, and its description complexity is at most $\K$.
  \Cref{cor:elimination} gives us an equivalent quantifier-free
  algebraic formula of description complexity $\slocal$ satisfying
  $\log\slocal=\bigO(nq\log \K)=\bigO(ns\log\K)$.  Thus, the
  description complexity $\slocal$ of each single column of $M$
  satisfies $\log\slocal=\bigO(n\size\log\K)$, as desired.
\end{proof}

Since the circuit $\F(x,\xr)$ $\rrel$-approximates our function $f$,
the correctness matrix $M$ is probabilistically dense.  By
\cref{clm:rel}, the description complexity $\slocal$ of every its
column satisfies $\log\slocal=\bigO(n\size\log\K)$.  So, by the
infinite majority rule (\cref{infmajrule}), the matrix $M$ has the
$m$-majority property for
$m=\bigO(n\log\slocal)=\bigO(n^2\size\log\K)$.  This means that there
must be some $m$ (not necessarily distinct) columns
$\rr_1,\ldots,\rr_m$ of $M$ such that, for every input $x\in\RR^n$,
the inequality $\left|\{i\colon M[x,\rr_i]=1\}\right|>m/2$ and, hence,
also the inequality $\left|\{i\colon \F(x,\rr_i)\rel
  f(x)\}\right|>m/2$ holds.  Thus, if $\Maj:\RR^m\to\RR$ is a majority
$\rrel$-vote function, then $ \Maj(\F_1(x,\rr_1),\ldots,\F_m(x,\rr_m))
\rel f(x) $ holds for every input $x\in\RR^n$. That is, the obtained
deterministic circuit (with one majority $\rrel$-vote output gate)
$\rrel$-approximates the values $f(x)$ of our function $f$, as
desired.  \qed

\subsection{Circuits approximating optimization problems}
\label{sec:contig}

Since one of the motivations in this paper is to derandomize
probabilistic dynamic programming algorithms, let us demonstrate
\cref{thm:main2} on semialgebraic circuits solving optimization
problems. The minimization problem $f:\RR^n\to\RR$ on a finite set
$A\subset\NN^n$ of feasible solutions is to compute the values
$f(x)=\min\left\{a_1x_1+\cdots a_nx_n \colon a\in A\right\}$ on all
input weighings $x\in\RR^n$.

A probabilistic circuit $\F(x,\xr)$ approximates the problem $f$
within a given factor $\factor\geq 0$ if for every input weighting
$x\in\RR^n$, $|\F(x,\xr)-f(x)|\leq \factor$ holds with probability at
least~$2/3$.

The relation $\rel$ is this case is: $x\rel y$ if and only if
$|x-y|\leq \factor$ (the fourth relation in \cref{ex:relations}).
This relation can be defined by a trivial algebraic formula
$\pred{x\geq y-\factor}\land\pred{x\leq y+\factor}$. The formula uses
only two polynomials $x-y-\factor$ and $x-y+\factor$ of degree $1$;
so, the description complexity is $\relcompl\leq 2$. The relation is
clearly contiguous: if $x\leq y\leq z$, $|x-a|\leq \factor$ and
$|z-a|\leq \factor$, then also $|y-a|\leq \factor$.

Let $\basis$ be any basis containing the optimization operations
$\min(x,y)$, $\max(x,y)$ and any other operations of a constant
description complexity $\dgates=\bigO(1)$. For example, besides $\min$
and $\max$, the basis may contain any of the arithmetic operations
$+,-,\times,\div$, any branching operations ``if $x\Diam y$ then $u$
else $v$'' with $\Diam\in\{>, \geq, =,\leq, <\}$, and other
operations.

\begin{cor}\label{cor:approx}
  If a minimization problem $f(x)=\min\left\{a_1x_1+\cdots a_nx_n
    \colon a\in A\right\}$ can be approximated within some fixed
  factor by a probabilistic circuit of size $\size$ over the basis
  $\basis$, then $f$ can be also approximated within the same factor
  by a deterministic circuit over $\basis$ of size at most a constant
  times $n^2\size^2 \log(\size+|A|)$.
\end{cor}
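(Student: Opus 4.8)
The plan is to specialize \cref{thm:main2} to the $\factor$-approximation relation and then eliminate the majority $\rrel$-vote output gate it leaves behind, using that the basis $\basis$ contains $\min$ and $\max$.

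First I would record the three parameters feeding into \cref{thm:main2}. The relation here is $x\rel y$ iff $|x-y|\le\factor$; as already observed in \cref{sec:contig}, it is $\relcompl$-semialgebraic with $\relcompl\le 2$ and it is contiguous. The gates have description complexity $\dgates=\bigO(1)$ by hypothesis. For $\sslocal{f}$, note that the graph $\{(x,z)\colon z=f(x)\}$ of the minimization problem is defined by the algebraic formula
\[
\Phi(x,z)=\Big(\bigwedge_{a\in A}\pred{z\le a_1x_1+\cdots+a_nx_n}\Big)\ \land\ \Big(\bigvee_{a\in A}\pred{z=a_1x_1+\cdots+a_nx_n}\Big),
\]
which uses only the $|A|$ distinct degree-$1$ polynomials $a_1x_1+\cdots+a_nx_n-z$ for $a\in A$; hence $\sslocal{f}\le|A|$ by \cref{clm:equivalence}.

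Next I would invoke \cref{thm:main2}. With $\K=\dgates\size+\sslocal{f}+\relcompl=\bigO(\size+|A|)$, it produces $m=\bigO(n^2\size\log\K)=\bigO(n^2\size\log(\size+|A|))$ deterministic copies of the probabilistic circuit, together with one majority $\rrel$-vote output gate, whose combined output $\rrel$-approximates $f$ within the factor $\factor$ on every input. Laying out the $m$ copies side by side costs $m\size$ gates over~$\basis$.

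Finally, since $\rrel$ is contiguous and $\basis$ contains $\min$ and $\max$, I would replace the single majority $\rrel$-vote gate by the fanin-$2$ $(\min,\max)$ circuit of size $\bigO(m\log m)$ guaranteed by \cref{clm:contiguous}. The result is a circuit over $\basis$ that still $\rrel$-approximates $f$ within the factor $\factor$ and has total size $\bigO(m\size+m\log m)$. The only point needing care is this last bit of bookkeeping: one checks that $\sslocal{f}$ is controlled by the \emph{number} $|A|$ of feasible solutions (not by the sizes of the vectors $a$), so that $\log\K=\bigO(\log(\size+|A|))$ does not blow up, and that $\log m=\bigO(\size)$ in all nondegenerate cases (e.g. whenever $n$ and $\log|A|$ are at most exponential in $\size$), so the $\bigO(m\log m)$ gates of the majority block are dominated by the $m\size$ gates of the copies and the total collapses to $\bigO(m\size)=\bigO(n^2\size^2\log(\size+|A|))$, as claimed.
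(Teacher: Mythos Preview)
Your proposal is correct and follows essentially the same route as the paper's own proof: bound $\sslocal{f}\le|A|$ via the obvious algebraic formula for the graph of $f$, plug $\relcompl\le 2$ and $\dgates=\bigO(1)$ into \cref{thm:main2} to get $m=\bigO(n^2\size\log(\size+|A|))$, and then replace the majority $\rrel$-vote gate by the $\bigO(m\log m)$ median network from \cref{clm:contiguous}. The only difference is that the paper simply asserts $m\size+m\log m=\bigO(n^2\size^2\log(\size+|A|))$ without your closing caveat about $\log m=\bigO(\size)$; your extra bookkeeping is cautious but not something the paper dwells on.
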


\begin{proof}
  The graph $\{(x,y)\colon y=f(x)\}$ of the function $f$ can be
  defined by an algebraic formula
  \[
  \bigwedge_{a\in A}\pred{a_1x_1+\cdots a_nx_n-y\geq 0} \land\Big(
  \bigvee_{a\in A}\pred{a_1x_1+\cdots a_nx_n - y=0}\Big)
  \]
  using $|A|$ polynomials of degree $1$. So, the description
  complexity of $f$ is $\sslocal{f}\leq |A|$.  Since the approximation
  relation $\rrel$ in our case has a constant description complexity
  $\relcompl\leq 2$, and since the description complexity $\dgates$ of
  every gate is also constant, \cref{thm:main2} implies that the
  minimization problem $f$ can be approximated as a majority
  $\rrel$-vote function of $m=\bigO(n^2\size \log \K)$ deterministic
  copies of the probabilistic circuit, where $\K= \dgates\size +
  \sslocal{f}+\relcompl= \bigO(\size + |A|)$.

  Since the relation $\rrel$ is contiguous, and since both $\min$ and
  $\max$ operations are available, a majority $\rrel$-vote function of
  $m$ variables can be computed by a circuit over $\basis$ of size
  $\bigO(m\log m)$ (see \cref{clm:contiguous} in \cref{app:not-maj}).
  Thus, the size of the derandomized circuit is at most a constant
  times $m\cdot\size+m\log m$, which is at most a constant times
  $n^2\size^2 \log(\size+|A|)$, as desired.
\end{proof}

\begin{rem}
  Note that the upper bound on the size $S$ of the derandomized
  circuit, given by \cref{cor:approx}, is only \emph{logarithmic} in
  the number $|A|$ of feasible solutions of the minimization problem
  $f$.  In most optimization problems, the set $A$ of feasible
  solutions is the set $A\subseteq\{0,1\}^n$ of characteristic $0$-$1$
  vectors of objects of interest: spanning trees, perfect matchings,
  etc. In these cases, $\log|A|$ is at most the number $n$ of
  variables. Thus, for such problems $f$, the size of the derandomized
  circuit is at most a constant times $n^3\size^2 \log\size$.
\end{rem}

\section{Derandomization via isolating sets}
\label{sec:isol}

\Cref{thm:main1,thm:main2} derandomize very general classes of
probabilistic circuits, but their proofs rely on deep tools from three
different fields: combinatorial algebraic geometry (sign patterns of
polynomials), probability theory (uniform convergence in probability),
and quantifier elimination theory over the reals. When directly
applied, elementary tools like the finite majority rule
(\cref{clm:fin-maj}) fail for such circuits already because the domain
is infinite.

In some cases, however, it is still possible to apply even such
elementary tools also for circuits working over infinite domains. In
particular, this happens if the functions computed by a given class of
circuits have finite ``isolating sets.'' In this section, we will
demonstrate this approach on arithmetic and tropical circuits.

Given a family $\h$ of functions $h:\dom\to\range$ and a function
$f:\dom\to\range$, a set $X\subseteq \dom$ \emph{isolates} the
function $f$ \emph{within} $\h$ if for every function $h\in\h$,
\[
\mbox{$h(x)=f(x)$ for all $x\in X$ implies that $h(x)=f(x)$ holds for
  all $x\in\dom$.}
\]
That is, if $h(x)\neq f(x)$ holds for some point $x\in\dom$ of the
entire domain $\dom$, then also $h(x)\neq f(x)$ holds for at least one
point $x\in X$.

\subsection{Arithmetic circuits}
\label{sec:arithm}
In the case of (arithmetic) polynomials, we have the following strong
isolation property.

\begin{lem}\label{lem:SZ}
  Let $f$ be a nonzero $n$-variate polynomial of degree $d$ over
  $\RR$, and $S\subset\RR$ a finite subset of $|S|\geq d+1$
  elements. Then every subset $X\subseteq S^n$ of size
  $|X|>d|S|^{n-1}$ isolates $f$ within all polynomials of degree at
  most~$d$.
\end{lem}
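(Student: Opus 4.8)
The plan is to reduce the statement to the Schwartz--Zippel zero-count bound and then prove that bound by the usual induction. Suppose $h$ is a polynomial of degree at most $d$ with $h(x)=f(x)$ for all $x\in X$, and set $g:=h-f$. Since $\deg h\leq d$ and $\deg f=d$, the polynomial $g$ has degree at most $d$, and it vanishes at every point of $X$. The desired conclusion ``$h\equiv f$ on $\RR^n$'' is exactly ``$g\equiv 0$'', so it suffices to prove: a \emph{nonzero} polynomial $g\in\RR[x_1,\ldots,x_n]$ of degree at most $d$ vanishes at no more than $d|S|^{n-1}$ points of $S^n$. Indeed, such a $g$ then cannot vanish on all of $X$, because $X\subseteq S^n$ and $|X|>d|S|^{n-1}$; hence $g\equiv 0$. (Incidentally, this also explains the hypothesis $|S|\geq d+1$: when $|S|\leq d$ one has $d|S|^{n-1}\geq |S|^{n}=|S^n|$, so no $X$ of the required size exists and the statement is vacuous; the assumption only rules out this degenerate range and plays no further role.)

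I would prove the zero-count bound by induction on $n$. For $n=1$, a nonzero univariate real polynomial of degree at most $d$ has at most $d$ roots, hence at most $d=d|S|^{0}$ roots in $S$. For $n\geq 2$, write $g=\sum_{i=0}^{k} g_i(x_1,\ldots,x_{n-1})\,x_n^{i}$, where $k\leq d$ is the largest exponent of $x_n$ actually occurring in $g$, so that $g_k\not\equiv 0$ and $\deg g_k\leq d-k$. I split the zeros $(a,b)\in S^{n-1}\times S$ of $g$ according to whether $g_k(a)=0$. If $g_k(a)=0$, the induction hypothesis applied to the nonzero polynomial $g_k$ of degree at most $d-k$ bounds the number of such $a$ by $(d-k)|S|^{n-2}$, and each contributes at most $|S|$ values of $b$, giving at most $(d-k)|S|^{n-1}$ zeros. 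If $g_k(a)\neq 0$, then $g(a,x_n)$ is a nonzero univariate polynomial of degree exactly $k$, so it has at most $k$ roots $b\in S$; with at most $|S|^{n-1}$ choices of $a$ this gives at most $k|S|^{n-1}$ zeros. Summing, the total is at most $(d-k)|S|^{n-1}+k|S|^{n-1}=d|S|^{n-1}$, closing the induction.

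There is no genuinely hard step here: the whole content is the Schwartz--Zippel/DeMillo--Lipton--Zippel lemma, and one could instead simply cite it and keep only the two-line reduction via $g=h-f$. If the proof is written out, the only points needing a little care are bookkeeping ones: obtaining the degree bound $\deg g_k\leq d-k$ for the leading coefficient so that the induction hypothesis applies with the correct parameter, and checking the boundary cases $k=d$ (then $g_k$ is a nonzero constant, the first group is empty, and $(d-k)|S|^{n-2}=0$) and the base of the induction reached when $n-1=1$.
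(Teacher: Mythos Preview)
Your proposal is correct and follows essentially the same route as the paper: reduce to the difference polynomial $g=h-f$ (the paper writes $p=f-g$) and invoke the Schwartz--Zippel bound $|\{a\in S^n: p(a)=0\}|\leq d|S|^{n-1}$ to conclude that $p$ cannot vanish on all of $X$. The only difference is that the paper simply cites Schwartz--Zippel, whereas you also supply the standard inductive proof; as you note yourself, that extra paragraph is optional.
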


\begin{proof}
  Let $S\subset\RR$ a finite subset of $|S|\geq d+1$ elements. Take an
  arbitrary $n$-variate polynomial $g(x)$ of degree at most $d$, and
  suppose that $g(a)\neq f(a)$ holds for at least one point
  $a\in\RR^n$. Then $p(x):=f(x)-g(x)$ is a nonzero polynomial of
  degree at most $d$.  By the Schwartz--Zippel
  lemma~\cite{schwartz,zippel}, we then have $|\{a\in S^n\colon
  p(a)=0\}|\leq d|S|^{n-1}$. So, since $|X|>d|S|^{n-1}$, $p(a)\neq 0$
  must hold for at least one point $a\in X$, as desired.
\end{proof}

\begin{thm}\label{thm:arithm}
  If a rational function $f:\RR^n\to\RR$ can be computed by a
  probabilistic arithmetic $(+,\times,-,/)$ circuit, then $f$ can be
  also computed by a deterministic arithmetic $(+,\times,-,/)$ circuit
  of the same size.
\end{thm}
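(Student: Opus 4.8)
The plan is to mimic the structure of the proof of \cref{thm:main1}, but to replace the ``infinite majority rule'' step (which relies on Warren's theorem and the Vapnik--Chervonenkis theorem) by the elementary Schwartz--Zippel isolation property from \cref{lem:SZ}. Suppose a probabilistic arithmetic circuit $\F(x,\xr)$ of size $\size$ with $k$ random inputs computes a rational function $f:\RR^n\to\RR$, meaning that for every $x\in\RR^n$ we have $\F(x,\xr)=f(x)$ with probability at least $2/3$. For every fixed setting $\rr\in\RR^k$ of the random inputs, the deterministic copy $\F_\rr(x):=\F(x,\rr)$ is again an arithmetic circuit of size $\size$, hence computes a rational function; clearing denominators, $\F_\rr(x)-f(x)$ agrees (off a proper subvariety) with the ratio of two polynomials whose degrees are bounded by some $d=d(\size)$ that depends only on the size $\size$ of the circuit (each gate at most doubles numerator/denominator degrees, so $d\le 2^{\bigO(\size)}$ suffices, and any concrete bound will do). The point is that $d$ does \emph{not} depend on $k$.

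First I would fix a finite set $S\subset\RR$ with $|S|\ge d+1$ and work over the finite grid $S^n\subset\RR^n$. Define the \emph{correctness matrix restricted to the grid}, $M\colon S^n\times\RR^k\to\{0,1\}$, by $M[x,\rr]=1$ iff $\F(x,\rr)=f(x)$. This matrix has finitely many rows ($|S|^n$ of them), and it is probabilistically dense per se, since the probabilistic circuit is correct on each individual input with probability $\ge 2/3$. Hence the finite majority rule (\cref{clm:fin-maj}) applies and yields $m=\bigO(\log|S^n|)=\bigO(n\log|S|)=\bigO(n\size)$ columns $\rr_1,\ldots,\rr_m\in\RR^k$ such that for every grid point $x\in S^n$, more than half of the values $\F_{\rr_1}(x),\ldots,\F_{\rr_m}(x)$ equal $f(x)$.

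The key step, and the one that does the real work, is \emph{upgrading correctness from the grid to all of $\RR^n$}. Here is where \cref{lem:SZ} enters. Group the $m$ deterministic copies into classes by ``which value they output,'' i.e.\ for a fixed $x$ the copies outputting $f(x)$ form a majority block. I claim there is a \emph{single} index set $I\subseteq\{1,\ldots,m\}$ with $|I|>m/2$ and a fixed $j\in I$ such that $\F_{\rr_i}\equiv\F_{\rr_j}$ as rational functions on all of $\RR^n$ for every $i\in I$, and moreover $\F_{\rr_j}\equiv f$. The reason: consider the (finitely many) rational functions $\F_{\rr_1},\ldots,\F_{\rr_m},f$; for any two of them that are \emph{not} identically equal, their difference is a nonzero rational function, hence its numerator is a nonzero polynomial of degree $\le 2d$ vanishing on at most $2d\,|S|^{n-1}$ points of $S^n$ by Schwartz--Zippel. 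Taking $|S|$ large enough that $\binom{m+1}{2}\cdot 2d\,|S|^{n-1}<|S|^n$, i.e.\ $|S|>2d\binom{m+1}{2}$, there is a grid point $x^\ast\in S^n$ avoiding all these ``collision'' varieties, so at $x^\ast$ distinct rational functions among our list take distinct values. At that $x^\ast$, the majority block (more than $m/2$ copies outputting $f(x^\ast)$) therefore consists precisely of those copies that are identically equal to $f$. Hence for that fixed index set $I$, every copy $\F_{\rr_i}$ with $i\in I$ equals $f$ everywhere, and $|I|>m/2$.

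Finally, I assemble the deterministic circuit. Take the majority block $I$ above; pick any $i_0\in I$ and output $\F_{\rr_{i_0}}(x)$. Since $\F_{\rr_{i_0}}\equiv f$ on $\RR^n$, this deterministic arithmetic $(+,-,\times,/)$ circuit of size exactly $\size$ computes $f$, which is the claimed statement (``the same size''). The main obstacle I anticipate is the degree bookkeeping: one must argue that for an arithmetic circuit of size $\size$ the numerator and denominator degrees of the computed rational function, and of differences of such functions, are bounded by a quantity $d$ depending only on $\size$ (not on $k$, and not on the actual random values), so that the grid size $|S|$ can be chosen as a function of $n$ and $\size$ alone; handling division gates cleanly (so that ``$\F_\rr(x)=f(x)$'' is interpreted as equality of rational functions, with the finitely many poles causing no trouble because we may enlarge $S$ to avoid them) is the one place where care is needed.
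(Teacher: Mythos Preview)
Your argument is correct, but it takes a detour compared to the paper's more direct route. Both proofs rest on the same two ingredients---restrict to a finite grid $S^n$ so that the finite majority rule (\cref{clm:fin-maj}) applies, then use Schwartz--Zippel (\cref{lem:SZ}) to upgrade grid correctness to global correctness---but they combine them differently. The paper invokes only the \emph{second} clause of \cref{clm:fin-maj}: a single column $\rr$ is correct on more than half of the grid points (this is just averaging). It then applies Schwartz--Zippel \emph{once}, to the numerator $g=pQ-qP$ of $\F_\rr-f$: were $g$ nonzero it would vanish on at most $d|S|^{n-1}$ grid points, contradicting correctness on a set of size $>\tfrac{1}{2}|S|^n$ as soon as $|S|\ge 2d$. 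You instead follow the template of \cref{thm:main1}, extract $m$ columns with the $m$-majority property, and then need the extra ``generic grid point $x^\ast$'' argument---with a larger grid $|S|>2d\binom{m+1}{2}$ and a mild circularity to unwind, since $m=\bigO(n\log|S|)$ itself depends on $|S|$---to single out a column identically equal to $f$. Your generic-point trick is nice and the argument goes through, but the paper's version is shorter, needs a smaller grid, and shows that for arithmetic circuits the majority-vote machinery is not needed at all: one good deterministic copy already exists.
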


\begin{proof}
  The function $f$ is of the form $f(x)=p(x)/q(x)$ for some
  polynomials $p$ and $q$.  Suppose that $f$ can be computed by a
  probabilistic arithmetic $(+,\times,-,/)$ circuit $\F(x,\xr)$ of
  size $\size$. Set $d:=r+2^{\size}$, where $r$ is the maximum degree
  of $p$ and $q$. Take an arbitrary subset $S\subseteq \RR$ of size
  $|S|\geq 2d$. By the finite majority rule (\cref{clm:fin-maj}),
  there is an assignment $r\in\RR^k$ to the random input variables,
  and a subset $X\subset S^n$ of size $|X|>\tfrac{1}{2}|S|^n$ such
  that the deterministic copy $F_r(x)=\F(x,r)$ of the probabilistic
  circuit $\F(x,\xr)$ computes $f$ correctly on all inputs from
  $X$. The (deterministic) circuit $\F_r$ computes some rational
  function $F_r(x)=P(x)/Q(x)$.  Since the gates have fanin two, the
  polynomials $P$ and $Q$ have degrees at most $2^{\size}$. Consider
  the polynomial $g(x):=p(x)\cdot Q(x)-q(x)\cdot P(x)$. By the choice
  of $d$, the degree of the polynomial $g$ is at most $d$. We have
  only to show that $g$ is a zero polynomial, i.e., that $g(x)=0$
  holds for all $x\in\RR^n$.

  Were $g$ a nonzero polynomial, then Lemma~\ref{lem:SZ} would require
  the set $X$ to have cardinality $|X|\leq d|S|^{n-1}$.  But then we
  would have $\tfrac{1}{2}|S|^n < |X|\leq d|S|^{n-1}$ and, hence, also
  $|S|< 2d$, which contradicts our choice of $S$.
\end{proof}

\subsection{Tropical circuits}

We now consider circuits over the tropical semiring $(\RR_+,\max,+)$.
Since the basis operations $\max(x,y)$ and $x+y$ of such circuits have
very small (constant) description complexities, \cref{thm:main1}
implies that if an maximization problem $f:\RR^n\to\RR$ can be solved
by a probabilistic tropical \maxplus circuit of size $\size$, then $f$
can be also solved as a majority vote of about $n^2\size\log \size$
deterministic copies of this circuits.

But tropical \maxplus and \minplus circuits \emph{cannot} compute the
majority vote function at all (see \cref{clm:maj-impos} in
\cref{app:not-maj}).  So, the resulting deterministic circuits are
\emph{not} \maxplus circuits.

On the other hand, tropical circuits are interesting in optimization,
because they simulate so-called \emph{pure} dynamic programming
algorithms (pure DP algorithms). This raises the question: can
probabilistic pure DP algorithms be (efficiently) derandomized at
least in the one-sided error probability scenario? In this section, we
will give an \emph{affirmative} answer: under the \emph{one-sided}
error probability scenario, the resulting deterministic circuits are
also tropical circuits (do not use majority vote gates), and the
derandomization itself is then elementary.

What circuits over the arithmetic semiring $(\RR_+,+,\times)$ compute
are polynomials
\begin{equation}\label{eq:pol-arith}
  p(x)=\sum_{a\in A}c_a\prod_{i=1}^n x_i^{a_i}\,,
\end{equation}
where $A\subset\NN^n$ is some finite set of nonnegative integer
exponent vectors, and $c_a\in\RR_+$ are positive coefficients. A
polynomial of the form \cref{eq:pol-arith} is \emph{multilinear} if
the degree of every variable is at most $1$, that is, if $A\subseteq
\{0,1\}^n$. We also call a polynomial of the form \cref{eq:pol-arith}
\emph{constant-free} if it has no nonzero coefficients different from
$1$, that is, if $c_a=1$ for all $a\in A$.

In the tropical semiring $(\RR_+,\max,+)$, the arithmetic addition
$x+y$ turns into $\max(x,y)$, and the arithmetic multiplication
$x\times y$ turns into $x+y$. So, what a tropical \maxplus circuit
computes is a tropical polynomial
\begin{equation}\label{eq:pol-trop}
  f(x)=\max_{a\in A}\ \skal{a,x}+c_a\,,
\end{equation}
where $\skal{a,x}=a_1x_1+\cdots+a_nx_n$ stands for the scalar product
of vectors $a$ and~$x$. That is, \maxplus circuits solve maximization
problems with linear objective functions; the set $A$ is then the set
of feasible solutions.  By analogy with arithmetic polynomials, we
call a tropical polynomial \cref{eq:pol-trop} \emph{constant-free} if
$c_a=0$ holds for all $a\in A$, and \emph{multilinear} if
$A\subseteq\{0,1\}^n$. (Note that, in the tropical semiring
$(\RR_+,\max,+)$, the multiplicative unity element ``$1$'' is $0$,
because $x+0=0+x=x$.)

Under a \emph{probabilistic \maxplus circuit} of size $\size$ we will
now understand an \emph{arbitrary} random variable $\xf$ taking its
values in the set of all deterministic \maxplus circuits of size at
most~$\size$. That is, we now do not insist that the randomness into
the circuits can be only introduced via random input variables.  Such
a circuit $\xf$ \emph{solves} a given maximization problem $f:R^n\to
R$ with \emph{one-sided} success probability $0\leq p\leq 1$ if for
every input weighting $x\in\RR_+^n$, we have
\[
\mbox{$\prob{\xf(x) > f(x)}=0$ and $\prob{\xf(x) < f(x)}\leq 1-p$.}
\]
That is, the circuit is not allowed to output any better than
``optimum'' value $f(x)$, but is allowed to output worse values with
probability at most $1-p$. In particular, $p=1$ means that the circuit
must correctly compute $f$, while $p=0$ means that the circuit can do
``almost everything,'' it only must never output better than optimal
values.

As in \cref{sec:arithm}, we will use the approach of isolating
sets. Let $f$ be an $n$-variate \maxplus polynomial. A set
$X\subseteq\RR_+^n$ of input weights is \emph{isolating} for $f$ if
for every $n$-variate \maxplus polynomial~$h$,
\[
\mbox{$h(x)=f(x)$ for all $x\in X$ implies that $h(x)=f(x)$ holds for
  all $x\in\RR_+^n$.}
\]

In the case of tropical polynomials, we do not have such a strong
isolation fact as \cref{lem:SZ}.  Still, also then some specific sets
of input weighings are isolating. In the case of \maxplus polynomials,
such is the set of all $0$-$1$ weighings.

\begin{lem}\label{lem:isol1}
  Let $f$ be a \maxplus polynomial of $n$ variables. If $f$ is
  multilinear and constant-free, then the set $X=\{0,1\}^n$ is
  isolating for~$f$.
\end{lem}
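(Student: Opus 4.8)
The plan is to use the two hypotheses on $f$ --- multilinearity and constant-freeness --- to pin down \emph{every} \maxplus polynomial $h$ that agrees with $f$ on the cube $\{0,1\}^n$, and then to push that agreement out to all of $\RR_+^n$ by a monotonicity argument. Write $f(x)=\max_{a\in A}\skal{a,x}$ with $A\subseteq\{0,1\}^n$ finite and nonempty, and let $h(x)=\max_{b\in B}(\skal{b,x}+c_b)$ be an arbitrary \maxplus polynomial with $B\subseteq\NN^n$ finite and nonempty, $c_b\in\RR_+$, such that $h=f$ on $\{0,1\}^n$. I identify a $0$-$1$ vector with its support, and for $S\subseteq\{1,\dots,n\}$ I write $\chi_S\in\{0,1\}^n$ for its characteristic vector. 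The one elementary fact I will lean on repeatedly is: if $a\subseteq a'$ are $0$-$1$ vectors and $x\in\RR_+^n$, then $\skal{a,x}\le\skal{a',x}$, because $x$ has nonnegative entries.

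First I would evaluate at the origin: $\max_{b\in B}c_b=h(0)=f(0)=0$, and since every $c_b\ge0$ this forces all $c_b=0$, so $h$ is constant-free as well, $h(x)=\max_{b\in B}\skal{b,x}$. The crucial --- and, I expect, most delicate --- step is to show next that in fact $B\subseteq\{0,1\}^n$, i.e.\ that $h$ is forced to be multilinear too. For a fixed $b\in B$ with support $S$, I would evaluate $h$ at $\chi_S$: on the one hand $\skal{b,\chi_S}=\sum_{i\in S}b_i\ge|S|$ since each $b_i\ge1$ on the support, so $h(\chi_S)\ge|S|$; on the other hand $h(\chi_S)=f(\chi_S)=\max_{a\in A}|a\cap S|\le|S|$ because $A\subseteq\{0,1\}^n$. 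Hence all these numbers equal $|S|$, so $\sum_{i\in S}b_i=|S|$ with $b_i\ge1$ for $i\in S$, which forces $b_i=1$ for all $i\in S$; thus $b\in\{0,1\}^n$.

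With both $A$ and $B$ now known to consist of $0$-$1$ vectors, I would finish by proving the two inequalities $h\le f$ and $f\le h$ on $\RR_+^n$ separately. For $h\le f$: given $b\in B$ with support $S$, the equalities $|S|=\skal{b,\chi_S}=h(\chi_S)=f(\chi_S)=\max_{a\in A}|a\cap S|$ produce some $a^\ast\in A$ with $S\subseteq a^\ast$, whence $\skal{b,x}\le\skal{a^\ast,x}\le f(x)$ for every $x\in\RR_+^n$; taking the maximum over $b\in B$ gives $h\le f$ pointwise. The reverse inequality $f\le h$ is symmetric, now using that $h$ is constant-free and $B\subseteq\{0,1\}^n$. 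Combining, $h=f$ on all of $\RR_+^n$, which is exactly the statement that $X=\{0,1\}^n$ isolates $f$.

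The only real obstacle I foresee is the middle step. The rays through $\{0,1\}^n\setminus\{0\}$ cover only finitely many directions of $\RR_+^n$, so positive homogeneity of $f$ and $h$ together with their agreement on those directions cannot by itself force $h=f$; one genuinely has to use the restriction that $h$ is a maximum of \emph{integer} linear forms --- through the evaluation at $0$ and at the characteristic vectors of the supports --- to rule out terms with exponents exceeding $1$ (such a term could agree with a multilinear $f$ on $0$-$1$ inputs while disagreeing on fractional ones). Once that is done, the monotonicity inequality $\skal{a,x}\le\skal{a',x}$ for $a\subseteq a'$, $x\ge0$, makes the rest routine.
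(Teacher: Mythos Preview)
Your proof is correct and follows essentially the same route as the paper's: evaluate at $\vec{0}$ to force $c_b=0$, at suitable $0$-$1$ vectors to force $B\subseteq\{0,1\}^n$, and then at the characteristic vectors of supports to obtain the mutual support-containment relations between $A$ and $B$, from which $h=f$ on $\RR_+^n$ follows by monotonicity. The only tactical difference is in the ``delicate'' middle step, where the paper simply evaluates at the $n$ unit vectors $e_i$ rather than at $\chi_{S_b}$: since $h(e_i)=\max_{b\in B}b_i$ must equal $f(e_i)\le 1$, the conclusion $b_i\le 1$ for every $b\in B$ and every $i$ falls out immediately.
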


\begin{proof}
  Let $f(x)=\max_{a\in A} \skal{a,x}+c_a$ be a \maxplus
  polynomial. Since $f$ is multilinear, we have $A\subseteq\{0,1\}^n$,
  and since $f$ is constant-free, we also have $c_a=0$ for all $a\in
  A$. Now take an arbitrary \maxplus polynomial $h(x)=\max_{b\in B}
  \skal{b,x}+c_b$, and suppose that \begin{equation}\label{eq:isol}
    \mbox{$h(x)=f(x)$ holds for all input weightings $x\in\{0,1\}^n$.}
  \end{equation}
  Our goal is to show that then $h(x)=f(x)$ holds for \emph{all}
  nonnegative real weighings $x\in\RR_+^n$.

  Since the polynomial $f$ is constant-free, $f(\vec{0})=0$ holds for
  the all-$0$ input weighting $\vec{0}$. Together with \cref{eq:isol},
  this yields $h(\vec{0})=0$.  Since the ``coefficients''
  $c_b\in\RR_+$ of the polynomial are \emph{nonnegative}, and since
  the polynomial $h$ takes the \emph{maximum} of the values $
  \skal{b,x}+c_b$, the equality $h(\vec{0})=0$ implies $c_b=0$ for all
  $b\in B$. So, both polynomials $f$ and $h$ are constant-free.

  Furthermore, since $g(x)=f(x)$ must hold for each of $n$ input
  weighings $x\in\{0,1\}^n$ with exactly one $1$, all vectors in $B$
  must also be $0$-$1$ vectors. The vectors $a$ in $A$ and $B$ can
  therefore be identified with their supports $\supp{a}=\{i\colon
  a_i=1\}$. We claim that:
  \begin{enumerate}
  \item[(i)] the support of every vector of $B$ lies in the support of
    at least one vector of $A$, and
  \item[(ii)] the support of every vector of $A$ lies in the support
    of at least one vector of $B$.
  \end{enumerate}
  To show (i), suppose contrariwise that there is a vector $b\in B$
  such that $\supp{b}\setminus\supp{a}\neq\emptyset$ holds for all
  $a\in A$. Then on the $0$-$1$ input $x=b\in \{0,1\}^n$, we have
  $g(x)\geq \skal{b,x}=\skal{b,b}=|\supp{b}|$. But since every vector
  $a\in A$ has a zero in some position $i\in\supp{b}$, we have
  $\skal{a,x}=\skal{a,b}\leq |\supp{b}|-1$ and, hence, also $f(x)\leq
  |\supp{b}|-1$, a contradiction with \cref{eq:isol}. The argument for
  the property (ii) is the same with the roles of $A$ and $B$
  interchanged.

  Now, for every input weighting $x\in\RR_+^n$, property (i) gives the
  inequality $h(x)\leq f(x)$, while (ii) gives the converse
  inequality.
\end{proof}

\begin{thm}\label{thm:max}
  If a multilinear and constant-free \maxplus polynomial $f$ can be
  computed by a probabilistic \maxplus circuit of size $s$ with
  one-sided success probability $p>0$ then $f$ can be also computed by
  a deterministic \maxplus circuit of size at most $(s+1)\lceil n/p
  \rceil$.
\end{thm}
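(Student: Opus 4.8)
The plan is to combine the finite majority rule (\cref{clm:fin-maj}) with the isolating-set fact (\cref{lem:isol1}), exactly mirroring the structure of the proof of \cref{thm:arithm}, but now tracking the \emph{size} of the resulting deterministic circuit rather than just its existence. The key point to exploit is that the one-sided error assumption is much stronger than two-sided error: every deterministic copy $\xf(x)$ of the probabilistic circuit satisfies $\xf(x)\le f(x)$ pointwise, so taking the \emph{maximum} of several copies can only help — it never overshoots — and a copy that already agrees with $f$ on all of $X=\{0,1\}^n$ must agree with $f$ everywhere by \cref{lem:isol1}.

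\medskip

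First I would restrict attention to the finite domain $X=\{0,1\}^n$ and consider the correctness matrix $M$ whose rows are indexed by $x\in X$ and whose columns are indexed by the deterministic \maxplus circuits of size $\le s$ in the support of $\xf$, with $M[x,F]=1$ iff $F(x)=f(x)$. The hypothesis $\prob{\xf(x)<f(x)}\le 1-p$ together with $\prob{\xf(x)>f(x)}=0$ says that $\prob{\xf(x)=f(x)}\ge p$ for every $x\in X$, so $M$ is ``probabilistically dense'' with threshold $p$ instead of $2/3$. Running the Chernoff/union argument of \cref{clm:fin-maj} with this weaker density threshold: draw $t$ independent copies $F_1,\dots,F_t$ of $\xf$; for a fixed $x$, the probability that \emph{none} of them is correct at $x$ is at most $(1-p)^t$, and $|X|=2^n$, so by the union bound $t=\lceil n/p\rceil$ copies (using $(1-p)^t\le e^{-pt}$ and $pt\ge n$, hence $e^{-pt}\le e^{-n}<2^{-n}$) suffice to guarantee: there exist fixed circuits $F_1,\dots,F_t$ of size $\le s$ each such that for every $x\in\{0,1\}^n$, at least one $F_j$ satisfies $F_j(x)=f(x)$.

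\medskip

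Next I would assemble these into one deterministic \maxplus circuit $G(x):=\max\{F_1(x),\dots,F_t(x)\}$, which takes $t-1$ additional $\max$ gates to combine, for a total size at most $t\cdot s + (t-1)\le t(s+1)=\lceil n/p\rceil(s+1)$. It remains to check $G=f$ on all of $\RR_+^n$. On a $0$-$1$ input $x$, every $F_j(x)\le f(x)$ (one-sided error is a pointwise property of each copy, hence of every circuit in the support), and some $F_j(x)=f(x)$, so $G(x)=f(x)$ for all $x\in\{0,1\}^n$. A $\max$ of \maxplus polynomials is again a \maxplus polynomial, so \cref{lem:isol1} applies with $X=\{0,1\}^n$: since $f$ is multilinear and constant-free and $G$ agrees with $f$ on $\{0,1\}^n$, we conclude $G(x)=f(x)$ for all $x\in\RR_+^n$, which is exactly the claim.

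\medskip

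\textbf{Main obstacle.} The only genuinely delicate point is justifying that the one-sided error bound transfers to \emph{individual} deterministic copies in a way that makes ``$G$ never overshoots'' correct — i.e.\ that $\prob{\xf(x)>f(x)}=0$ really forces $F(x)\le f(x)$ for (almost) every circuit $F$ in the support and every $x$, and that picking finitely many such $F_j$ preserves this. Since $\xf$ is an arbitrary random variable over deterministic circuits of size $\le s$, ``$\prob{\xf(x)>f(x)}=0$ for all $x$'' need not literally mean every circuit in the support satisfies $F\le f$ pointwise; but one can restrict the support to the circuits $F$ with $\prob{\xf=F}>0$ and $F(x)\le f(x)$ for all $x\in\{0,1\}^n$, which still carry probability mass $\ge p$ of being \emph{equal} to $f(x)$ for each fixed $x$ — this is enough for both the finite majority rule and the no-overshoot property on $\{0,1\}^n$, and the isolation lemma then upgrades no-overshoot-on-$\{0,1\}^n$ to no-overshoot everywhere. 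Everything else (the Chernoff computation, the size bookkeeping, the $\max$-of-polynomials remark) is routine.
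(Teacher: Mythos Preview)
Your proposal is correct and follows essentially the same route as the paper: take $t=\lceil n/p\rceil$ independent copies, use the one-sided error to argue that $\max(F_1,\dots,F_t)$ never overshoots and that the probability \emph{all} copies miss a fixed $x\in\{0,1\}^n$ is at most $(1-p)^t\le e^{-pt}<2^{-n}$, apply the union bound over $X=\{0,1\}^n$, and finish with the isolation \cref{lem:isol1}; the size accounting $ts+(t-1)\le (s+1)\lceil n/p\rceil$ is identical. Your ``main obstacle'' paragraph is more cautious than the paper (which simply asserts $F_i(x)\le f(x)$ from the one-sided condition), and your fix --- restricting to the finite intersection of probability-one events over $x\in\{0,1\}^n$ --- is exactly the right way to make that step rigorous.
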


Note that the size of the obtained deterministic circuits remains
proportional to~$ns$ even if the success probability $p>0$ is an
arbitrarily small constant.  This is in sharp contrast with the
two-sided error scenario, where we required the success probability to
be $p\geq 1/2+c$ for a constant~$c>0$ (for definiteness, we have used
$p=2/3$).

\begin{proof}
  By \cref{lem:isol1}, we know that the set $X=\{0,1\}^n$ isolates $f$
  within all \maxplus polynomials.  Let $\xf$ be a probabilistic
  \maxplus circuit of size $s$ computing $f$ with a one-sided success
  probability $p>0$. Take $m=\lceil(1/p) \log|X|\rceil=\lceil n/p
  \rceil$ independent copies $\xf_1,\ldots,\xf_m$ of the circuit
  $\xf$, and consider the probabilistic \maxplus circuit
  $\xh(x)=\max\left\{\xf_1(x), \ldots, \xf_m(x)\right\}$.

  Fix a vector $x\in X$.  Since only \emph{one-sided} error $\err=1-p$
  is allowed, we know that $\xf_i(x)\leq f(x)$ must hold for all $i$.
  Hence, $\xh(x)\neq f(x)$ can only happen when \emph{all} the values
  $\xf_1(x),\cdots,\xf_m(x)$ are strictly smaller than the optimal
  value $f(x)$, and this can only happen with probability at most
  $\err^m=(1-p)^m\leq \euler^{-pm}$.  So, by the union bound, the
  probability that $\xh(x)\neq f(x)$ holds for at least one of the
  inputs $x\in X$ does not exceed $|X|\err^m\leq |X| \euler^{-pm}$,
  which is smaller than $1$, because $m\geq (1/p)\log|X|$ (and
  $\log\euler >1$).

  There must therefore be a realization $H(x)=\max\left\{F_1(x),
    \ldots, F_m(x)\right\}$ of the probabilistic circuit $\xh$ such
  that the polynomial $h(x)$ computed by $H(x)$ satisfies $h(x)=f(x)$
  for all $x\in X$. The size of the obtained deterministic circuit
  $H(x)$ is at most $ms+m-1\leq (s+1)\lceil n/p \rceil$.  Since the
  set $X$ is isolating for $f$, the fact that $h(x)=f(x)$ holds for
  all $x\in X$ implies this implies $h(x)=f(x)$ holds for all $x\in
  \RR_+^n$, that is, the obtained deterministic circuit $H$ correctly
  computes $f$ on all possible inputs.
\end{proof}

\subsection*{Acknowledgements}
  I am thankful to Joshua Grochow, Pascal Koiran, Igor Sergeev and
  Hans Ulrich Simon for inspiring discussions at the initial stages of
  this investigation.  This work is supported by the
  German Research Foundation (DFG) under Grant
  JU~3105/1-1.

\appendix

\section{Circuits for majority vote}
\label{app:not-maj}

Recall that the \emph{majority vote} function of $m$ variables is a
partly defined function $\maj_n(x_1,\ldots,x_n)$ that outputs the
majority element of its input string $x_1,\ldots,x_n$, if there is
one.

\begin{clm}\label{clm:maj-impos}
  Arithmetic $(+,-,\times)$ circuits, as well as tropical \minplus and
  \maxplus circuits cannot compute majority vote functions.
\end{clm}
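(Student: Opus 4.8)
The plan is to exploit the fact that circuits over each of these three bases compute only a very restricted class of functions, and that no function in that class can agree with $\maj_m$ on its domain once $m\ge 3$. (We may assume $m\ge 3$, since $\maj_1$ and $\maj_2$ agree with the first projection on their diagonal domains and are computable even over the weakest of these bases.) For arithmetic $(+,-,\times)$ circuits the function computed is always a polynomial $p\in\RR[x_1,\dots,x_m]$, so it suffices to show that no polynomial agrees with $\maj_m$ on $D:=\mathrm{dom}(\maj_m)$. Suppose one did. Fix a point $z=(a,\dots,a)$ on the diagonal $\Delta=\{x_1=\dots=x_m\}$, and let $U_1\subseteq D$ be the set of inputs on which the value $x_1$ occurs strictly more than $m/2$ times; on $U_1$ we have $\maj_m=x_1$, so $p-x_1$ vanishes on $U_1$. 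Now $U_1$ is a finite union of linear subspaces, each passing through $z$, and a short check (using $m\ge 3$) shows that the tangent spaces at $z$ of these subspaces together span $\RR^m$. Since $p-x_1$ vanishes on each subspace, $\nabla(p-x_1)(z)$ is orthogonal to each of these tangent spaces, hence to all of $\RR^m$, so $\nabla(p-x_1)(z)=0$. On the other hand the subspace $\Pi=\{x_2=\dots=x_m\}$ also lies in $D$ (there the repeated coordinate value occurs $m-1>m/2$ times) and on it $p=x_2$, so $(p-x_1)|_{\Pi}=x_2-x_1$; but the gradient of $x_2-x_1$ as a function on $\Pi$ is the \emph{nonzero} orthogonal projection of $e_2-e_1$ onto the tangent space of $\Pi$ at $z$, whereas it must also equal the restriction of $\nabla(p-x_1)(z)=0$ to that tangent space. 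This contradiction settles the arithmetic case. (For $m=3$ one can even avoid gradients: $p(x,y,y)=y$ forces $p=z+(y-z)q$, then $p(x,y,x)=x$ forces $q=(x-z)k$, and finally $p(x,x,z)=x$ gives $(x-z)^2k(x,x,z)=x-z$, which is impossible.)

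For tropical \maxplus circuits the function computed is a tropical polynomial $f(x)=\max_{a\in A}\bigl(\skal{a,x}+c_a\bigr)$ with $A\subseteq\NN^m$ finite and $c_a\in\RR$, and I plan to show no such $f$ agrees with $\maj_m$ on its domain. From $f(t,\dots,t)=\maj_m(t,\dots,t)=t$ for all $t$ one reads off that every $a\in A$ has coordinate sum $1$ — i.e.\ $a$ is a standard unit vector $e_j$ — with $c_a\le 0$, and that $c_a=0$ for at least one of them; hence $f(x)=\max_{j\in J}(x_j+\gamma_j)$ for a nonempty $J\subseteq[m]$ with $\gamma_j\le 0$. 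But feeding the input that is huge in coordinate $j$ and $0$ in the other $m-1$ coordinates makes $0$ the strict majority, so $\maj_m=0$, whereas $f\ge x_j+\gamma_j\to\infty$ whenever $j\in J$; hence $J=\emptyset$, a contradiction. The \minplus case follows from the \maxplus case by the symmetry $\maj_m(-x)=-\maj_m(x)$: negating all inputs and all constants turns a \minplus circuit computing $g$ into a \maxplus circuit of the same size computing $x\mapsto-g(-x)$, so a \minplus circuit for $\maj_m$ would produce a \maxplus circuit for $\maj_m$.

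The part I expect to require the most care is making the arithmetic argument genuinely uniform in $m$. For $m=3$ the ``forced'' sets $U_1,\Pi$ are hyperplanes and everything reduces to a clean principal-ideal computation; for $m\ge 4$ they have codimension $\ge 2$, so one really does have to argue through the vanishing of $\nabla(p-x_1)$ along $\Delta$ and verify the spanning claim for the relevant tangent spaces. (One could instead reduce $\maj_m$ to $\maj_3$ by collapsing the $m$ inputs into three balanced groups, but such a majority‑preserving grouping exists for every $m\ge 3$ except $m=4$, so $m=4$ would still need the direct treatment — which is why I prefer the uniform gradient argument.) By contrast the tropical part is routine once circuits are rewritten as maxima (or minima) of affine forms with nonnegative integer slope vectors.
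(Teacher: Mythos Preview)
Your argument is correct but takes a genuinely different route from the paper's in both halves. For the arithmetic case, the paper treats only $m=3$: writing $f(x,y,z)=ax+by+cz+h(x,y,z)$ with $h$ collecting all monomials of degree $\geq 2$, the three constraints $f(x,x,z)=x$, $f(x,y,x)=x$, $f(x,y,y)=y$ force $a=b=c=0$, after which $h(x,x,x)=x$ is impossible for a polynomial with no linear part. Your gradient/spanning argument is heavier but genuinely handles all $m\geq 3$, and your worry about $m=4$ is well placed---the reduction $\maj_m\to\maj_3$ via input grouping really does fail there, and the paper simply does not address $m\geq 4$; in that sense your proof is more complete. For the tropical case, the paper's approach is different and arguably slicker: it observes that $(\min,+)$ polynomials are concave and $(\max,+)$ polynomials are convex, then exhibits two points $x=(a,a,2a-b)$, $y=(a,b,b)$ with $a<b$ in the domain of $\maj_3$ at which concavity fails (the midpoint has majority $a$, but the average of the values is $(a+b)/2>a$). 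Your structural analysis of the exponent set is also fine; one small quibble is that over $\RR_+$ the constraint $f(t,\dots,t)=t$ forces only $|a|_1\leq 1$, so $a=0$ is not immediately excluded, though the extra constant term is harmless for your contradiction. Likewise your negation symmetry reducing $(\min,+)$ to $(\max,+)$ presumes inputs range over all of $\RR$; over $\RR_+$ one would instead run the symmetric direct argument (set one coordinate to $0$ and the others large).
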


\begin{proof}
  Functions computed by circuits over the arithmetic basis
  $\{+,-,\times\}$ are polynomial functions. So, suppose contrariwise
  that we can express $\maj(x,y,z)$ as a polynomial
  $f(x,y,z)=ax+by+cz+h(x,y,z)$, where the polynomial $h$ is either a
  zero polynomial or has degree $>1$. Then $f(x,x,z)=x$ implies $c=0$,
  $f(x,y,x)=x$ implies $b=0$, and $f(x,y,y)=y$ implies $a=0$. This
  holds because, over fields of zero characteristic, equality of
  polynomial functions means equality of coefficients.  We have thus
  shown that $h=\maj$. So, the polynomial $h$ cannot be the zero
  polynomial. But then $h$ has degree $> 1$, so $h(x,x,x)=x$ for all
  $x\in\RR$ is impossible.

  Let us now show that also tropical circuits cannot compute majority
  vote functions. Every tropical \minplus circuit computes some
  tropical \minplus polynomial. The functions $f:\RR^n\to\RR$ computed
  by tropical \minplus polynomials are piecewise linear \emph{concave}
  functions. In particular, $f(\tfrac{1}{2}x+\tfrac{1}{2}y)\geq
  \tfrac{1}{2}f(x)+\tfrac{1}{2}f(y)$ must hold for all $x,y\in\RR^n$:
  \[
  \min_{v\in V}\ \skal{v,x+y}\geq \min_{v\in V}\ \skal{v,x}+\min_{v\in
    V}\ \skal{v,y}\,.
  \]
  But already the majority vote function $\maj:\RR^3\to\RR$ of three
  variables is not concave.  To see this, take two input vectors
  $x=(a,a,c)$ and $y=(a,b,b)$ with $a <b$ and $c=2a-b$. Then
  $\maj(\tfrac{1}{2}x+\tfrac{1}{2}y)=\maj(a,(a+b)/2,a)=a$ but
  $\tfrac{1}{2}\maj(x)+\tfrac{1}{2}\maj(y)=\tfrac{1}{2}a+\tfrac{1}{2}b
  > a$ since $b>a$.  So, $\maj$ is not concave. Similar argument shows
  that $\maj$ is not \emph{convex} and, hence, cannot be computed by
  tropical \maxplus circuits.
\end{proof}

Recall that a binary relation $\rel \subseteq\RR\times\RR$
\emph{contiguous} if $x\leq y\leq z$, $x\rel a$ and $z\rel a$ imply
$y\rel a$. That is, if the endpoints of an interval are close to $a$,
then also all numbers in the interval are close to~$a$.

\begin{clm}\label{clm:contiguous}
  For every contiguous relation $x\rel y$, a majority $\rrel$-vote
  function of $m$ variables can be computed by a fanin-$2$
  $(\min,\max)$ circuit of size $\bigO(m\log m)$.
\end{clm}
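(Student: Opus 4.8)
The plan is to take $\Maj(x_1,\ldots,x_m)$ to be the \emph{median} of its arguments, more precisely the $k$-th smallest among $x_1,\ldots,x_m$ with $k=\lceil m/2\rceil$, and then to verify two things: (i) that this is a majority $\rrel$-vote function whenever $\rrel$ is contiguous, and (ii) that the $k$-th order statistic of $m$ real numbers is computed by a fanin-$2$ $(\min,\max)$ circuit of size $\bigO(m\log m)$.

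For (i), suppose $a,x_1,\ldots,x_m\in\RR$ satisfy $x_i\rel a$ for more than $m/2$ of the indices $i$. Writing $y_1\le y_2\le\cdots\le y_m$ for the sorted sequence and $S=\{j\colon y_j\rel a\}$, we have $|S|>m/2$. First I would show $k\in S$: if $S\cap\{1,\ldots,k\}=\emptyset$ then $S\subseteq\{k+1,\ldots,m\}$, so $|S|\le m-k=\lfloor m/2\rfloor$, contradicting $|S|>m/2$; and if $S\cap\{k,\ldots,m\}=\emptyset$ then $S\subseteq\{1,\ldots,k-1\}$, so $|S|\le k-1\le m/2$, again a contradiction. (The choice $k=\lceil m/2\rceil$ is exactly what makes both one-sided counts go through simultaneously.) Hence there are indices $i\le k\le j$ with $y_i\rel a$ and $y_j\rel a$; since $y_i\le y_k\le y_j$, contiguity of $\rrel$ gives $y_k\rel a$, i.e.\ $\Maj(x_1,\ldots,x_m)=y_k\rel a$, which is the defining property \eqref{eq:maj-rel}.

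For (ii), the plan is to take an oblivious sorting network on $m$ wires with $\bigO(m\log m)$ comparators --- the network of Ajtai, Koml\'os and Szemer\'edi~\cite{AKS} --- and replace each comparator acting on a pair of wires by two gates, a $\min$ gate and a $\max$ gate, both reading the two incoming values. This converts the network into a fanin-$2$ $(\min,\max)$ circuit of size $\bigO(m\log m)$ whose $m$ output wires carry $y_1,\ldots,y_m$ in sorted order; we then read $\Maj(x_1,\ldots,x_m)=y_k$ off the $k$-th output wire. The only nonelementary ingredient is the existence of an $\bigO(m\log m)$-size sorting network; one may replace it by Batcher's bitonic sorting network at the cost of the weaker size bound $\bigO(m\log^2 m)$. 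I do not foresee any genuine obstacle here --- the only point that needs care is the two-sided counting in (i), which is what forces the specific rank $k=\lceil m/2\rceil$.
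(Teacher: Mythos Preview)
Your proposal is correct and follows essentially the same approach as the paper: take the median (the $\lceil m/2\rceil$-th order statistic), compute it via the AKS sorting network realized as a fanin-$2$ $(\min,\max)$ circuit of size $\bigO(m\log m)$, and verify that contiguity of $\rrel$ forces the median to be in relation with $a$ whenever more than half the inputs are. Your verification of (i) is in fact more explicit than the paper's, which simply observes that the good positions in the sorted sequence form a contiguous interval of length $>m/2$ and hence must contain position $\lceil m/2\rceil$; your two-sided counting argument spells out exactly why.
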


\begin{proof}
  Given a sequence $x_1,\ldots,x_m$ of real numbers, the \emph{median
    function} outputs the middle number $x_{i_{\lceil m/2\rceil}}$ of
  the sorted sequence $x_{i_1}\leq \ldots\leq x_{i_m}$.  So, the
  sorting network of Ajtai, Koml\'os and Szemer\'edi~\cite{AKS}
  computes the median function using only $\bigO(m\log m)$ $\min$ and
  $\max$ operations.  On the other hand, it is easy to see that the
  median function is a majority $\rrel$-vote function for every
  contiguous relation $x\rel y$.

  Indeed, let $x_1\leq \ldots\leq x_m$ be a sorted sequence of real
  numbers, and $a$ a real number. Call a position $i$ \emph{good}, if
  $x_i\rel a$ holds. Suppose that more than half of the positions $i$
  are good.  Since the relation $\rel$ is contiguous, good positions
  constitute a contiguous \emph{interval} of length $>m/2$. So, the
  median of $x_1,\ldots,x_m$ must be the number $x_i$ in a good
  position~$i$.
\end{proof}

Recall that the \emph{nullity relation} $x\rel y$ holds precisely when
either both $x=0$ and $y=0$, or both $x\neq 0$ and $y\neq 0$ hold.  A
\emph{zero vote function} of $n$ variables is any function
$f:\RR^n\to\RR$ such that $f(x_1,\ldots,x_n)=0$ precisely when more
than $n/2$ of the numbers $x_i$ are zeros. Note that every zero-vote
function is a majority $\rrel$-vote function for the nullity
relation~$\rel$: either more than half of all numbers $x_1,\ldots,x_n$
are zeros, or more than half of them are nonzero.

\begin{clm}\label{clm:nullity}
  A zero-vote function of $n$ variables can be computed a
  $(\min,\max,\times)$ circuit of size $\bigO(n\log n)$, as well as by
  a monotone fanin-$2$ arithmetic $(+,\times)$ circuit of size
  $\bigO(n^2)$.
\end{clm}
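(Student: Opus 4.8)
The plan is to construct, in each of the two bases, one explicit zero-vote function together with a small circuit computing it. The unifying idea is a preprocessing step: replace each input $x_i$ by $y_i:=x_i\times x_i$, a single multiplication gate available in both bases. Then $y_i\geq 0$ and $y_i=0$ iff $x_i=0$, so it suffices to build a circuit which, on nonnegative inputs $y_1,\dots,y_n$, outputs $0$ precisely when more than $n/2$ of the $y_i$ vanish. Writing $z$ for the number of zero coordinates among the $y_i$ and $j=n-z$ for the number of nonzero ones, the required behaviour is ``output $=0$ iff $z\geq\lfloor n/2\rfloor+1$,'' equivalently ``$j\leq\lfloor n/2\rfloor$.''

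For the $(\min,\max,\times)$ circuit I would output the $\lceil n/2\rceil$-th largest of $y_1,\dots,y_n$. Sorting the $y_i$ in decreasing order puts the $j$ positive values first and the $z$ zeros last, so this order statistic equals $0$ iff $\lceil n/2\rceil>j$, i.e.\ iff $z\geq n-\lceil n/2\rceil+1=\lfloor n/2\rfloor+1$, which is exactly ``$z>n/2$.'' Since this value is one of the output wires of a sorting network, the network of Ajtai, Koml\'os and Szemer\'edi~\cite{AKS} computes it with $\bigO(n\log n)$ $\min$ and $\max$ gates (exactly as in the proof of \cref{clm:contiguous}); together with the $n$ squaring gates this is a $(\min,\max,\times)$ circuit of size $\bigO(n\log n)$.

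For the monotone $(+,\times)$ circuit I would output the elementary symmetric polynomial $e_k(y_1,\dots,y_n)=\sum_{|T|=k}\prod_{i\in T}y_i$ of degree $k:=\lceil n/2\rceil$. All its monomials have coefficient $1$ and all $y_i\geq 0$, so $e_k(y)\geq 0$, and $e_k(y)>0$ iff some monomial is nonzero, i.e.\ iff there is a $k$-element subset of nonzero coordinates, i.e.\ iff $j\geq k=\lceil n/2\rceil$; equivalently $e_k(y)=0$ iff $z\geq\lfloor n/2\rfloor+1$, again ``$z>n/2$.'' Writing $e^{(i)}_\ell$ for the degree-$\ell$ elementary symmetric polynomial in $y_1,\dots,y_i$, the recurrence $e^{(i)}_\ell=e^{(i-1)}_\ell+y_i\cdot e^{(i-1)}_{\ell-1}$ (with $e^{(i)}_0\equiv 1$ and $e^{(0)}_\ell\equiv 0$ for $\ell\geq 1$) lets us compute all of $e^{(n)}_1,\dots,e^{(n)}_k$, in particular $e^{(n)}_k=e_k$, with one addition and one multiplication per entry, i.e.\ $\bigO(nk)=\bigO(n^2)$ fanin-$2$ monotone $(+,\times)$ gates; the $n$ squaring gates do not change the order.

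The remaining work is routine: the two correctness checks above, and the two standard circuit constructions (AKS sorting networks and the dynamic-programming evaluation of elementary symmetric polynomials). The only place that needs any care is the floor/ceiling bookkeeping, to align ``$\lceil n/2\rceil$-th largest'' and ``degree $\lceil n/2\rceil$'' with the strict threshold ``more than $n/2$'' in the definition of a zero-vote function; the initial squaring step is what allows a subtraction-free, respectively $(\min,\max,\times)$, circuit to detect zeros of arbitrary real inputs in the first place.
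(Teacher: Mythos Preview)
Your proposal is correct and follows essentially the same strategy as the paper: square the inputs to make zeros detectable by monotone/sign-free operations, then for the $(\min,\max,\times)$ part sort with an AKS network and read off an appropriate order statistic, and for the $(+,\times)$ part run a two-index dynamic program of size $\bigO(n^2)$. Your floor/ceiling bookkeeping is in fact more careful than the paper's, which just says ``output the median.''

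The one genuine difference is the choice of polynomial in the arithmetic case. You use the elementary symmetric polynomial $e_{\lceil n/2\rceil}(x_1^2,\dots,x_n^2)$ with the Pascal-triangle recurrence $e^{(i)}_\ell=e^{(i-1)}_\ell+y_i\,e^{(i-1)}_{\ell-1}$. The paper instead builds an ad~hoc family $\ssymm{m,k}$ satisfying ``$\ssymm{m,k}=0$ iff at least $k$ of the inputs are zero'' via the product-style recursion
\[
\symm{m,k}{x}=\symm{m-1,k}{x}\cdot\bigl[\symm{m-1,k-1}{x}+x_m^2\bigr],
\]
with base cases $\ssymm{m,1}=\prod x_i^2$ and $\ssymm{m,m}=\sum x_i^2$. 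Both families vanish under exactly the same condition and both yield monotone fanin-$2$ $(+,\times)$ circuits of size $\bigO(n^2)$; your choice is the more classical one and arguably cleaner.
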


\begin{proof}
  First, suppose that we have $(\min,\max,\times)$ among the basis
  operations. Then we can just sort the sequence $x_1^2,\ldots,x_n^2$
  of squares using $\bigO(n\log n)$ $(\min,\max)$ gates, and output
  the median of the sorted sequence $y_1\leq \ldots\leq y_n$.  Since
  the squared sequence has only nonnegative numbers, zeros (if any)
  will lie at the beginning of the sorted sequence.

  In the case of arithmetic $(+,\times)$ circuits, we can use the
  standard dynamic programming. We have only to show how to
  efficiently compute polynomials $\ssymm{m,k}{}$ such that
  $\symm{m,k}{x_1,\ldots,x_m}=0$ precisely when at least $k$ of the
  numbers $x_1,\ldots,x_m$ are zeros.  For the base cases, we can take
  $\symm{m,1}{x_1,\ldots,x_m}=x_1^2\cdots x_m^2$,
  $\symm{m,m}{x_1,\ldots,x_m}=x_1^2+\cdots +x_m^2$, and
  $\symm{m,k}{x_1,\ldots,x_m}=1$ $(\neq 0)$ for $k>m$. (We take
  squares just to avoid possible cancelations.)  Then we can use the
  recursion
  \[
  \symm{m,k}{x_1,\ldots,x_m}=\symm{m-1,k}{x_1,\ldots,x_{m-1}}\cdot
  \left[\symm{m-1,k-1}{x_1,\ldots,x_{m-1}} + x_m^2 \right]\,.
  \]
  The first polynomial $\ssymm{m-1,k}$ in this product is $0$ iff
  there are at least $k$ zeros already among the first $m-1$
  positions, whereas the second term is $0$ iff there are at least
  $k-1$ zeros among the first $m-1$ positions, and the last position
  is also zero. For $m=n$ and $k=\lfloor n/2\rfloor+1$, the obtained
  arithmetic $(+,\times)$ circuit has size $\bigO(kn)=\bigO(n^2)$, and
  computes the zero vote function.
\end{proof}

\section{On permissibility}
\label{app:permissible}
In the uniform convergence result of Vapnik and Chervonenkis given in
\cref{thm:VC}, the class of functions $H$ is required to be
permissible (see \cref{rem:permis}). While every \emph{countable}
class $H$ is permissible, uncountable classes need not automatically
be such.

Haussler in~\cite[Appendix~9.2]{haussler} gives a sufficient condition
for a class $H$ of (not necessarily $0$-$1$ valued) functions
$h:X\to\RR$ to be permissible. He calls a class $H$ \emph{indexed} by
a set $T$ if there is a real valued function $\indf$ on $T\times X$
such that $H=\{\indf(t,\cdot)\colon t\in T\}$, where $\indf(t,\cdot)$
denotes the real-valued function on $X$ obtained from $\indf$ by
fixing the first parameter to~$t$.  Haussler shows that the following
conditions are already sufficient for the class $H$ to be permissible:
(1) every function $h\in H$ is measurable, (2) the class $H$ can be
indexed by a set $T=\RR^n$ for a finite $n\geq 1$, and (3) the
indexing function $\indf:T\times X\to\RR$ itself is measurable.

In the case of Boolean semialgebraic matrices $M:T\times X\to\{0,1\}$,
we have a class $H$ of $0$-$1$ functions $h_t:X\to\{0,1\}$, where
$X=\RR^k$ and $h_t(x)=M[t,x]$. The class $H$ is indexed by the set $T$
of the form $T=\RR^n$, and the indexing function $\indf=M$ is the
matrix $M$ itself. Since the matrix $M$ is semialgebraic, the
functions $h_t\in H$ as well as the indexing function $\indf$ are
semialgebraic. Since the functions $h_t$ and the indexing function
$\indf$ are $0$-$1$ valued functions, this implies that all these
functions are measurable.

Indeed, every semialgebraic set $S\subseteq\RR^n$ is a \emph{finite}
union of \emph{finite} intersections of sets of the form
$\{x\in\RR^n\colon p(x)=0\}$ and $\{x\in\RR^n\colon p(x)>0\}$, where
$p$ is a polynomial.  Recall that a function $h:X\to \RR$ is
measurable if the set $X$ itself is a measurable set, and for each
real number~$r$, the set $S_r=\{x\in X\colon h(x) > r\}$ is
measurable. In our case, functions $h:X\to \{0,1\}$ are $0$-$1$ valued
functions. Each such function is the characteristic function of the
set $S=\{x\in X\colon h(x)=1\}$.  Then each set $S_r$ is either
$\emptyset$, $S$ or $X$.  Hence, a $0$-$1$ valued function $h$ is
measurable if and only if the set $S=h^{-1}(1)$ it represents is
measurable. Since semialgebraic sets are measurable, we have that
every semialgebraic $0$-$1$ valued function is measurable.

The books of Dudley~\cite[Chapter~10]{dudley} and
Pollard~\cite[Appendix~C]{pollard} discuss more general sufficient
conditions for classes of not necessarily $0$-$1$ valued functions
$h:X\to\RR$ to be permissible.

\end{document}